\theoremstyle{thmstyleone}%
\newtheorem{theorem}{Theorem}
\newtheorem{proposition}[theorem]{Proposition}%
\theoremstyle{thmstyletwo}%
\newtheorem{example}{Example}%
\newtheorem{remark}{Remark}%
\theoremstyle{thmstylethree}%
\newtheorem{definition}{Definition}%
\newtheorem{lemma}{Lemma}
\newtheorem{corollary}{Corollary}
\numberwithin{equation}{section}
\begin{document}

\title[Article Title]{Construction of Self-Orthogonal Quasi-Cyclic Codes and Their Application to Quantum Error-Correcting Codes}


\author[1]{\fnm{Mengying} \sur{Gao}}\email{linxiing2022@163.com}

\author*[1]{\fnm{Yuhua} \sur{Sun}}\email{sunyuhua\_1@163.com}

\author[1]{\fnm{Tongjiang} \sur{Yan}}\email{yantoji@163.com}

\author[1]{\fnm{Chun'e} \sur{Zhao}}\email{zhaochune1981@163.com}

\affil[1]{\orgdiv{Colleges of Sciences}, \orgname{China University of Petroleum}, \orgaddress{\city{Qingdao}, \postcode{266580}, \state{Shandong}, \country{China}}}




\abstract{In this paper, necessary and sufficient conditions for the self-orthogonality of $t$-generator quasi-cyclic (QC) codes are presented under the Euclidean, Hermitian, and symplectic inner products, respectively. 
 Particularly, by studying the structure of the dual codes of a class of 2-generator QC codes, we derive necessary and sufficient conditions for the QC codes to be dual-containing under the above three inner products.
 This class of 2-generator QC codes generalizes many known codes in the literature. Based on the above conditions, we construct several quantum stabilizer codes and quantum synchronizable codes with good parameters, some of which share parameters with certain best-known codes listed in 
 the codetable of Grassl.}

\keywords{Quasi-cyclic codes, self-orthogonal, dual-containing, quantum stabilizer codes, quantum synchronizable codes}



\maketitle
\section{Introduction}\label{sec1}

To ensure the reliability of classical communication, introducing error-correcting codes is an effective way to safeguard data integrity against noise and interference during transmission. Cyclic codes, an important subclass of error-correcting codes, possess efficient encoding and decoding algorithms due to their well-structured algebraic properties. As a natural generalization of cyclic codes, quasi-cyclic (QC) codes not only preserve the structural advantages of cyclic codes but also provide broader applicability and increase flexibility in parameter selection. Therefore, QC codes have attracted extensive attention and in-depth study  and numerous record-breaking classical codes have been derived from QC codes~\cite{Kasami1974,Siap2000,Chen2018,Xie2024,Meng2025}.

Compared to classical communication systems, quantum communication systems are more susceptible to external disturbances. Therefore, quantum error-correcting codes (QECCs) are crucial for protecting quantum information against quantum errors, including bit-flip, phase-flip, and synchronization errors. In the 1990s, Shor et al. constructed the first QECC, marking the beginning of this field~\cite{Shor1995}.
Subsequently, Calderbank et al. showed that binary quantum codes can be systematically derived from self‑orthogonal classical codes, thereby recasting the construction of quantum codes as the design of classical linear codes with self‑orthogonal or dual‑containing properties~\cite{Calderbank1998}.

Since then, research on QECCs has progressed rapidly, and numerous codes with good parameters have been constructed by leveraging self-orthogonal or dual-containing classical codes~\cite{Ashikhmin2001,Ketkar2006,Pang2020,Li2025}, most of which belong to the class of quantum stabilizer codes used to correct bit-flip and phase-flip errors. In 2013, quantum synchronizable codes (QSCs) were first proposed by Fujiwara as another subclass of QECCs~\cite{Fujiwara2013,Fujiwara2013-2}. These codes are capable of correcting not only bit-flip and phase-flip errors but also synchronization errors. Following that, a variety of quantum synchronizable codes with desirable properties have been developed~\cite{Fujiwara2014,Luo2018,Li2019,Luo2019,Dinh4ps,Liu2021,Sun2023,Liu2023,Liu2024QIP}. It should be emphasized that nearly all of the aforementioned QECCs are derived from classical cyclic, constacyclic, Bose–Chaudhuri–Hocquenghem (BCH) codes or algebraic geometry codes.

Owing to the favorable algebraic properties of QC codes, the construction of QECCs based on QC codes has recently emerged as a prominent research focus~\cite{Ling2001}. In 2018, Galindo et al. studied the dual structure of a class of 2-generator QC codes under the Euclidean, symplectic, and Hermitian inner products, and established sufficient conditions for their dual-containing property. Based on these codes, they constructed quantum stabilizer codes with good parameters~\cite{Galindo2018}. Since then, significant progress has been made in the study of 1-generator and 2-generator QC codes, especially regarding their self-orthogonality and dual-containing properties under symplectic and Hermitian inner products, resulting in the construction of many high-performance quantum stabilizer codes~\cite{Lv2020,Lv2020-2,Lv2021CAM,Guan2022-2,Guan2022-1,Guan2024,BS2024}. 
In addition, in order to obtain quantum stabilizer codes with better performance, Ezerman et al. also established the necessary and sufficient conditions for the self-orthogonality of certain quasi-twisted codes from the perspective of direct sum decomposition~\cite{Grassl2025}.

To investigate the dual-containing properties of QC codes, it is essential to conduct a thorough analysis of the structure of their dual codes. However, the study of the dual code structure of QC codes remains challenging and scarce. As far as we know, Abdukhalikov et al. investigated the dual codes of 1-generator QC codes of index 2~\cite{KA2023}. Then Benjwal and Bhaintwal extended this result to QC codes with an arbitrary index $l$, and employed them to construct quantum stabilizer codes~\cite{BS2024}. 
Moreover, the application of QC codes to the construction of QSCs was first realized by Du et al. in 2023, introducing new insights and methods for further development in this area~\cite{duchao2023}.

Inspired by the above work, we investigate the self-orthogonality of $t$-generator QC codes with respect to the Euclidean, Hermitian and symplectic inner products, and provide the corresponding necessary and sufficient conditions for these codes to be self-orthogonal. Building upon the above results, we further investigate the algebraic structure of the duals of a class of 2-generator QC codes and derive the necessary and sufficient conditions for them to be self-orthogonal and dual-containing under the three aforementioned inner products. As an application of the above analysis, we finally construct several types of quantum stabilizer codes and quantum synchronizable codes.

The remainder of this paper is organized as follows. Section \ref{sec2} introduces the basic knowledge of QC codes. Necessary and sufficient conditions for $t$-generator QC codes to be self-orthogonal under three inner products are given in Section \ref{sec3}. The structure of the dual codes of a class of 2-generator QC codes is investigated in Section \ref{sec4}. Three classes quantum stabilizer codes are constructed in Section~\ref{sec5}. 
Based on Euclidean dual-containing 2-generator QC codes, a class of QSCs is obtained in Section \ref{sec6}. Finally, we summarize the main results and discuss the potential directions for future research in Section~\ref{sec7}. 

\section{Preliminaries}\label{sec2}
Let $p$ be a prime and $q = p^{\gamma}$, where  $\gamma$  is a positive integer. The finite field with  $q$  elements is denoted by  $\mathbb{F}_q$. For positive integers $n$, $k$, an $[n,k]$ linear code over \( \mathbb{F}_q \) is a $k$-dimension subspace of \( \mathbb{F}_q^n \), where $n$ is called the length of the code and $k$ is called its dimension. 

\begin{definition}
    Let $m$, $l$ be positive integers and  $\mathcal{C}$ a linear code of length $ml$ over $\mathbb{F}_q$. If for any codeword  
    \[
    \mathbf{c} = (c_{0,0}, \ldots, c_{l-1,0}, c_{0,1}, \ldots, c_{l-1,1}, \ldots, c_{0,m-1}, \ldots, c_{l-1,m-1}) \in \mathcal{C},
    \]  
    its $l$ right cyclic shift  
    \[
    (c_{0,m-1}, \ldots, c_{l-1,m-1}, c_{0,0}, \ldots, c_{l-1,0}, \ldots, c_{0,m-2}, \ldots, c_{l-1,m-2}) \in \mathcal{C},
    \]  
    then $\mathcal{C}$ is called a quasi-cyclic (QC) code of length $ml$ and index $l$. If the linear combinations of all the row vectors of a matrix $M$ exactly span all the codewords of $\mathcal{C}$, then $M$ is called a generator matrix of $\mathcal{C}$. In particular, $\mathcal{C}$ is a cyclic code  when $l = 1$.
\end{definition}
Cyclic codes can be represented in various forms. The polynomial representation is the most commonly used. Denote $\mathcal{R} := \mathbb{F}_q[x]/(x^m - 1)$. There exists a bijection $\sigma$ between the codewords in $\mathcal{C}$ and the polynomials in $\mathcal{R}$, establishing a one-to-one correspondence:
\begin{align*}
    \sigma: \mathcal{C} &\longrightarrow \mathcal{R}, \\
    \mathbf{c} = (c_0, c_1, \ldots, c_{m-1}) &\longmapsto c(x) = c_0 + c_1x + \cdots + c_{m-1}x^{m-1}.
\end{align*}

In fact, an $[m, k]$ cyclic code $\mathcal{C}$ is an ideal in the ring $\mathcal{R}$. This ideal is generated by a unique divisor $g(x)$ of $x^m - 1$, and is denoted as $\mathcal{C} = \langle g(x) \rangle$. Let $h(x) = \frac{x^m - 1}{g(x)}$. Then $g(x)$ and $h(x)$ are called the generator polynomial and parity-check polynomial of $\mathcal{C}$, respectively.

For a codeword $\mathbf{c}=(c_0,c_1,\dots,c_{m-1})\in \mathcal{C}$, the Hamming weight of $\mathbf{c}$ is $$w(\mathbf{c})=\mid\{i\mid c_i\neq 0, 0\leq i\leq m-1\}\mid.$$ The minimum Hamming distance of $\mathcal{C}$ is $d=\min\{w(\mathbf{c})\mid \mathbf{c}\in \mathcal{C},\mathbf{c}\neq \mathbf{0}\}$. When $m$ is even, the symplectic weight of $\mathbf{c}$ is $$w_{s}(\mathbf{c})=\mid\{i\mid (c_i,c_{\frac{m}{2}+i})\neq (0,0), 0\leq i\leq \frac{m}{2}-1\}\mid.$$ The minimum symplectic distance of $\mathcal{C}$ is $d_s=\min\{w_s(\mathbf{c})\mid \mathbf{c}\in C,\mathbf{c}\neq \mathbf{0}\}$.

The Euclidean inner product between two vectors $\mathbf{u} = (u_1, u_2, \ldots, u_m)$ and $\mathbf{v} = (v_1, v_2, \ldots, v_m)$ in $\mathbb{F}_{q}^m$ is given by
\[
\langle \mathbf{u}, \mathbf{v} \rangle_E = \sum_{i=1}^{m} u_i v_i,\] 
and the Euclidean dual code of $\mathcal{C}\subseteq \mathbb{F}_{q}^m$ is $\mathcal{C}^{\perp_E}=\{\mathbf{u}\in\mathbb{F}_{q}^m \mid \langle \mathbf{u},\mathbf{c}\rangle_E=0, \forall \mathbf{c}\in \mathcal{C}\}.$

The symplectic inner product for vectors $\mathbf{u}, \mathbf{v} \in \mathbb{F}_q^{2m}$ is defined as
\[
    \langle \mathbf{u}, \mathbf{v} \rangle_S = \sum_{i=1}^{m} u_i v_{m+i} - u_{m+i} v_i,
\]
and the symplectic dual code of $\mathcal{C}\subseteq \mathbb{F}_{q}^{2m}$ is $\mathcal{C}^{\perp_S}=\{\mathbf{u}\in\mathbb{F}_{q}^{2m} \mid \langle \mathbf{u},\mathbf{c}\rangle_S=0, \forall \mathbf{c}\in \mathcal{C}\}$.

The Hermitian inner product for vectors $\mathbf{u}, \mathbf{v} \in \mathbb{F}_{q^2}^m$ is given by
\[
    \langle \mathbf{u}, \mathbf{v} \rangle_H = \sum_{i=1}^{m} u_i^q v_i,
\]
and the Hermitian dual code of $\mathcal{C}\subseteq \mathbb{F}_{q^2}^m$ is $\mathcal{C}^{\perp_H}=\{\mathbf{u}\in\mathbb{F}_{q^2}^{m} \mid \langle \mathbf{u},\mathbf{c}\rangle_H=0, \forall \mathbf{c}\in \mathcal{C}\}$.

If $\mathcal{C}\subseteq \mathcal{C}^{\perp_E}$ (resp. $\mathcal{C}\subseteq \mathcal{C}^{\perp_S}$, $\mathcal{C}\subseteq \mathcal{C}^{\perp_H}$),  $\mathcal{C}$ is called a Euclidean (resp. symplectic, Hermitian) self-orthogonal code. If $\mathcal{C}^{\perp_E}\subseteq \mathcal{C}$ (resp. $\mathcal{C}^{\perp_S}\subseteq \mathcal{C}$, $\mathcal{C}^{\perp_H}\subseteq \mathcal{C}$),  $\mathcal{C}$ is called a Euclidean (resp. symplectic, Hermitian) dual-containing code. It is easy to see that $\mathcal{C}$ is dual-containing if and only if its dual code is self-orthogonal under the corresponding type of inner products.

 To facilitate the discussion of self-orthogonality and dual-containing properties of QC codes, we begin by introducing some notations.

 For $k(x) = k_0 + k_1x + \cdots + k_{m-1}x^{m-1}\in\mathcal{R}$,  denote
\begin{align*}
    k^{[q]}(x) &= k_0^q + k_1^q x + \cdots + k_{m-1}^q x^{m-1};\quad\quad 
    k^*(x) = x^{\deg(k(x))} k(x^{-1})\pmod{x^m - 1}; \\
    \overline{k}(x) &= k_0 + k_{m-1}x + k_{m-2}x^2 + \cdots + k_1 x^{m-1} = k(x^{-1}) \pmod{x^m - 1}; \\
    k^\perp(x) &= f(0)^{-1}{f^*(x)}, f(x)=\frac{x^m - 1}{\gcd(k(x), x^m - 1)};\quad\quad 
    k^{\perp_H}(x) = (k^{[q]}(x))^{\perp}.
\end{align*}
\begin{lemma}\label{lemmakx}
Let the symbols be the same as the above. Then we have the following several simple properties:
\begin{itemize}[leftmargin=2em]
    \item[(1)] $k^\perp(x)$ is the generator polynomial of the Euclidean dual code of the cyclic code $\langle k(x) \rangle$ and $k^{\perp_H}(x)$ is the generator polynomial of the Hermitian dual code of $\langle k(x) \rangle$. In addition, for any $a(x), b(x) \in \mathcal{R}$, the following two orthogonality relations hold:
\begin{align*}
    \langle a(x)k(x), b(x)k^\perp(x) \rangle_E = 0,\quad
    \langle a(x)k(x), b(x)k^{\perp_H}(x) \rangle_H = 0;
\end{align*}
    \item[(2)] $\overline{k}(x) = x^{m - \deg(k(x))} k^*(x)$;
    \item[(3)] $\overline{\overline{k_1}}(x) = k_1(x)$,  $\overline{k_1(x) k_2(x)} = \overline{k_1}(x) \overline{k_2}(x)$, and $\overline{k_1(x)+k_2(x)} = \overline{k_1}(x) +\overline{k_2}(x)$ for any $k_1(x), k_2(x) \in \mathcal{R}$;
    \item[(4)] $(k_1(x) k_2(x))^{[q]} = k_1^{[q]}(x) k_2^{[q]}(x)$ and $(k_1(x)+k_2(x))^{[q]} = k_1^{[q]}(x)+k_2^{[q]}(x)$ for any $k_1(x), k_2(x) \in \mathcal{R}$;
    \item[(5)] $\overline{f}^{[q]} (x)= \overline{f^{[q]}(x)}$;
   \item[(6)] $k^{\perp_H}(x) = (k^{[q]}(x))^{\perp}=(k^{\perp}(x))^{[q]}$.
\end{itemize}
\end{lemma}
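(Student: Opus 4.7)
The plan is to dispose of the six claims starting from the purely formal identities (2)--(5), then to handle the substantive claim (1), and finally (6), which combines (1) with (5) in a careful bookkeeping.

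For (2) I will just expand $\overline{k}(x)=k(x^{-1})\pmod{x^m-1}$ and observe that $x^{m-\deg k(x)}k^{*}(x)=x^{m}k(x^{-1})\equiv k(x^{-1})\pmod{x^m-1}$, using $x^m\equiv 1$ in $\mathcal{R}$. Items (3) and (4) amount to the fact that the substitution $x\mapsto x^{-1}$ and the coefficient-wise Frobenius $\sum a_ix^i\mapsto \sum a_i^{q}x^i$ are each $\mathbb{F}_p$-algebra endomorphisms of $\mathcal{R}$; the substitution descends because $(x^{-1})^m-1\equiv 0\pmod{x^m-1}$, and $\overline{\overline{k}}=k$ since $(x^{-1})^{-1}=x$. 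Multiplicativity and additivity then follow from a one-line coefficient calculation. Item (5) expresses that these two endomorphisms commute, which is immediate since one acts on the exponent and the other on the coefficient.

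The substantive content is (1). The idea is to pass from $k(x)$ to the canonical principal generator of $\langle k(x)\rangle$ in $\mathcal{R}$, namely $g(x):=\gcd(k(x),x^m-1)$. With $f(x)=(x^m-1)/g(x)$, the classical duality theorem for cyclic codes says that the Euclidean dual of $\langle g(x)\rangle$ is generated by $f(0)^{-1}f^{*}(x)$, which is exactly $k^{\perp}(x)$. The Hermitian statement is handled by noting that $\langle u,v\rangle_H=\langle u^{[q]},v\rangle_E$ on the vector level, so the Hermitian dual of $\langle k(x)\rangle$ coincides with the Euclidean dual of $\langle k^{[q]}(x)\rangle$; this directly gives $k^{\perp_H}(x)=(k^{[q]}(x))^{\perp}$. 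The two orthogonality relations are then automatic, because $\langle k(x)\rangle^{\perp_E}$ and $\langle k(x)\rangle^{\perp_H}$ are ideals of $\mathcal{R}$, hence closed under multiplication by any $a(x),b(x)\in\mathcal{R}$.

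The one step that demands some care is (6), which asserts the agreement of the two a priori different expressions $(k^{[q]}(x))^{\perp}$ and $(k^{\perp}(x))^{[q]}$. I plan to verify three intermediate identities: $\gcd(k^{[q]}(x),x^m-1)=\gcd(k(x),x^m-1)^{[q]}$, using that $x^m-1$ is fixed by Frobenius and that $[q]$ is a ring automorphism on coefficients; $(f^{[q]})^{*}(x)=(f^{*})^{[q]}(x)$, using (5) together with $\deg f^{[q]}=\deg f$; and $f^{[q]}(0)=f(0)^{q}$. Combining them gives $(k^{[q]})^{\perp}=f(0)^{-q}(f^{*})^{[q]}=(f(0)^{-1}f^{*})^{[q]}=(k^{\perp})^{[q]}$. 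This interaction between $\gcd$, reciprocation, and Frobenius is the only non-routine obstacle in the proof; everything else is a direct coefficient computation.
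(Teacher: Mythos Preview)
Your proposal is correct and follows essentially the same route as the paper: the paper dismisses (1)--(5) as straightforward and proves only (6), via exactly the three intermediate identities you isolate, namely $\gcd(k^{[q]},x^m-1)=\gcd(k,x^m-1)^{[q]}$, $(f^{[q]})^{*}=(f^{*})^{[q]}$, and $f^{[q]}(0)=f(0)^{q}$, combined to give $(k^{[q]})^{\perp}=f(0)^{-q}(f^{*})^{[q]}=(k^{\perp})^{[q]}$. Your treatment of (1)--(5) simply supplies the details the paper omits, and your derivation of $k^{\perp_H}=(k^{[q]})^{\perp}$ from the relation $\langle u,v\rangle_H=\langle u^{[q]},v\rangle_E$ is a clean way to recover the first equality of (6) as a byproduct of (1).
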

\begin{proof}
    The verification of the first five statements is straightforward and thus omitted. We provide a detailed proof only for the final one.
    Our primary objective is to prove the following equation
\[
(k^{[q]}(x))^{\perp} = \big(k^\perp(x)\big)^{[q]}.
\]
Recall that
\[
k^\perp(x) = f(0)^{-1} f^*(x), \quad \text{where} \quad f(x) = \frac{x^m - 1}{\gcd(k(x), x^m - 1)}.
\]
Then,
\[
\big(k^\perp(x)\big)^{[q]} = f(0)^{-q} (f^*(x))^{[q]}.
\]
Note that
\[
  \frac{x^m - 1}{\gcd(k^{[q]}(x), x^m - 1)}=\frac{x^m - 1}{\gcd(k(x), x^m - 1)^{[q]}}=\left(\frac{x^m - 1}{\gcd(k(x), x^m - 1)}\right)^{[q]}=f^{[q]}(x)\] and  \((f^*(x))^{[q]} = \big(f^{[q]}(x)\big)^*.\)
Also, \(f^{[q]}(0) = f(0)^q\). Thus
\[
f(0)^{-q} (f^{[q]}(x))^* = f^{[q]}(0)^{-1} (f^{[q]}(x))^* = (k^{[q]}(x))^{\perp}.
\]
Therefore,
\[
(k^{[q]}(x))^{\perp} = \big(k^\perp(x)\big)^{[q]},
\]
which proves the result.
\end{proof}
As with cyclic codes, QC codes can also be expressed in polynomial form. Let $\mathcal{C}$ be a QC code of length $ml$ and index $l$ over $\mathbb{F}_q$. For any codeword $\mathbf{c}=(c_{0,0}, \ldots, c_{l-1,0}, c_{0,1}, \ldots, c_{l-1,1}, \ldots, c_{0,m-1}, \ldots, c_{l-1,m-1}) \in \mathcal{C}$, a map is defined as follows:
\[
\phi: \mathbb{F}_q^{ml} \longrightarrow \mathcal{R}^l,
\]
\[
\phi(\mathbf{c}) = (c_0(x), c_1(x), \dots, c_{l-1}(x)) \in \mathcal{R}^l,
\]
where \( c_i(x) = \sum\limits_{j=0}^{m-1} c_{i,j} x^j \in \mathcal{R} \) for each $i = 0, 1, \dots, l-1$. 
Note that $\mathcal{C}$ is a linear subspace of $\mathbb{F}_{q}^{ml}$ and it is invariant under cyclic shifts of codewords by $l$ coordinates. So, $\mathcal{C}$ corresponds exactly to an $\mathcal{R}$-submodule of $\mathcal{R}^l$.
\begin{lemma}\cite{Ling2001}\label{transform}
    For any $\mathbf{u}, \mathbf{v} \in \mathbb{F}_q^{ml}$, the Euclidean inner product of $\mathbf{v}$  with $\mathbf{u}$ and its all $l$ right cyclic shifts is zero if and only if $\phi(\mathbf{u}) * \phi(\mathbf{v}) = \sum\limits_{i=0}^{l-1} u_i(x) \overline{v_i}(x)=0$.
\end{lemma}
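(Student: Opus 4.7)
The plan is to exploit the fact that both the Euclidean inner product and the $l$-right-cyclic shift admit clean descriptions under the identification $\phi\colon\mathbb{F}_q^{ml}\to\mathcal{R}^l$, and then to extract the desired equivalence by matching coefficients of a single polynomial in $\mathcal{R}$.

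First, I would verify that the $l$-right-cyclic shift on $\mathbb{F}_q^{ml}$ corresponds under $\phi$ to coordinate-wise multiplication by $x$ in $\mathcal{R}^l$. Unpacking the indexing in the definition of a QC code, a single $l$-right-cyclic shift sends the entry $u_{i,j}$ to position $(i,(j+1)\bmod m)$, so the image under $\phi$ of the $t$-fold iterate $\mathbf{u}^{(t)}$ of $\mathbf{u}$ is $(x^t u_0(x),\ldots,x^t u_{l-1}(x))$ in $\mathcal{R}^l$, the multiplication being taken modulo $x^m-1$.

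Next, I would compute the constant term of the polynomial $P(x):=\sum_{i=0}^{l-1}u_i(x)\overline{v_i}(x)\in\mathcal{R}$. Using $\overline{v_i}(x)=v_i(x^{-1})\bmod(x^m-1)$, a direct expansion shows that this constant term equals $\sum_{i,j}u_{i,j}v_{i,j}=\langle\mathbf{u},\mathbf{v}\rangle_E$. Replacing $\mathbf{u}$ by $\mathbf{u}^{(t)}$ multiplies $P(x)$ by $x^t$, and the constant term of $x^tP(x)$ in $\mathcal{R}$ is precisely the coefficient of $x^{(m-t)\bmod m}$ in $P(x)$. As $t$ runs through $0,1,\ldots,m-1$, so does $(m-t)\bmod m$, so every coefficient of $P(x)$ is realized as $\langle\mathbf{u}^{(t)},\mathbf{v}\rangle_E$ for some $t$.

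Since a polynomial in $\mathcal{R}$ vanishes if and only if all its $m$ coefficients vanish, the condition that $\langle\mathbf{u}^{(t)},\mathbf{v}\rangle_E=0$ for every $t=0,1,\ldots,m-1$ is equivalent to $P(x)=\phi(\mathbf{u})\ast\phi(\mathbf{v})=0$ in $\mathcal{R}$, which is precisely the statement of the lemma. The only mild obstacle is the bookkeeping of indices and of the direction in passing between cyclic shifts and multiplication by $x$ modulo $x^m-1$; beyond that the argument is a direct calculation and requires no deeper input.
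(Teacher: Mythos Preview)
Your argument is correct and is the standard coefficient-matching proof of this result. The paper itself does not prove this lemma at all---it is quoted from Ling and Sol\'e~\cite{Ling2001} without proof---so there is no in-paper argument to compare against; your self-contained derivation is essentially the one found in that reference.
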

For the polynomial $k(x) = k_0 + k_1x + k_2x^2 + \dots + k_{m-1}x^{m-1} \in \mathcal{R}$, denote \([k(x)] = (k_0, k_1, \dots, k_{m-1})\). Then \( k(x) \) uniquely determines the following circulant matrix
\[
K =
\begin{bmatrix}
k_0 & k_1 & k_2 & \cdots & k_{m-1} \\
k_{m-1} & k_0 & k_1 & \cdots & k_{m-2} \\
\vdots & \vdots & \vdots & \ddots & \vdots \\
k_1 & k_2 & k_3 & \cdots & k_0
\end{bmatrix}=
\begin{bmatrix}
[k(x)]\\
[xk(x)]\\
\vdots\\
[x^{m-1}k(x)]
\end{bmatrix}.
\]
Let the matrix
\[
G =
\begin{bmatrix}
K_{11} & K_{12} & \cdots & K_{1l} \\
K_{21} & K_{22} & \cdots & K_{2l} \\
\vdots & \vdots & \ddots & \vdots \\
K_{t1} & K_{t2} & \cdots & K_{tl}
\end{bmatrix},
\]
where each \( K_{ij} \) is a circulant matrix corresponding to a polynomial \( k_{ij}(x) \) in $\mathcal{R}$, \( 1 \leq i \leq t\) and \(1 \leq j \leq l \). 
A QC code whose generator matrix has the form \( G \) is called a \( t \)-generator QC code with index \( l \) and its generating set can be expressed as \( \{ \vec{K}_1, \vec{K}_2, \dots, \vec{K}_{t} \} \), where $\vec{K}_i = ([k_{i1}(x)], [k_{i2}(x)], \dots, [k_{il}(x)])$,  $1 \leq i \leq t$.

Let $\mathcal{C}_1$ and $\mathcal{C}_2$ be two $t$-generator QC codes of length $ml$ and index $l$ over $\mathbb{F}_q$ with the generating sets $\{ \vec{a}_1, \vec{a}_2, \dots, \vec{a}_{t} \}$ and $\{ \vec{b}_1, \vec{b}_2, \dots, \vec{b}_{t} \}$, respectively. If $ \vec{b}_1, \vec{b}_2, \dots, \vec{b}_{t}$ can be linearly represented by $ \vec{a}_1, \vec{a}_2, \dots, \vec{a}_{t}$, we have  $\mathcal{C}_2 \subseteq \mathcal{C}_1$.

\begin{definition}\label{1-generator}\cite{Seguin2004}
   Let $\mathcal{C}$ be a 1-generator QC code of length $ml$ and index $l$ over $\mathbb{F}_q$, of the form
\[
\mathcal{C} = \left\langle \left( [a_1(x)], [a_2(x)], \dots, [a_l(x)] \right) \right\rangle 
= \left\{ r(x) \left( [a_1(x)], [a_2(x)], \dots, [a_l(x)] \right) \mid r(x) \in \mathcal{R} \right\},
\]
where $\mathcal{R} = \mathbb{F}_q[x]/( x^m - 1 )$.  
Then $g(x) = \gcd\left( a_1(x), a_2(x), \dots, a_l(x), x^m - 1 \right)$, $h(x) = \frac{x^m - 1}{g(x)}$ are called the generator polynomial and  parity-check polynomial of $\mathcal{C}$ respectively. Further, the dimension of $\mathcal{C}$ is given by $\dim(\mathcal{C}) = m - \deg(g(x)) = \deg(h(x))$.
\end{definition}
An $[[n, k, d]]_q$ quantum stabilizer code is a \(q\)-ary quantum code that encodes \(k\) logical qudits into \(n\) physical qudits and can detect all errors of weight less than \(d\), where \(q\) is a power of a prime number. Such a code is defined as the joint eigenspace of an abelian subgroup of the error group acting on the Hilbert space \(\mathbb{C}^{q^n}\).

An \((a_l, a_r)\)-$[[n, k]]_q$ quantum synchronizable code is a \(q\)-ary quantum code that encodes \(k\) qudits into $n$ physical qudits and is capable of correcting misalignment of up to \(a_l\) qudits to the left and \(a_r\) qudits to the right.

\section{Necessary and sufficient conditions of $t$-generator QC codes to be self-orthogonal}\label{sec3}
In this section, we present necessary and sufficient conditions for $t$-generator QC codes to be self-orthogonal with respect to the Euclidean, symplectic, and Hermitian inner products. These conditions offer key insights into the algebraic structure of these codes and provide a theoretical foundation for the subsequent constructions of self-orthogonal QC codes.

For the sake of brevity, we present the following definition of \( \mathcal{C} \). Unless otherwise specified, the QC code discussed in this section will refer to it.
\begin{definition}\label{def1}
    $\mathcal{C}$ is a $t$-generator QC code of index $l$ over $\mathbb{F}_q$ generated by $$\{([k_{11}(x)g_1(x)],\dots,[k_{1l}(x)g_1(x)]), \dots, ([k_{t1}(x)g_t(x)],\dots,[k_{tl}(x)g_t(x)])\},$$ where $g_i(x)\mid x^m-1$, $\gcd(k_{i1}(x),k_{i2}(x),\dots,k_{il}(x),h_i(x))=1$, $h_i(x)=\frac{x^m-1}{g_i(x)}$,  $1\leq i\leq t$, and all the polynomials belong to $\mathcal{R}=\mathbb{F}_q/(x^m-1)$. 
\end{definition}
\begin{lemma}\cite{Galindo2018}\label{Euclidean}
    Let $f(x)$, $g(x)$, and $k(x)$ be polynomials over $\mathcal{R}$. Then, the following property holds for the Euclidean inner product:
    \begin{eqnarray*}
        \langle [f(x)g(x)], [k(x)] \rangle_E = \langle [g(x)], [\overline{f}(x) k(x)] \rangle_E.
    \end{eqnarray*}
\end{lemma}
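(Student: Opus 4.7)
My plan is to prove the identity by expanding both inner products as double sums in the coefficients of $f$, $g$, and $k$, then verifying that they coincide after a change of summation index. Writing $f(x)=\sum_{i=0}^{m-1} f_i x^i$, $g(x)=\sum_{j=0}^{m-1} g_j x^j$, and $k(x)=\sum_{l=0}^{m-1} k_l x^l$ in $\mathcal{R}$, the cyclic convolution formula shows that the coefficient of $x^n$ in $f(x)g(x)$ reduced modulo $x^m-1$ equals $\sum_{i=0}^{m-1} f_i\, g_{(n-i)\bmod m}$. Therefore
\[
\langle [f(x)g(x)], [k(x)] \rangle_E = \sum_{n=0}^{m-1} k_n \sum_{i=0}^{m-1} f_i\, g_{(n-i)\bmod m} = \sum_{i,j=0}^{m-1} f_i\, g_j\, k_{(i+j)\bmod m},
\]
after the reindexing $j = (n-i)\bmod m$.

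For the right-hand side I would use the fact, read off from the definition in Section~\ref{sec2}, that the coefficient of $x^r$ in $\overline{f}(x)$ equals $f_{(-r)\bmod m}$, i.e. $\overline{f}(x) = f(x^{-1}) \pmod{x^m-1}$. Applying the convolution formula again, the coefficient of $x^s$ in $\overline{f}(x)k(x)$ modulo $x^m-1$ becomes $\sum_{i=0}^{m-1} f_i\, k_{(s+i)\bmod m}$ after substituting $r=(-i)\bmod m$, which yields
\[
\langle [g(x)], [\overline{f}(x)k(x)] \rangle_E = \sum_{s=0}^{m-1} g_s \sum_{i=0}^{m-1} f_i\, k_{(s+i)\bmod m} = \sum_{i,j=0}^{m-1} f_i\, g_j\, k_{(i+j)\bmod m}
\]
after relabeling $s = j$. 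Comparing the two double sums closes the argument.

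There is no genuine mathematical obstacle here; the only care required is bookkeeping every index reduction modulo $m$ consistently, and making sure the reductions modulo $x^m-1$ and modulo $m$ match up. A conceptually cleaner presentation, which I would mention as an alternative, is to note that the claim says multiplication by $\overline{f}(x)$ on $\mathcal{R}$ is the adjoint of multiplication by $f(x)$ with respect to the Euclidean pairing $\langle [\cdot], [\cdot] \rangle_E$. By $\mathbb{F}_q$-bilinearity it suffices to check the monomial case $f(x) = x^i$, where $\overline{x^i} = x^{m-i} = x^{-i} \pmod{x^m-1}$, and the identity then reduces to the elementary observation that pairing a cyclic shift of $[g]$ by $i$ coordinates with $[k]$ equals pairing $[g]$ with a cyclic shift of $[k]$ by $-i$ coordinates.
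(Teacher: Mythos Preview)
Your proof is correct. The paper does not supply its own proof of this lemma; it is stated with a citation to \cite{Galindo2018} and used as a known tool. Your direct coefficient expansion (and the alternative adjoint/monomial argument you sketch) both establish the identity cleanly, so there is nothing to compare against in the paper itself.
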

\begin{theorem}\label{E-mainth}
Let $\mathcal{C}$ be the code in Definition~\ref{def1}.
Then the necessary and sufficient conditions for $\mathcal{C}$ to be self-orthogonal 
with respect to the following inner products are given below:

\begin{enumerate}[label=\textbf{Case \Alph*.},leftmargin=*]
  \item  
        $\mathcal{C}$ is \textbf{Euclidean} self-orthogonal if and only if
        \[
    h_{r}(x)\mid \overline{g_s}(x)\sum_{j=1}^{l}k_{rj}(x)\overline{k_{sj}}(x),\quad  \text{where}\quad 1\leq r\leq s\leq t;\]

  \item  
        $\mathcal{C}$ is \textbf{Hermitian} self-orthogonal if and only if \[
    h_{r}^{[q]}(x)\mid \overline{g_s}(x)\sum_{j=1}^{l}k_{rj}^{[q]}(x)\overline{k_{sj}}(x),\quad  \text{where}\quad 1\leq r\leq s\leq t;\]

  \item 
        $\mathcal{C}$ is \textbf{Symplectic} self-orthogonal if and only if \[
    h_{r}(x)\mid \overline{g_s}(x)\sum_{j=1}^{\omega}(k_{rj}(x)\overline{k_{s(\omega+j)}}(x)-k_{r(\omega+j)}(x)\overline{k_{sj}}(x)),\]
     where $1\leq r\leq s\leq t$, $l$ is even and $\omega=\frac{l}{2}$.
\end{enumerate}

\end{theorem}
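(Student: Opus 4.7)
The plan is to handle the three cases via a common template: translate self-orthogonality of $\mathcal{C}$ into a polynomial identity in $\mathcal{R}$ by applying an appropriate analogue of Lemma~\ref{transform}, then use Lemma~\ref{lemmakx} together with $g_i(x)h_i(x)=x^m-1$ to extract the divisibility on $h_i(x)$. For Case~A (Euclidean), $\mathcal{C}$ is self-orthogonal if and only if, for each pair of generators indexed by $r,s\in\{1,\dots,t\}$, the $s$-th generator is Euclidean orthogonal to the $r$-th generator and all its $m-1$ nontrivial $l$-cyclic shifts; Lemma~\ref{transform} converts this into
\[
\sum_{j=1}^{l} k_{rj}(x)g_r(x)\,\overline{k_{sj}(x)g_s(x)} \equiv 0 \pmod{x^m-1}.
\]
Using Lemma~\ref{lemmakx}(3) I would factor out $g_r(x)\overline{g_s}(x)$ and then cancel $g_r(x)$ against $x^m-1=g_r(x)h_r(x)$ to obtain $h_r(x)\mid \overline{g_s}(x)\sum_{j=1}^{l} k_{rj}(x)\overline{k_{sj}}(x)$. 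The symmetry of $\langle\cdot,\cdot\rangle_E$ then reduces the required pairs to $1\le r\le s\le t$.

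For Case~B (Hermitian), I would use the identity $\langle\mathbf{u},\mathbf{v}\rangle_H = \langle \mathbf{u}^{[q]},\mathbf{v}\rangle_E$, where $\mathbf{u}^{[q]}$ denotes the componentwise $q$-th power, together with the fact that cyclic shifts commute with $[q]$. This yields a Hermitian analogue of Lemma~\ref{transform} which, after applying Lemma~\ref{lemmakx}(3),(4), factors as
\[
g_r^{[q]}(x)\overline{g_s}(x)\sum_{j=1}^{l} k_{rj}^{[q]}(x)\overline{k_{sj}}(x) \equiv 0 \pmod{x^m-1}.
\]
Since $x^m-1$ has coefficients in $\mathbb{F}_p$, we have $(x^m-1)^{[q]}=x^m-1$, hence $g_r^{[q]}(x)h_r^{[q]}(x)=x^m-1$, and cancellation produces the stated divisibility. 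Hermitian symmetry of vanishing again reduces the required pairs to $r\le s$.

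For Case~C (symplectic), with $l$ even and $\omega=l/2$ I would derive a symplectic analogue of Lemma~\ref{transform} by applying the Euclidean transform to each of the two signed Euclidean sums appearing in $\langle\cdot,\cdot\rangle_S$. This gives
\[
g_r(x)\overline{g_s}(x)\sum_{j=1}^{\omega}\bigl(k_{rj}(x)\overline{k_{s(\omega+j)}}(x) - k_{r(\omega+j)}(x)\overline{k_{sj}}(x)\bigr)\equiv 0\pmod{x^m-1},
\]
and cancelling $g_r(x)$ yields the displayed divisibility. Antisymmetry of $\langle\cdot,\cdot\rangle_S$ renders $(r,s)$ and $(s,r)$ equivalent, so $r\le s$ suffices; the case $r=s$ remains meaningful because it concerns the $r$-th generator against its nontrivial $l$-cyclic shifts rather than against itself.

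The main technical obstacle is carefully deriving the symplectic and Hermitian analogues of Lemma~\ref{transform} and tracking indices---particularly the pairing of the $j$-th and $(\omega+j)$-th polynomial components in the symplectic case, and the interplay of $[q]$, $\overline{\cdot}$, and divisibility in $\mathcal{R}$. Once those transforms are established, each case reduces to the same pattern of factor-and-cancel in the polynomial ring, and the symmetry (or antisymmetry) of the underlying form justifies restricting to $r\le s$.
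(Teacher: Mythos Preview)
Your proposal is correct and follows essentially the same approach as the paper. Both arguments translate self-orthogonality into the vanishing of the polynomial $g_r(x)\overline{g_s}(x)\sum_j(\cdots)$ in $\mathcal{R}$ via Lemma~\ref{transform} (or its Hermitian/symplectic analogues), then cancel the factor $g_r(x)$ (respectively $g_r^{[q]}(x)$) against $x^m-1=g_r(x)h_r(x)$ to obtain the stated divisibility by $h_r(x)$; the restriction to $r\le s$ is justified in both by the (conjugate/anti)symmetry of the relevant form. The only cosmetic difference is that the paper phrases the starting point as $GG^T=0$ (resp.\ $G^{[q]}G^T=0$) using explicit row vectors $[x^{i_r}k_{rj}(x)g_r(x)]$, and in the symplectic case invokes Lemma~\ref{Euclidean} to shuffle factors before applying Lemma~\ref{transform}, whereas you absorb that step into a direct ``symplectic analogue of Lemma~\ref{transform}''.
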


\begin{proof}
    Denote  
    \begin{align*}
    G_i=
        \begin{pmatrix}
            [k_{i1}(x)g_i(x)]&\cdots&[k_{il}(x)g_i(x)]\\
            [xk_{i1}(x)g_i(x)]&\cdots&[xk_{il}(x)g_i(x)]\\
            \vdots&\vdots&\vdots\\
            [x^{m-\deg(g_i)-1}k_{i1}(x)g_i(x)]&\cdots&[x^{m-\deg(g_i)-1}k_{il}(x)g_i(x)]
        \end{pmatrix},
    \end{align*}
    $1\leq i\leq t$. Then the generator matrix of $\mathcal{C}$ is $G = 
    \begin{bmatrix} G_1 \\ G_2 \\ \vdots \\ G_t \end{bmatrix}$. Note that $\mathcal{C}$ is Euclidean self-orthogonal if and only if $GG^T=0$, i.e., 
    \begin{align}
        \langle([x^{i_r}&k_{r1}(x)g_r(x)],[x^{i_r}k_{r2}(x)g_r(x)],\cdots,[x^{i_r}k_{rl}(x)g_r(x)]),\nonumber\\
        &([x^{j_s}k_{s1}(x)g_s(x)],[x^{j_s}k_{s2}(x)g_s(x)],\cdots,[x^{j_s}k_{sl}(x)g_s(x)])\rangle_{E}=0,\label{eq3.1}
    \end{align}
    for all $0\leq i_r\leq m-\deg(g_r(x))-1$, $0\leq j_s\leq m-\deg(g_s(x))-1$, $1\leq r,s\leq t$. By Lemma \ref{transform}, Eq.~\eqref{eq3.1} holds if and only if 
    \[x^{i_{r}+m-j_{s}}g_r(x)\overline{g_s}(x)\sum_{u=1}^{l}k_{ru}(x)\overline{k_{su}}(x)\equiv0\mod {x^m-1},\] i.e., $h_{r}(x)\mid \overline{g_s}(x)\sum_{u=1}^{l}k_{ru}(x)\overline{k_{su}}(x)$, 
    for any $r,s\in\{1,2,\cdots,t\}$. Due to the symmetry of the Euclidean inner product, it is sufficient to restrict our attention to the case of $1\leq r\leq s\leq t$. 
    
    Under the Hermitian inner product, similar necessary and sufficient conditions for the self-orthogonality of QC codes can also be established. The only modification required is to replace \( GG^T = 0 \) with \( G^{[q]}G^T = 0 \), where \( G^{[q]} \) denotes the matrix obtained by raising each entry of \( G \) to the \( q \)-th power. We omit the proof of this part here.
    
    For notational convenience, polynomial expressions like $g_r(x)$ are abbreviated as $g_r$, as long as the meaning remains unambiguous.
 Then \( \mathcal{C} \) is symplectic self-orthogonal if and only if the symplectic inner product of any two rows of the generator matrix is zero, that is,
    \begin{align*}
       \langle([x^{i_r}k_{r1}g_r],[x^{i_r}k_{r2}g_r],\cdots,[x^{i_r}k_{rl}g_r]),
         ([x^{j_s}k_{s1}g_s],[x^{j_s}k_{s2}g_s],\cdots,[x^{j_s}k_{sl}g_s])\rangle_{S}=0,
    \end{align*}
    for all $0\leq i_r\leq m-\deg(g_r(x))-1$, $0\leq j_s\leq m-\deg(g_s(x))-1$, and $1\leq r,s\leq t$. This is equivalent to
    \begin{align}
        \langle([x^{i_r}&k_{r1}g_r],\cdots,[x^{i_r}k_{r\omega}g_r]),([x^{j_s}k_{s(\omega+1)}g_s],\cdots,[x^{j_s}k_{sl}g_s])\rangle_{E}\nonumber\\&-\langle([x^{i_r}k_{r(\omega+1)}g_r],\cdots,[x^{i_r}k_{rl}g_r]),([x^{j_s}k_{s1}g_s],\cdots,[x^{j_s}k_{s\omega}g_s])\rangle_{E}=0.\label{eq7}
    \end{align}
    According to Lemma \ref{Euclidean}, Eq.~\eqref{eq7} can be transformed into
    \begin{align*}
        \langle([x^{i_r}g_r],\cdots,
        &[x^{i_r}g_r]),([x^{j_s}\overline{k_{r1}}k_{s(\omega+1)}g_s],\cdots,[x^{j_s}\overline{k_{r\omega}}k_{sl}g_s])\rangle_{E}\\-&\langle([x^{i_r}g_r],\cdots,[x^{i_r}g_r]),(
        [x^{j_s}\overline{k_{r(\omega+1)}}k_{s1}g_s],\cdots,[x^{j_s}\overline{k_{rl}}k_{s\omega}g_s])\rangle_{E}=0,
    \end{align*}
    then
    \begin{align}
        \langle([&x^{i_r}g_r],\cdots,[x^{i_r}g_r]),\nonumber\\&([x^{j_s}g_s(\overline{k_{r1}}k_{s(\omega+1)}-\overline{k_{r(\omega+1)}}k_{s1})],\cdots,[x^{j_s}g_s(\overline{k_{r\omega}}k_{sl}-\overline{k_{rl}}k_{s\omega})])\rangle_{E}=0.\label{eq8}
    \end{align}
    By Lemma \ref{transform}, Eq.~\eqref{eq8} holds for all 
    $1\leq r,s\leq t$ if and only if \[x^{i_r+m-j_s}g_r\overline{g_s}\sum_{u=1}^{\omega}(k_{ru}\overline{k_{s(\omega+u)}}-k_{r(\omega+u)}\overline{k_{su}})\equiv0\mod {x^m-1},\] i.e. $h_{r}\mid \overline{g_s}\sum_{u=1}^{\omega}(k_{ru}\overline{k_{s(\omega+u)}}-k_{r(\omega+u)}\overline{k_{su}})$ 
    for any $r,s\in\{1,2,\cdots,t\}$. Due to the skew-symmetry of the symplectic inner product, it is sufficient to restrict our attention to the case of $1\leq r\leq s\leq t$. This completes the proof.
\end{proof}
\begin{remark}
It should be noted that the above result can be viewed as a generalization of the conclusion in~\cite{BS2024}. Specifically, Case A in Theorem~\ref{E-mainth} reduces to Theorem~11 in~\cite{BS2024} when $t = 1$.
    Moreover, Ezerman et al., inspired by the idea of the Chinese Remainder Theorem, established necessary and sufficient conditions for quasi-twisted codes (a generalization of quasi-cyclic codes) to be self-orthogonal. However, those conditions are subject to a restriction on the code length $ml$, namely that $\gcd(m,q)=1$~\cite{Grassl2025}. 
\end{remark}
\begin{example}
    Let \( q = 2 \) and \( m = 21 \). Choose \( g(x) = x^{10} + x^8 + x^6 + x^4 + x^3 + 1 \), \( v_1(x) = x^3 + x + 1\), and \( v_2(x) = x^3 + x^2 \). Then we can construct a 1-generator QC code \(\mathcal{C} = \langle ([v_1(x)g(x)],\ [v_2(x)g(x))] \rangle \). Using Magma for computation, we find that \(\mathcal{C}\) is a binary linear code with parameters \([42, 11, 16]_2\), which is a best-known linear code (BKLC); see~\cite{codetables}. According to Theorem \ref{E-mainth}, the code \(\mathcal{C}\) is Euclidean self-orthogonal. 
\end{example}

\section{A new class of 2-generator QC codes}\label{sec4}

In this section, we introduce a new class of 2-generator quasi-cyclic codes with generators \( ([g_1(x)], [v_1(x)g_1(x)]) \) and \( ([v_2(x)g_2(x)], [g_2(x)]) \), and determine their dimensions.  
By applying Theorem~\ref{E-mainth} from Section~\ref{sec3}, we derive explicit necessary and sufficient conditions for these codes to be self-orthogonal with respect to the Euclidean, symplectic, and Hermitian inner products.  
We further investigate the algebraic structure of their dual codes under these three inner products and examine their dual-containing properties.  
Finally, we provide a comparative analysis between our results and those reported in related literature.


\subsection{Definition and analysis of the new QC codes }
\begin{definition}\label{definition1}
    Let $g_1(x)$, $g_2(x)$, $v_1(x)$, and $v_2(x)$ be monic polynomials in $\mathcal{R} = \mathbb{F}_q[x]/(x^m - 1)$, where $g_1(x)$ and $g_2(x)$ are divisors of $x^m - 1$, and the polynomials $v_1(x)$ and $v_2(x)$ satisfy the condition $\gcd(v_1(x)v_2(x) - 1, x^m - 1) = 1$. The 2-generator QC code $\mathcal{C}$ of length $2m$ over $\mathbb{F}_q$ is defined as the code generated by the polynomials $([g_1(x)], [v_1(x)g_1(x)])$ and $([v_2(x)g_2(x)],[g_2(x)])$.
\end{definition}
\begin{remark}
    It is worth noting that the construction in Definition~\ref{definition1} is more general than those in~\cite{Galindo2018},~\cite{Guan2022-1}, and~\cite{duchao2023}. Specifically, setting $v_2(x) = 0$ corresponds to the construction in~\cite{Galindo2018}, $v_2(x) = 1$ corresponds to that in~\cite{duchao2023}, and $v_2(x) = v_1(x)$ corresponds to the construction in~\cite{Guan2022-1}.
    In addition, the code $\mathcal{C}$ in Definition~\ref{definition1} is defined over $\mathbb{F}_{q^2}$ when the Hermitian inner product is considered.    
\end{remark}
\begin{example}\label{example2}
    Let $q=3$, $m=8$, $g_1(x)=x^3 + x^2 + x + 1$, $g_2(x)=x^6 + 2x^4 + x^2 + 2$, $v_1(x)=x^6 + 2x^4 + 2x^2 + 1$, $v_2(x)=x$, we have $\gcd(v_{1}(x)v_{2}(x)-1,x^m-1)=1$. Let $C$ be the QC code generated by $([g_1(x)],[v_1(x)g_1(x)])$ and $([v_2(x)g_2(x)],[g_2(x)])$, $C_1$ be the QC code generated by $([g_1(x)],[v_1(x)g_1(x)])$ and $(0,[g_2(x)])$, 
    $C_2$ be the QC code generated by $([g_1(x)],[v_1(x)g_1(x)])$ and $([g_2(x)],[g_2(x)])$, and $C_3$ be the QC code generated by $([g_1(x)],[v_1(x)g_1(x)])$ and $([v_1(x)g_2(x)],[g_2(x)])$. Through calculations in Magma, it is found that the parameters of the linear codes \( C, C_1, C_2 \), and \( C_3 \) are \([16, 7, 6]_3\), \([16, 7, 4]_3\), \([16, 7, 5]_3\), and \([16, 7, 4]_3\), respectively.
\end{example}
\begin{remark}
    In Example ~\ref{example2}, the code $C$ is derived from Definition~\ref{definition1}, while $C_1$, $C_2$, and $C_3$ is given by the construction in~\cite{Galindo2018},~\cite{Guan2022-1}, and~\cite{duchao2023}, respectively. It is evident that the parameters of $C$ are better than those of $C_1$, $C_2$, and $C_3$. This demonstrates that the generalized construction given in Definition~\ref{definition1} is capable of producing more and better codes.
\end{remark}
\begin{proposition}\label{generator matrix}
The QC code $\mathcal{C}$ has dimension $2m-\deg(g_1(x))-\deg(g_2(x))$. Moreover, the generator matrix of $\mathcal{C}$ is given by 
\begin{eqnarray*}
    G=
    \begin{pmatrix}
        G_1&G_{v_1} \\
        G_{v_2}&G_2
    \end{pmatrix}
    =
    \begin{pmatrix}
    [g_1(x)] & [v_1(x)g_1(x)] \\
    [xg_1(x)] & [xv_1(x)g_1(x)]\\
    \vdots & \vdots\\
    [x^{m-\deg(g_1)-1}g_1(x)] & [x^{m-\deg(g_1)-1}v_1(x)g_1(x)]\\
    [v_2(x)g_2(x)] & [g_2(x)]\\
    [xv_2(x)g_2(x)] & [xg_2(x]\\
    \vdots & \vdots\\
    [x^{m-\deg(g_2)-1}v_2(x)g_2(x)] & [x^{m-\deg(g_2)-1}g_2(x)]
    \end{pmatrix},
\end{eqnarray*}
 where $G_1$ and $G_2$ are the generator matrices of the cyclic codes $\langle g_1(x) \rangle$ and $\langle g_2(x) \rangle$, respectively. Additionally, $G_{v_1}$ and $G_{v_2}$ are $(m - \deg(g_1(x))) \times m$ and $(m-\deg(g_2(x)))\times m$ circulant matrices determined by $v_1(x)g_1(x)$ and $v_2(x)g_2(x)$, respectively.

\end{proposition}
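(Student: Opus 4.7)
The plan is to realize $\mathcal{C}$ as the $\mathbb{F}_q$-vector-space sum of two $\mathcal{R}$-cyclic submodules and to use the hypothesis $\gcd(v_1(x)v_2(x)-1,\,x^m-1)=1$ to force their intersection to be trivial; the dimension formula and the explicit generator matrix then follow from a routine count of shifts.

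First I would set
\begin{align*}
V_1 &= \bigl\{\,a(x)\bigl([g_1(x)],[v_1(x)g_1(x)]\bigr):a(x)\in\mathcal{R}\,\bigr\},\\
V_2 &= \bigl\{\,b(x)\bigl([v_2(x)g_2(x)],[g_2(x)]\bigr):b(x)\in\mathcal{R}\,\bigr\},
\end{align*}
so that $\mathcal{C}=V_1+V_2$. The $\mathbb{F}_q$-linear map $\mathcal{R}\to V_1$, $a(x)\mapsto a(x)\bigl([g_1],[v_1g_1]\bigr)$, has kernel $\{a(x):a(x)g_1(x)\equiv 0\pmod{x^m-1}\}=\langle h_1(x)\rangle$, since the second coordinate vanishes automatically once the first does. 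As $\langle h_1(x)\rangle\subset\mathcal{R}$ has $\mathbb{F}_q$-dimension $\deg(g_1(x))$, we obtain $\dim V_1=m-\deg(g_1(x))$, and symmetrically $\dim V_2=m-\deg(g_2(x))$.

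The key step is to show $V_1\cap V_2=\{0\}$. Assuming $a(x)\bigl([g_1],[v_1g_1]\bigr)=b(x)\bigl([v_2g_2],[g_2]\bigr)$ in $\mathcal{R}^2$, comparing coordinates gives
\begin{equation*}
a(x)g_1(x)\equiv b(x)v_2(x)g_2(x),\qquad a(x)v_1(x)g_1(x)\equiv b(x)g_2(x)\pmod{x^m-1}.
\end{equation*}
Multiplying the first congruence by $v_1(x)$ and subtracting the second yields $b(x)g_2(x)\bigl(v_1(x)v_2(x)-1\bigr)\equiv 0\pmod{x^m-1}$. Since $\gcd(v_1(x)v_2(x)-1,x^m-1)=1$, it follows that $b(x)g_2(x)\equiv 0$, and then the second coordinate is zero; a symmetric manipulation forces $a(x)g_1(x)\equiv 0$, hence the common element is $0$. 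This is precisely where the coprimality hypothesis is consumed, and it is the main obstacle in the argument.

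Combining the two steps, $\dim\mathcal{C}=\dim V_1+\dim V_2=2m-\deg(g_1(x))-\deg(g_2(x))$. For the explicit generator matrix, I would observe that $\{1,x,\dots,x^{m-\deg(g_1)-1}\}$ descends to a basis of $\mathcal{R}/\langle h_1(x)\rangle$ (any nonzero element of $\langle h_1(x)\rangle$ has degree at least $\deg(h_1(x))=m-\deg(g_1(x))$), so their images furnish a basis of $V_1$; writing these rows out produces exactly the block $\bigl(G_1\mid G_{v_1}\bigr)$ of the claimed matrix, with $G_1$ the standard generator matrix of $\langle g_1(x)\rangle$ and $G_{v_1}$ the circulant of $v_1(x)g_1(x)$. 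An identical argument applied to $V_2$ supplies the block $\bigl(G_{v_2}\mid G_2\bigr)$, yielding the stated form of $G$ and completing the proposition.
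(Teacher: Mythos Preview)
Your proof is correct and follows essentially the same route as the paper: decompose $\mathcal{C}$ as $V_1+V_2$, compute each piece's dimension (you use a kernel argument where the paper cites Definition~\ref{1-generator}, but these are the same computation), and then use $\gcd(v_1v_2-1,x^m-1)=1$ to force $V_1\cap V_2=\{0\}$ via the identical polynomial manipulation. Your treatment of the explicit generator matrix is slightly more detailed than the paper's, which simply asserts it once the dimension is known.
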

\begin{proof}
     Let $\mathcal{C}_1$ and $\mathcal{C}_2$ be the 1-generator QC codes generated by the matrices $(G_1, G_{v_1})$ and $(G_{v_2}, G_2)$, respectively. By Definition \ref{1-generator}, we have $\dim \mathcal{C}_1 = m - \deg(g_1(x))$ and $\dim \mathcal{C}_2 = m - \deg(g_2(x))$. So, it remains to show that $\mathcal{C}_1 \cap \mathcal{C}_2 = \{\textbf{0}\}$. Suppose that there exist polynomials $a(x), b(x) \in \mathcal{R}$ such that 
\begin{eqnarray*}
    a(x)([g_1(x], [v_1(x) g_1(x)]) = b(x)([v_2(x) g_2(x)], [g_2(x)]),
\end{eqnarray*}
where $\deg(a(x)) < m - \deg(g_1(x))$ and $\deg(b(x)) < m - \deg(g_2(x))$. This leads to the system of equations:
\begin{equation*}
\begin{cases}
    a(x) g_1(x) - b(x) v_2(x) g_2(x) = 0, \\
    a(x) v_1(x) g_1(x) - b(x) g_2(x) = 0.
\end{cases}
\end{equation*}
Then 
\begin{equation*}
\begin{cases}
    x^m - 1 \mid a(x) g_1(x) - b(x) v_2(x) g_2(x), \\
    x^m - 1 \mid a(x) v_1(x) g_1(x) - b(x) g_2(x),
\end{cases}
\end{equation*}
which implies 
\begin{eqnarray*}
    x^m - 1 \mid b(x) g_2(x) (v_1(x) v_2(x) - 1).
\end{eqnarray*}
Since $\gcd(v_1(x) v_2(x) - 1, x^m - 1) = 1$, we know  that $x^m - 1 \mid b(x) g_2(x)$. Furthermore, we must have $b(x) = 0$, due to the fact that $\deg(b(x) g_2(x)) < m$. Thus we have $a(x) = 0$. Hence, we conclude that $\mathcal{C}_1 \cap \mathcal{C}_2 = \{\textbf{0}\}$. The proof is complete.
\end{proof}

\begin{theorem}\label{Eself-orthogonal}
Let $h_i(x)=\frac{x^m-1}{g_i(x)}$, $i=1,2$. 
Then $\mathcal{C}$ in Definition~\ref{definition1} is self-orthogonal 
with respect to each of the following inner products 
if and only if all the corresponding conditions in that case hold simultaneously:

\begin{enumerate}[label=\textbf{Case \Alph*.}, leftmargin=*]
  \item \textbf{Euclidean inner product.}
  \begin{align*}
    &(1.1)\quad h_1(x) \mid \overline{g_{1}}(x)\bigl(1+v_1(x)\overline{v_1}(x)\bigr),\\
    &(1.2)\quad h_1(x) \mid \overline{g_{2}}(x)\bigl(\overline{v_2}(x)+v_1(x)\bigr),\\
    &(1.3)\quad h_2(x) \mid \overline{g_{2}}(x)\bigl(1+v_2(x)\overline{v_2}(x)\bigr).
  \end{align*}

  \item \textbf{Hermitian inner product.}
  \begin{align*}
    &(2.1)\quad h_1^{[q]}(x)\mid \overline{g_1}(x)\bigl(1+v_1^{[q]}(x)\overline{v}_1(x)\bigr),\\
    &(2.2)\quad h_1^{[q]}(x)\mid \overline{g_2}(x)\bigl(\overline{v_2}(x)+v_1^{[q]}(x)\bigr),\\
    &(2.3)\quad h_2^{[q]}(x)\mid \overline{g_2}(x)\bigl(v_2^{[q]}(x)\overline{v}_2(x)+1\bigr).
  \end{align*}

  \item \textbf{Symplectic inner product.}
  \begin{align*}
    &(3.1)\quad h_1(x) \mid \overline{g_{1}}(x)\bigl(\overline{v_1}(x)-v_1(x)\bigr),\\
    &(3.2)\quad h_1(x) \mid \overline{g_{2}}(x)\bigl(1-\overline{v_2}(x)v_1(x)\bigr),\\
    &(3.3)\quad h_2(x) \mid \overline{g_{2}}(x)\bigl(v_2(x)-\overline{v_2}(x)\bigr).
  \end{align*}
\end{enumerate}

\end{theorem}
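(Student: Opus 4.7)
The plan is to identify the code $\mathcal{C}$ of Definition~\ref{definition1} as a particular instance of the general $t$-generator code $\mathcal{C}$ of Definition~\ref{def1}, and then simply specialize Theorem~\ref{E-mainth} to this case. Concretely, I would take $t=2$, $l=2$, and set
\[
k_{11}(x)=1,\quad k_{12}(x)=v_1(x),\quad k_{21}(x)=v_2(x),\quad k_{22}(x)=1,
\]
while keeping $g_1(x),g_2(x)$ as in Definition~\ref{definition1}. Before invoking Theorem~\ref{E-mainth}, I would verify the side hypothesis of Definition~\ref{def1}: for $i=1,2$ we need $\gcd(k_{i1}(x),k_{i2}(x),h_i(x))=1$, which is immediate because one of the two coordinates in each row is the constant $1$. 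Thus Theorem~\ref{E-mainth} applies verbatim.

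Next, I would substitute these choices into each of the three divisibility conditions in Theorem~\ref{E-mainth}, running $(r,s)$ over the three pairs $(1,1),(1,2),(2,2)$. For the Euclidean case, the inner sum $\sum_{j=1}^{2}k_{rj}(x)\overline{k_{sj}}(x)$ evaluates to $1+v_1(x)\overline{v_1}(x)$, $\overline{v_2}(x)+v_1(x)$, and $v_2(x)\overline{v_2}(x)+1$ in the three subcases, reproducing (1.1)--(1.3). For the Hermitian case, I would use part (6) of Lemma~\ref{lemmakx} together with the fact that $k_{r\cdot}^{[q]}$ replaces $v_1$ by $v_1^{[q]}$ and $v_2$ by $v_2^{[q]}$ (while constants are fixed by the Frobenius), which yields (2.1)--(2.3). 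For the symplectic case, the relevant sum $\sum_{j=1}^{1}\!\bigl(k_{rj}\overline{k_{s(1+j)}}-k_{r(1+j)}\overline{k_{sj}}\bigr)$ collapses into a single term; working through $(1,1),(1,2),(2,2)$ gives $\overline{v_1}(x)-v_1(x)$, $1-\overline{v_2}(x)v_1(x)$, and $v_2(x)-\overline{v_2}(x)$, producing (3.1)--(3.3).

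Finally, I would observe that the remaining pair $(r,s)=(2,1)$ needs no separate treatment: for the Euclidean and Hermitian products, symmetry of the inner product reduces $(2,1)$ to $(1,2)$, while for the symplectic product the skew-symmetry flips the sign of the sum, which does not affect the divisibility statement. Hence the three conditions listed in each case are collectively both necessary and sufficient.

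There is no substantial obstacle to this argument; the proof is essentially a careful bookkeeping of the specialization $(t,l)=(2,2)$ with the chosen $k_{ij}$. The only point requiring a little attention is the Hermitian case, where one must keep track of how the $[q]$-operator interacts with the bar operator (handled by Lemma~\ref{lemmakx}(4)--(6)), and the symplectic case, where the $(1,2)$-condition takes the slightly less symmetric form $1-v_1(x)\overline{v_2}(x)$ rather than, say, $\overline{v_2}(x)-v_1(x)$; both of these are mechanical consequences of the definitions once the substitution is made.
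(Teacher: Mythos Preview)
Your proposal is correct and is exactly the approach the paper takes: its entire proof is the single sentence ``These results follow directly from Theorem~\ref{E-mainth}.'' You have simply spelled out the specialization $(t,l)=(2,2)$ with $k_{11}=k_{22}=1$, $k_{12}=v_1$, $k_{21}=v_2$ (and checked the side hypothesis of Definition~\ref{def1}) in the detail the paper leaves implicit.
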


\begin{proof}
These results follow directly from Theorem~\ref{E-mainth}.
\end{proof}

\begin{lemma}\cite{Lv2020-2}\label{Hermitian}
    Let $f(x)$, $g(x)$, and $k(x)$ be polynomials over $\mathcal{R}$ (Here the base field is $\mathbb{F}_{q^2}$). Then, the following property holds for the Hermitian inner product:
    \begin{eqnarray*}
        \langle [f(x)g(x)], [k(x)] \rangle_H = \langle [g(x)], [\overline{f}^{[q]}(x) k(x)] \rangle_H.
    \end{eqnarray*}
\end{lemma}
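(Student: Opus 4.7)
The plan is to reduce the Hermitian adjoint identity to the already-established Euclidean adjoint identity (Lemma~\ref{Euclidean}) by means of the coordinatewise Frobenius map $u\mapsto u^{q}$, together with the polynomial properties of $(\cdot)^{[q]}$ recorded in Lemma~\ref{lemmakx}.

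First, I would record the bridge between the two inner products: for any vectors $\mathbf{u},\mathbf{v}\in\mathbb{F}_{q^{2}}^{m}$, the definition $\langle\mathbf{u},\mathbf{v}\rangle_{H}=\sum u_{i}^{q}v_{i}$ gives $\langle\mathbf{u},\mathbf{v}\rangle_{H}=\langle\mathbf{u}^{[q]},\mathbf{v}\rangle_{E}$, where $\mathbf{u}^{[q]}$ is the componentwise $q$-th power. In polynomial notation this says $\langle[a(x)],[b(x)]\rangle_{H}=\langle[a^{[q]}(x)],[b(x)]\rangle_{E}$ for any $a(x),b(x)\in\mathcal{R}$. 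Applying this to the left-hand side of the claim and using item (4) of Lemma~\ref{lemmakx}, namely $(f(x)g(x))^{[q]}=f^{[q]}(x)g^{[q]}(x)$, gives
\[
\langle[f(x)g(x)],[k(x)]\rangle_{H}=\langle[f^{[q]}(x)g^{[q]}(x)],[k(x)]\rangle_{E}.
\]

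Next, I would invoke Lemma~\ref{Euclidean} with $f^{[q]}(x)$ and $g^{[q]}(x)$ playing the roles of $f(x)$ and $g(x)$ there, to shift the factor $f^{[q]}(x)$ from the left argument to the right via the reversal involution:
\[
\langle[f^{[q]}(x)g^{[q]}(x)],[k(x)]\rangle_{E}=\langle[g^{[q]}(x)],[\overline{f^{[q]}}(x)\,k(x)]\rangle_{E}.
\]
Now I would translate the Euclidean inner product back to a Hermitian one by applying the bridge of the first paragraph to the left argument, using $[g^{[q]}(x)]=[g(x)]^{[q]}$, to obtain $\langle[g(x)],[\overline{f^{[q]}}(x)\,k(x)]\rangle_{H}$. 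Finally, Lemma~\ref{lemmakx}(5) gives $\overline{f^{[q]}}(x)=\overline{f}^{[q]}(x)$, yielding the desired right-hand side.

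I do not expect a serious conceptual obstacle: the proof is a short chain of rewrites. The only point that requires care is the asymmetry of the Hermitian form, which means the Frobenius $(\cdot)^{[q]}$ must be tracked on the \emph{left} argument when passing between the two inner products, while the bar-reversal $(\cdot)^{\overline{\phantom{f}}}$ always migrates across the inner product from whichever side carries the factor being moved. The commutativity of these two operations, guaranteed by items (4) and (5) of Lemma~\ref{lemmakx}, is precisely what allows the final expression to be written in the clean form $\overline{f}^{[q]}(x)$ rather than $\overline{f^{[q]}}(x)$.
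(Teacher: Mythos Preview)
The paper does not supply its own proof of this lemma; it is quoted from \cite{Lv2020-2} and left unproven. Your argument is correct and is the natural one: the bridge $\langle\mathbf{u},\mathbf{v}\rangle_{H}=\langle\mathbf{u}^{[q]},\mathbf{v}\rangle_{E}$ reduces the claim to Lemma~\ref{Euclidean}, and items (4) and (5) of Lemma~\ref{lemmakx} handle the bookkeeping of commuting the Frobenius with multiplication and with the bar involution. No gaps.
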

\begin{proposition}\label{Edual}
Let $\mathcal{C}$ be the QC code in Definition~\ref{definition1}. 
Then the generators of its dual codes under three types of inner products are given as follows:

\begin{enumerate}[label=\textbf{Case \Alph*.}, leftmargin=*]
  \item 
  
  The \textbf{Euclidean} dual code $\mathcal{C}^{\perp_E}$ is the QC code generated by
  \[
    \bigl([g_{1}^\perp(x)],\,[-\overline{v}_2(x)g_{1}^\perp(x)]\bigr)
    \quad \text{and} \quad
    \bigl([-\overline{v}_1(x)g_{2}^\perp(x)],\,[g_{2}^\perp(x)]\bigr).
  \]

  \item 
  
  The \textbf{Hermitian} dual code $\mathcal{C}^{\perp_H}$ is the QC code generated by
  \[
    \bigl([g_{1}^{\perp_H}(x)],\,[-\overline{v}_2^{[q]}(x)g_{1}^{\perp_H}(x)]\bigr)
    \quad \text{and} \quad
    \bigl([-\overline{v}_1^{[q]}(x)g_{2}^{\perp_H}(x)],\,[g_{2}^{\perp_H}(x)]\bigr).
  \]

  \item
  
  The \textbf{symplectic} dual code $\mathcal{C}^{\perp_S}$ is the QC code generated by
  \[
    \bigl([g_{2}^\perp(x)],\,[\overline{v}_1(x)g_{2}^\perp(x)]\bigr)
    \quad \text{and} \quad
    \bigl([\overline{v}_2(x)g_{1}^\perp(x)],\,[g_{1}^\perp(x)]\bigr).
  \]
\end{enumerate}

\end{proposition}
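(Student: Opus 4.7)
The plan is, for each of the three inner products, to establish $\mathcal{C}^\perp = \mathcal{C}'$ (with $\mathcal{C}'$ denoting the claimed QC code) by the standard two-step route: first show $\mathcal{C}' \subseteq \mathcal{C}^\perp$ and then match dimensions. For the containment I would use the polynomial reformulation of ``orthogonality with all $l$-shifts'' given by Lemma~\ref{transform} in the Euclidean case, its Hermitian analog (obtained by the same coefficient-of-$x^0$ comparison, replacing each $u_i(x)$ by $u_i^{[q]}(x)$), and the symplectic analog already derived inside the proof of Theorem~\ref{E-mainth}. Throughout I would exploit the bar, star and $[q]$ calculus collected in Lemma~\ref{lemmakx}, in particular $\overline{k_1 k_2} = \overline{k_1}\,\overline{k_2}$, $\overline{\overline{k}} = k$, $\overline{f}^{[q]} = \overline{f^{[q]}}$, and $g_i^{\perp_H} = (g_i^\perp)^{[q]}$.

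With these tools, each case reduces to verifying four polynomial identities, one for each pair formed from the two generators of $\mathcal{C}$ and the two generators of $\mathcal{C}'$. After expansion, each identity collapses either to an immediate cancellation (for example, the Euclidean pairing of $([g_1],[v_1g_1])$ against $([-\overline{v}_1 g_2^\perp],[g_2^\perp])$ produces $v_1 g_1\,\overline{g_2^\perp}-g_1\,v_1\,\overline{g_2^\perp}=0$) or to an expression of the form $g_i(x)\,\overline{g_i^\perp(x)}\,(1-v_1(x)v_2(x))\equiv 0\pmod{x^m-1}$. The decisive fact is $g_i(x)\,\overline{g_i^\perp(x)}\equiv 0\pmod{x^m-1}$, which is the polynomial encoding of Lemma~\ref{lemmakx}(1) applied with $a=b=1$ together with the cyclic-shift closure of the dual code, via Lemma~\ref{transform} at $l=1$. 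Its Hermitian counterpart $g_i^{[q]}(x)\,\overline{g_i^{\perp_H}(x)}\equiv 0$ follows either directly from Lemma~\ref{lemmakx}(1) for the Hermitian dual, or by applying Frobenius to the Euclidean identity (legitimate because $x^m-1\in\mathbb{F}_q[x]$ is Frobenius-fixed). The symplectic case runs in parallel, using the $\omega=1$ specialization of the skew-symmetric polynomial condition extracted from the proof of Theorem~\ref{E-mainth} Case~C.

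For the dimension match, each candidate $\mathcal{C}'$ itself fits the template of Definition~\ref{definition1} with new data $\tilde g_i\in\{g_1^\perp,g_2^\perp\}$ and $\tilde v_i\in\{\pm\overline{v}_1,\pm\overline{v}_2\}$ (with a $[q]$-twist in the Hermitian case). The required coprimality $\gcd(\tilde v_1\tilde v_2-1,\,x^m-1)=1$ follows from the hypothesis $\gcd(v_1v_2-1,\,x^m-1)=1$ because the bar map (and the Frobenius $[q]$-map) is a ring automorphism of $\mathcal{R}$ preserving units, the sign flips being irrelevant. Proposition~\ref{generator matrix} then gives $\dim\mathcal{C}'=2m-\deg g_1^\perp-\deg g_2^\perp$, and since $h_i(0)\neq 0$ (as $0$ is not a root of $x^m-1$) one has $\deg g_i^\perp=\deg h_i=m-\deg g_i$, yielding $\dim\mathcal{C}'=\deg g_1+\deg g_2=2m-\dim\mathcal{C}=\dim\mathcal{C}^\perp$. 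The main obstacle I anticipate is the symplectic case, where the skew-symmetry of $\langle\cdot,\cdot\rangle_S$ forces the roles of $g_1^\perp$ and $g_2^\perp$ to interchange between the two generators of $\mathcal{C}'_S$; the bookkeeping must produce exactly the stated pairs $([g_2^\perp],[\overline{v}_1 g_2^\perp])$ and $([\overline{v}_2 g_1^\perp],[g_1^\perp])$, and one must re-verify that the coprimality and dimension count still go through cleanly under this swap.
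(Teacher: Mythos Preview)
Your proposal is correct and follows essentially the same two-step route as the paper: verify $\mathcal{C}'\subseteq\mathcal{C}^\perp$ by polynomial computation, then match dimensions via Proposition~\ref{generator matrix} after checking that the coprimality hypothesis transfers to the barred (and, in Case~B, $[q]$-twisted) data. The only cosmetic difference is that the paper works directly with the inner product of two arbitrary codewords and simplifies using Lemma~\ref{Euclidean} (respectively Lemma~\ref{Hermitian}), whereas you first pass to the polynomial criterion of Lemma~\ref{transform} and its Hermitian/symplectic analogs and then check the four generator-pair identities; these are equivalent packagings of the same calculation, and the key vanishing $g_i(x)\,\overline{g_i^\perp}(x)\equiv 0\pmod{x^m-1}$ that you isolate is exactly what makes the paper's expanded inner product in Eq.~\eqref{eq4.1} collapse to zero.
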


\begin{proof}
    Let $\mathcal{D}_1$ denote the QC code generated by $([g_{1}^\perp(x)], [-\overline{v_2}(x)g_{1}^\perp(x)])$ and $([-\overline{v_1}(x)g_{2}^\perp(x)], [g_{2}^\perp(x)])$. It is easy to get $\gcd(\overline{v_1}(x) \overline{v_2}(x) - 1, x^m - 1) = 1$ from $\gcd(v_1(x)v_2(x) - 1, x^m - 1) = 1$, which yields that 
    \[
    \dim(\mathcal{D}_1) = 2m - \deg(g_{1}^\perp(x)) - \deg(g_{2}^\perp(x)) = \deg(g_{1}(x)) + \deg(g_2(x)).
    \]
    Let $\mathbf{c}$ be an arbitrary codeword in $\mathcal{C}$. Without loss of generality, we denote 
    \begin{align*}
    \mathbf{c} &= a_1(x)([g_1(x)], [v_1(x)g_1(x)]) + a_2(x)([v_2(x)g_2(x)], [g_2(x)]) \\
               &= ([a_1(x)g_1(x) + a_2(x)v_2(x)g_2(x)],\ [a_1(x)v_1(x)g_1(x) + a_2(x)g_2(x)]),
    \end{align*}
    where $a_1(x), a_2(x) \in \mathcal{R}$. 
    Similarly, let $\mathbf{d}_1$ be an arbitrary codeword in $\mathcal{D}_1$ and write it as 
    \begin{align*}
    \mathbf{d}_1 &= b_1(x)([g_{1}^\perp(x)], [-\overline{v_2}(x)g_{1}^\perp(x)]) + b_2(x)([-\overline{v_1}(x)g_{2}^\perp(x)], [g_{2}^\perp(x)])\\
    &= ([b_1(x)g_{1}^\perp(x) - b_2(x)\overline{v_1}(x)g_{2}^\perp(x)], [-b_1(x)\overline{v_2}(x)g_{1}^\perp(x) + b_2(x)g_{2}^\perp(x)]),
    \end{align*}
    where $b_1(x), b_2(x) \in \mathcal{R}$. 
    Then the Euclidean inner product between $\mathbf{c}$ and $\mathbf{d}_1$ is given by
    \begin{align}
        \langle\mathbf{c},\mathbf{d}_1\rangle_{E} 
        &= \langle [a_1(x)g_1(x) + a_2(x)v_2(x)g_2(x)], [b_1(x)g_{1}^\perp(x) - b_2(x)\overline{v_1}(x)g_{2}^\perp(x) ]\rangle_{E}\nonumber \\
        & + \langle [a_1(x)v_1(x)g_1(x) + a_2(x)g_2(x)], [-b_1(x)\overline{v_2}(x)g_{1}^\perp(x) + b_2(x)g_{2}^\perp(x) ]\rangle_{E}\nonumber \\
        &= \langle [a_1(x)g_1(x)], [-b_2(x)\overline{v_1}(x)g_{2}^\perp(x)] \rangle_{E} 
        + \langle [a_2(x)v_2(x)g_2(x)], [b_1(x)g_{1}^\perp(x)] \rangle_{E}\nonumber \\
        & + \langle [a_1(x)v_1(x)g_1(x)], [b_2(x)g_{2}^\perp(x)] \rangle_{E} 
        + \langle [a_2(x)g_2(x)], [-b_1(x)\overline{v_2}(x)g_{1}^\perp(x)] \rangle_{E}.\label{eq4.1}
    \end{align}
    According to Lemma~\ref{Euclidean}, it can be verified that Eq.~\eqref{eq4.1} is equal to zero. Thus, we have $\mathcal{D}_1 \subseteq \mathcal{C}^{\perp_E}$. Furthermore, as $\dim(\mathcal{D}_1) = \dim(\mathcal{C}^{\perp_E})$, we conclude that $\mathcal{D}_1 = \mathcal{C}^{\perp_E}$.
The other two cases can be proved in a similar manner, and the details are left to the reader.
\end{proof}

\begin{theorem}\label{Edual-containing}
Let $\mathcal{C}$ be the QC code in Definition~\ref{definition1}. 
Then $\mathcal{C}$ is dual-containing with respect to the following inner products 
if and only if all the corresponding conditions listed in each case hold simultaneously:

\begin{enumerate}[label=\textbf{Case \Alph*.}, leftmargin=*]
  \item For the \textbf{Euclidean inner product}, the conditions are given by
  \begin{align*}
    &(1.1)\quad g_{1}(x)\mid g_{1}^\perp(x)\bigl(1+\overline{v_2}(x)v_2(x)\bigr),\\
    &(1.2)\quad g_{1}(x)\mid g_{2}^\perp(x)\bigl(\overline{v_1}(x)+v_2(x)\bigr),\\
    &(1.3)\quad g_{2}(x)\mid g_{2}^\perp(x)\bigl(1+\overline{v_1}(x)v_1(x)\bigr).
  \end{align*}

  \item For the \textbf{Hermitian inner product}, the conditions are given by
  \begin{align*}
    &(2.1)\quad g_1(x)\mid g_{1}^{\perp_H}(x)\bigl(1+v_2(x)\overline{v}_2^{[q]}(x)\bigr),\\
    &(2.2)\quad g_1(x)\mid g_{2}^{\perp_H}(x)\bigl(\overline{v}_{1}^{[q]}(x)+v_2(x)\bigr),\\
    &(2.3)\quad g_2(x)\mid g_{2}^{\perp_H}(x)\bigl(1+v_1(x)\overline{v_1}^{[q]}(x)\bigr).
  \end{align*}

  \item For the \textbf{symplectic inner product}, the condition are given by
  \begin{align*}
    &(3.1)\quad g_{2}(x)\mid g_{2}^\perp(x)\bigl(\overline{v_1}(x)-v_1(x)\bigr),\\
    &(3.2)\quad g_{2}(x)\mid g_{1}^\perp(x)\bigl(1-\overline{v_1}(x)v_2(x)\bigr),\\
    &(3.3)\quad g_{1}(x)\mid g_{1}^\perp(x)\bigl(v_2(x)-\overline{v_2}(x)\bigr).
  \end{align*}
\end{enumerate}

\end{theorem}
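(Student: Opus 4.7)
The plan is to reduce the dual-containing problem to a self-orthogonal one: for each inner product, $\mathcal{C}$ is dual-containing if and only if $\mathcal{C}^\perp$ is self-orthogonal under the same inner product. Since Proposition~\ref{Edual} exhibits $\mathcal{C}^\perp$ (in each of the three cases) as a 2-generator QC code with the same structural form as in Definition~\ref{definition1}, I would apply Theorem~\ref{Eself-orthogonal} directly to $\mathcal{C}^\perp$. For Case A (Euclidean), I would identify $\mathcal{C}^{\perp_E}$ with the QC code in Definition~\ref{definition1} having parameters
\[
\tilde{g}_1 = g_1^\perp,\quad \tilde{g}_2 = g_2^\perp,\quad \tilde{v}_1 = -\overline{v}_2,\quad \tilde{v}_2 = -\overline{v}_1,
\]
with analogous substitutions in Cases B and C. Before applying the theorem, I would verify the compatibility condition $\gcd(\tilde{v}_1\tilde{v}_2 - 1, x^m-1)=1$; this reduces to $\gcd(\overline{v_1 v_2 - 1}, x^m-1)=1$, which follows from the standing hypothesis $\gcd(v_1 v_2 - 1, x^m-1)=1$ since the involution $\overline{\cdot}$ permutes the $m$-th roots of unity.

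Applying Theorem~\ref{Eself-orthogonal} in this way would produce, in each case, three divisibility conditions of the form $\tilde{h}_r(x) \mid \overline{\tilde{g}_s}(x)\,Q_{rs}$ with $\tilde{h}_r = (x^m-1)/\tilde{g}_r$ and $Q_{rs}$ built from the $\tilde{v}_i$. The crux of the proof is rewriting these ``tilde'' conditions into the $g_i$-form of the statement. My key algebraic observation is that for any $k(x) \mid x^m - 1$, the polynomial $\overline{(x^m-1)/k^\perp}$ is associate in $\mathcal{R}$ to $k$: writing $k^\perp = f(0)^{-1}f^*$ with $f = (x^m-1)/k$, a direct computation yields $(x^m-1)/k^\perp = -f(0)\,k^*$, and Lemma~\ref{lemmakx}(2) then identifies $\overline{k^*}$ with $k$ up to multiplication by a unit in $\mathcal{R}$. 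I would then apply the involution $\overline{\cdot}$ to each of the three conditions, which converts divisibility by $\tilde{h}_r$ into divisibility by $g_r$ (or by $g_{3-r}$ in the symplectic case, because of the swap of roles in Proposition~\ref{Edual}(C)), and simultaneously transforms the right-hand sides via $\overline{\overline{v}_i} = v_i$ and the multiplicative properties in Lemma~\ref{lemmakx}(3)-(4). For the Hermitian case, properties (5) and (6) of Lemma~\ref{lemmakx} will additionally let me commute the Frobenius $[q]$ through $\overline{\cdot}$ and $(\cdot)^\perp$, yielding the $g_i^{\perp_H}$-form in (2.1)--(2.3).

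The main obstacle will be the bookkeeping in this translation step: I must track how $g_i^\perp$, $\overline{g_i^\perp}$, $\tilde{h}_i$, and $g_i$ relate as elements of $\mathcal{R}$, keeping in mind that ``associate in $\mathcal{R}$'' means differing by a nonzero constant times a power of $x$, which is harmless for divisibility but must be made explicit to ensure the signs and exponents are correctly absorbed. An additional subtlety in the symplectic case (Case C) is that Proposition~\ref{Edual} exchanges the roles of $g_1$ and $g_2$ between the two generators of $\mathcal{C}^{\perp_S}$, which is precisely what produces $(g_2, g_2^\perp)$ in condition (3.1), $(g_1, g_1^\perp)$ in condition (3.3), and the mixed ``cross'' condition (3.2).
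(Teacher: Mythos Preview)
Your proposal is correct and follows essentially the same route as the paper: both arguments reduce dual-containing to self-orthogonality of $\mathcal{C}^\perp$, invoke Proposition~\ref{Edual} to obtain explicit generators for the dual, apply the self-orthogonality criterion (the paper cites the general Theorem~\ref{E-mainth} rather than its specialization Theorem~\ref{Eself-orthogonal}, but the resulting divisibility conditions are identical), and then translate $g_i^*\mid \overline{g_j^\perp}(\cdots)$ into $g_i\mid g_j^\perp(\cdots)$ via Lemma~\ref{lemmakx}(2) and the fact that $0$ is not a root of $x^m-1$. Your explicit verification of the compatibility hypothesis $\gcd(\tilde v_1\tilde v_2-1,x^m-1)=1$ is a small addition the paper leaves implicit.
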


\begin{proof}
    Note that $\frac{x^m - 1}{g_1^{\perp}(x)} = g_{1}^*(x)$. By replacing $\mathcal{C}$ in Theorem~\ref{E-mainth} with $\mathcal{C}^{\perp_E}$ in Proposition~\ref{Edual}, we obtain that $\mathcal{C}$ is Euclidean dual-containing if and only if the following conditions hold simultaneously:
    \begin{align*}
        &(1)\enspace g_{1}^*(x)\mid \overline{g_{1}^\perp}(x)(1+v_2(x)\overline{v_2}(x)),\\
        &(2)\enspace g_{1}^*(x)\mid \overline{g_{2}^\perp}(x)(v_1(x)+\overline{v_2}(x)),\\
        &(3)\enspace g_{2}^*(x)\mid \overline{g_{2}^\perp}(x)(1+v_1(x)\overline{v_1}(x)).
    \end{align*}
   It suffices to show that each of the three conditions stated here is equivalent to the corresponding condition in the theorem. We illustrate this using the first condition as an example; the remaining cases can be proved in a similar manner. 
    By (2) of Lemma~\ref{lemmakx}, we have 
    \begin{align*}
        \overline{g_1^{\perp}}(x)(1+v_2(x)\overline{v_2}(x))&=\overline{g_1^{\perp}(x)(1+\overline{v_2}(x)v_2(x))}\\&=x^{m-\deg(g_1^{\perp}(x)(1+\overline{v_2}(x)v_2(x)))}\left({g_1^{\perp}(x)(1+\overline{v_2}(x)v_2(x))}\right)^*.
    \end{align*}
    Furthermore, \(0\) is not a root of \(x^m - 1\), so it is also not a root of \(g_1^*(x)\). Therefore, we conclude that
 $g_{1}^*(x)\mid \overline{g_{1}^\perp}(x)(1+v_2(x)\overline{v_2}(x))$ is equivalent to $g_{1}(x) \mid g_{1}^\perp(x)(1+\overline{v_2}(x)v_2(x))$.
  The result for the Euclidean inner product case follows. By similar arguments, the results for the remaining two inner products can also be obtained and are therefore omitted.
\end{proof}
\subsection{Comparison with existing results}

In this subsection, we perform a comparative analysis between the results obtained above and those established in prior work. In particular, by assigning different values to $v_2(x)$ and giving specific examples, we can not only recover the results reported in the literature but also obtain many codes that cannot be derived from existing work. 
\subsubsection{QC codes for $v_2(x)=0$ in Definition \ref{definition1}}
\begin{definition}\label{definitionv2=0}
    Let $\mathcal{D}_1$ be a QC code over $\mathbb{F}_q$ (It is defined over $\mathbb{F}_{q^2}$ when the Hermitian inner product is considered) of length $2m$, generated by $([g_1(x)], [v_1(x)g_1(x)])$ and $(0,[g_2(x)])$, where $g_1(x)$, $g_2(x)$ and $v_1(x)$ are monic polynomials in $\mathcal{R}=\mathbb{F}_q[x]/(x^m-1)$ such that both $g_1(x)$ and $g_2(x)$ divide $x^m-1$.
\end{definition}

By substituting $v_2(x) = 0$ in Theorem \ref{Edual-containing}, we get the following three corollaries.
\begin{corollary}\label{v2=0Edual-co}
    The QC code $\mathcal{D}_1$ in Definition~\ref{definitionv2=0} is Euclidean dual-containing if and only if all the following conditions hold simultaneously:
    \begin{align*}
       (1)\enspace g_1(x)\mid g_{1}^\perp(x), \quad (2)\enspace g_1(x)\mid g_{2}^\perp(x)\overline{v_1}(x), \quad (3)\enspace g_2(x)\mid g_{2}^{\perp}(x)(1+\overline{v_1}(x)v_1(x)).
    \end{align*}
\end{corollary}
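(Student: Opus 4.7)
The plan is to obtain this corollary as a direct specialization of Theorem \ref{Edual-containing}, Case A, by setting $v_2(x)=0$. First I would verify that the code $\mathcal{D}_1$ in Definition \ref{definitionv2=0} is indeed the instance of the code in Definition \ref{definition1} with $v_2(x)=0$, which requires checking the compatibility condition $\gcd(v_1(x)v_2(x)-1,\,x^m-1)=1$. With $v_2(x)=0$ this becomes $\gcd(-1,\,x^m-1)=1$, which holds trivially. Thus all prior results about the class in Definition \ref{definition1}, and Theorem \ref{Edual-containing} in particular, apply to $\mathcal{D}_1$.

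Next I would substitute $v_2(x)=0$ directly into the three conditions (1.1)--(1.3) of Case A of Theorem \ref{Edual-containing}. Since $\overline{v_2}(x)=0$, condition (1.1) simplifies from $g_1(x)\mid g_1^\perp(x)\bigl(1+\overline{v_2}(x)v_2(x)\bigr)$ to $g_1(x)\mid g_1^\perp(x)$; condition (1.2) simplifies from $g_1(x)\mid g_2^\perp(x)\bigl(\overline{v_1}(x)+v_2(x)\bigr)$ to $g_1(x)\mid g_2^\perp(x)\overline{v_1}(x)$; and condition (1.3) is unchanged since it does not involve $v_2(x)$. These are precisely the three conditions stated in the corollary.

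There is no real obstacle here beyond a careful bookkeeping of the substitution, so the proof reduces to a single line invoking Theorem \ref{Edual-containing} with $v_2(x)=0$ and observing that the definitional hypothesis on $v_1 v_2 - 1$ is automatically met. I would write the proof accordingly, keeping it brief and referring the reader to Case A of Theorem \ref{Edual-containing} for the underlying derivation.
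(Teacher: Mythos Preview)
Your proposal is correct and matches the paper's own approach exactly: the paper simply states that the corollary follows by substituting $v_2(x)=0$ into Theorem~\ref{Edual-containing}. Your additional verification that $\gcd(v_1(x)v_2(x)-1,\,x^m-1)=\gcd(-1,\,x^m-1)=1$ holds automatically is a nice touch that makes the specialization fully rigorous.
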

\begin{remark}
   The condition $g_1(x)\mid g_2(x)\mid g_{2}^{\perp}(x)\mid g_{1}^{\perp}(x)$, as a sufficient condition for $\mathcal{D}_1$ to be Euclidean dual-containing given in Proposition 11 of~\cite{Galindo2018}, is a special case of the necessary and sufficient conditions in Corollary~\ref{v2=0Edual-co}. This can also be seen from the following example.
\end{remark}
\begin{example}
     Let $q=2$, $m=15$, $g_1(x)=x^4 + x + 1$, $g_2(x)=x^7 + x^6 + x^4 + 1$, $v_1(x)=x^{14} + x^3 + x^2$, and $C$ be the QC code generated by $(g_1(x),v_1(x)g_1(x))$ and $(0,g_2(x))$. Upon verification, the polynomials $g_1(x)$, $g_2(x)$, and $v_1(x)$ satisfy the divisibility conditions stated in Corollary~\ref{v2=0Edual-co}, which implies that the code $C$ is Euclidean dual-containing. However, they do not satisfy the condition $g_1(x)\mid g_2(x)\mid g_{2}^{\perp}(x)\mid g_{1}^{\perp}(x)$.
\end{example}

\begin{corollary}\label{v2=0Hdual-co}
    The QC code $\mathcal{D}_1$ in Definition~\ref{definitionv2=0} is Hermitian dual-containing if and only if all the following conditions hold simultaneously:
    \begin{eqnarray*}
        (1)\enspace g_1(x)\mid g_{1}^{\perp_H}(x),\enspace (2)\enspace g_1(x)\mid g_{2}^{\perp_H}(x)\overline{v}_{1}^{[q]}(x), \enspace (3)\enspace g_2(x)\mid g_{2}^{\perp_H}(x)(1+v_1(x) \overline{v}_1^{[q]}(x)).
    \end{eqnarray*}
    
\end{corollary}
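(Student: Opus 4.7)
The plan is to obtain Corollary \ref{v2=0Hdual-co} as an immediate specialization of Case B of Theorem \ref{Edual-containing} with $v_2(x)=0$. First I would verify that the QC code $\mathcal{D}_1$ in Definition \ref{definitionv2=0} is exactly the code in Definition \ref{definition1} under this substitution: setting $v_2(x)=0$ collapses the second generator $([v_2(x)g_2(x)],[g_2(x)])$ to $(0,[g_2(x)])$, and the compatibility condition $\gcd(v_1(x)v_2(x)-1,x^m-1)=1$ becomes $\gcd(-1,x^m-1)=1$, which holds trivially. Hence the hypotheses of Theorem \ref{Edual-containing} are automatically met, and its Hermitian biconditional applies verbatim to $\mathcal{D}_1$.

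Next, I would substitute $v_2(x)=0$ into the three Hermitian dual-containing conditions (2.1)--(2.3). In (2.1), the factor $1+v_2(x)\overline{v}_2^{[q]}(x)$ collapses to $1$, giving $g_1(x)\mid g_{1}^{\perp_H}(x)$, which is item (1) of the corollary. In (2.2), the factor $\overline{v}_1^{[q]}(x)+v_2(x)$ becomes $\overline{v}_1^{[q]}(x)$, giving $g_1(x)\mid g_{2}^{\perp_H}(x)\overline{v}_1^{[q]}(x)$, which is item (2). Condition (2.3) does not involve $v_2(x)$ at all, and therefore carries over unchanged as $g_2(x)\mid g_{2}^{\perp_H}(x)(1+v_1(x)\overline{v}_1^{[q]}(x))$, which is item (3).

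Because Theorem \ref{Edual-containing} is itself a biconditional, the conjunction of these three simplified conditions is equivalent to $\mathcal{D}_1^{\perp_H}\subseteq \mathcal{D}_1$, i.e.\ to $\mathcal{D}_1$ being Hermitian dual-containing. There is no real technical obstacle here: the corollary is a pure specialization, and the only sanity check needed is that the coprimality hypothesis from Definition \ref{definition1} is not violated when $v_2(x)=0$, which was handled in the first step. An alternative but essentially equivalent route would be to write down $\mathcal{D}_1^{\perp_H}$ directly from Proposition \ref{Edual} (specialized at $v_2(x)=0$), then apply Case B of Theorem \ref{E-mainth} to its generators; this yields the same three divisibility relations after using parts (5)--(6) of Lemma \ref{lemmakx} to convert between the $\overline{(\cdot)}$, $(\cdot)^{[q]}$, and $(\cdot)^{\perp}$ operations.
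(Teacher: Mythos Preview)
Your proposal is correct and matches the paper's own approach exactly: the paper states just before Corollary~\ref{v2=0Edual-co} that the three corollaries (including this one) are obtained ``by substituting $v_2(x)=0$ in Theorem~\ref{Edual-containing}.'' Your check that the coprimality hypothesis $\gcd(v_1(x)v_2(x)-1,x^m-1)=1$ becomes trivially $\gcd(-1,x^m-1)=1$ is a useful sanity step that the paper leaves implicit.
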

\begin{remark}   
The condition $g_1(x)\mid g_2(x)\mid g_{2}^{\perp_H}(x)\mid g_{1}^{\perp_H}(x)$, as a sufficient condition for $\mathcal{D}_1$ to be Hermitian dual-containing given in Proposition 15 of~\cite{Galindo2018}, is a special case of the necessary and sufficient conditions in Corollary~\ref{v2=0Hdual-co}. This can also be seen from the following example.
\end{remark}
\begin{example}
    Let $q=2$, $\mathbb{F}_4=\mathbb{F}_2(w)$, $m=5$, $g_1(x)=x^2 + wx + 1$, $g_2(x)=x + 1$, $v_1(x)=x^4$, and $C$ be the QC code generated by $(g_1(x),v_1(x)g_1(x))$ and $(0,g_2(x))$. Upon verification, the polynomials $g_1(x)$, $g_2(x)$, and $v_1(x)$ satisfy the divisibility conditions stated in Corollary~\ref{v2=0Hdual-co}, which implies that the code $C$ is Hermitian dual-containing. However, they do not satisfy the condition $g_1(x)\mid g_2(x)\mid g_{2}^{\perp_H}(x)\mid g_{1}^{\perp_H}(x)$.
\end{example}
\begin{corollary}\label{v2=0Sself-co}
    The QC code $\mathcal{D}_1$ in Definition~\ref{definitionv2=0} is symplectic dual-containing if and only if all the following conditions hold simultaneously.
    \begin{align*}
    (1)\enspace g_{2}(x) \mid g_{2}^\perp(x)(\overline{v_1}(x)-v_1(x)),\enspace(2)\enspace g_{2}(x) \mid g_{1}^\perp(x).
    \end{align*}
\end{corollary}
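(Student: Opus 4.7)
The plan is to derive the corollary as a direct specialization of Case~C of Theorem~\ref{Edual-containing} by setting $v_2(x)=0$. First I would check that the code $\mathcal{D}_1$ of Definition~\ref{definitionv2=0} is exactly the instance of the code of Definition~\ref{definition1} obtained under $v_2(x)=0$: the second generator then becomes $([0\cdot g_2(x)],\,[g_2(x)])=(0,\,[g_2(x)])$, which matches. The coprimality hypothesis required in Definition~\ref{definition1} remains valid, since $v_1(x)\cdot 0-1=-1$ and $\gcd(-1,\,x^m-1)=1$. Consequently Theorem~\ref{Edual-containing} can be invoked directly.

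Next I would substitute $v_2(x)=0$ into each of the three symplectic dual-containing conditions (3.1)--(3.3) of that theorem. Condition (3.1), $g_2(x)\mid g_2^{\perp}(x)(\overline{v_1}(x)-v_1(x))$, does not involve $v_2(x)$ at all and is kept unchanged, yielding condition~(1) of the corollary. Condition (3.2), $g_2(x)\mid g_1^{\perp}(x)(1-\overline{v_1}(x)v_2(x))$, simplifies to $g_2(x)\mid g_1^{\perp}(x)$, which is condition~(2). Condition (3.3), $g_1(x)\mid g_1^{\perp}(x)(v_2(x)-\overline{v_2}(x))$, becomes $g_1(x)\mid 0$ (using $\overline{0}=0$ from property~(3) of Lemma~\ref{lemmakx}), and is therefore vacuously satisfied and can be discarded.

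Combining these three reductions yields precisely the two divisibility conditions stated in the corollary, and the asserted equivalence follows from Theorem~\ref{Edual-containing}. There is no genuine obstacle: the only subtlety is to confirm that the standing hypotheses of Definition~\ref{definition1} remain meaningful under the degenerate specialization $v_2(x)=0$, so that Case~C of Theorem~\ref{Edual-containing} may be applied verbatim.
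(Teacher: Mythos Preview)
Your proposal is correct and follows exactly the paper's own approach: the corollary is obtained by substituting $v_2(x)=0$ into Case~C of Theorem~\ref{Edual-containing}, with condition~(3.3) becoming vacuous and (3.1), (3.2) yielding the two stated divisibility conditions. The only minor point is that Definition~\ref{definition1} formally requires $v_2(x)$ to be monic, but this normalization plays no role in the proofs of Proposition~\ref{Edual} or Theorem~\ref{Edual-containing}, so the specialization is indeed valid.
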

\begin{remark}
   The condition $g_1(x)\mid g_2(x)\mid g_{2}^{\perp}(x)\mid g_{1}^{\perp}(x)$, 
   as a sufficient condition for $\mathcal{D}_1$ to be symplectic dual-containing given in Theorem 5 (case (ii)) of~\cite{Galindo2018}, is a special case of the necessary and sufficient conditions in Corollary~\ref{v2=0Sself-co}. This can also be seen from the following example.
\end{remark}
\begin{example}
    Let $q=3$, $m=8$, $g_1(x)=x^5 + 2x^3 + 2x^2 + x + 2$, $g_2(x)=x + 2$, $v_1(x)=x^7 + 1$, and $C$ be the QC code generated by $(g_1(x),v_1(x)g_1(x))$ and $(0,g_2(x))$. Upon verification, the polynomials $g_1(x)$, $g_2(x)$, and $v_1(x)$ satisfy the divisibility conditions stated in Corollary~\ref{v2=0Sself-co}, which implies that the code $C$ is symplectic dual-containing. However, they do not satisfy the condition $g_1(x)\mid g_2(x)\mid g_{2}^{\perp}(x)\mid g_{1}^{\perp}(x)$.
\end{example}
\subsubsection{QC codes for $v_2(x)=1$ in Definition \ref{definition1}}
\begin{definition}\label{defv2=1}
    Let $\mathcal{D}_2$ be a QC code over $\mathbb{F}_q$ of length $2m$, generated by $([g_1(x)], [v_1(x)g_1(x)])$ and $([g_2(x)], [g_2(x)])$, where $g_1(x)$, $g_2(x)$, and $v_1(x)$ are monic polynomials in $\mathcal{R} = \mathbb{F}_q[x]/(x^m-1)$. Moreover, $g_1(x)$ and $g_2(x)$ are divisors of $x^m-1$, and $v_1(x)$ satisfies $\gcd(v_1(x)-1, x^m-1) = 1$.
\end{definition}
Substituting $v_2(x) = 1$ into Cases A and C of Theorem~\ref{Edual-containing} yields the following corollaries.
\begin{corollary}\label{v2=1Edual-co}
    The QC code $\mathcal{D}_2$ in Definition~\ref{defv2=1} is Euclidean dual-containing if and only if all of the following conditions hold simultaneously: 
    \begin{eqnarray*}
        (1)\enspace g_{1}(x) \mid 2g_{1}^\perp(x), \enspace  (2)\enspace g_{1}(x) \mid g_{2}^\perp(x)(\overline{v_1}(x)+1), \enspace (3) \enspace g_{2}(x) \mid g_{2}^\perp(x)(1+\overline{v_1}(x)v_1(x)).
    \end{eqnarray*}
    Specially, for $q = 2$, the above conditions can be simplified to $g_{1}(x) \mid g_{2}^\perp(x)(\overline{v_1}(x)+1)$ and $g_{2}(x) \mid g_{2}^\perp(x)(1+\overline{v_1}(x)v_1(x))$.
\end{corollary}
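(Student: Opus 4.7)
The proof plan is to specialize Case A of Theorem \ref{Edual-containing} to the setting $v_2(x) = 1$ and read off the resulting divisibility conditions. Since the corollary is essentially a direct substitution into an already-proved result, no new structural machinery is required; the task is mainly bookkeeping.

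The first step is to verify that the code $\mathcal{D}_2$ of Definition \ref{defv2=1} legitimately fits the framework of Definition \ref{definition1}. This amounts to checking the compatibility hypothesis $\gcd(v_1(x)v_2(x) - 1,\, x^m-1) = 1$ after setting $v_2(x)=1$, which collapses to $\gcd(v_1(x)-1,\, x^m-1) = 1$; this is precisely the assumption already placed on $v_1(x)$ in Definition \ref{defv2=1}. Thus Theorem \ref{Edual-containing} is applicable without additional hypotheses.

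Next I would perform the substitution in the three divisibility conditions (1.1)--(1.3) of that theorem. By part (3) of Lemma \ref{lemmakx} the conjugate operation is additive and multiplicative and fixes the constant polynomial, so $\overline{v_2}(x) = \overline{1} = 1$. Hence the factor $1 + \overline{v_2}(x) v_2(x)$ appearing in (1.1) simplifies to $2$, the factor $\overline{v_1}(x) + v_2(x)$ appearing in (1.2) becomes $\overline{v_1}(x) + 1$, and condition (1.3) does not involve $v_2(x)$ and is therefore inherited unchanged. These three substituted conditions coincide with the three divisibility relations stated in the corollary.

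For the binary specialization, the simplification is immediate: in $\mathbb{F}_2$ the element $2$ equals $0$, so $2g_1^\perp(x) = 0$ and the divisibility $g_1(x) \mid 2g_1^\perp(x)$ is trivially satisfied, leaving only the two remaining conditions. I do not foresee any substantive obstacle here; the only point requiring a moment's care is invoking the additive/multiplicative behavior of the bar map from Lemma \ref{lemmakx} and confirming the compatibility hypothesis before applying the parent theorem.
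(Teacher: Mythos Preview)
Your proposal is correct and follows exactly the paper's route: the paper simply states that Corollary~\ref{v2=1Edual-co} is obtained by substituting $v_2(x)=1$ into Case~A of Theorem~\ref{Edual-containing}, and your write-up carries out precisely that substitution, together with the obvious $q=2$ simplification.
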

\begin{remark}
    The conditions $g_2(x)\mid g_{1}^\perp(x)$, $g_2(x)\mid g_{2}^\perp(x)$, and $\overline{v_1}(x)=v_1(x)$, as sufficient conditions for $\mathcal{D}_2$ to be Euclidean dual-containing given in Lemma 10  (It is valid only for $q=2$) of \cite{duchao2023}, are a special case of the necessary and sufficient conditions in Corollary~\ref{v2=1Edual-co} (It is valid for any prime power $q$) because $g_1(x) \mid g_2^\perp(x)\iff g_2(x) \mid g_1^\perp(x)$. This can also be seen from the following example.
\end{remark}
\begin{example}
    Let $q=2$, $m=9$, $g_1(x)=x + 1$, $g_2(x)=x^2 + x + 1$, $v_1(x)=x^8+1$, and $C$ be the QC code generated by $(g_1(x),v_1(x)g_1(x))$ and $(g_2(x),g_2(x))$. Upon verification, the polynomials $g_1(x)$, $g_2(x)$, $v_1(x)$ satisfy the divisibility conditions stated in Corollary~\ref{v2=1Edual-co}, which implies that the code $C$ is Euclidean dual-containing. However, they do not satisfy the conditions $g_2(x)\mid g_{1}^\perp(x)$, $g_2(x)\mid g_{2}^\perp(x)$, and $\overline{v_1}(x)=v_1(x)$ simultaneously.
\end{example}

\begin{corollary}\label{v2=1Sdual-co}
    The QC code $\mathcal{D}_2$ in Definition~\ref{defv2=1} is symplectic dual-containing if and only if all of the following conditions hold simultaneously:
    \begin{eqnarray*}
     (1)\enspace g_{2}(x) \mid g_{2}^\perp(x)(\overline{v_1}(x)-v_1(x)), \enspace (2) \enspace g_{2}(x) \mid g_{1}^\perp(x)(1-\overline{v_1}(x)). 
    \end{eqnarray*}
\end{corollary}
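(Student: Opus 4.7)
The plan is to obtain Corollary \ref{v2=1Sdual-co} as a direct specialization of Case C of Theorem \ref{Edual-containing} by substituting $v_2(x) = 1$ into the three symplectic conditions (3.1), (3.2), (3.3), and then checking which reduce to trivial identities and which remain as genuine divisibility constraints.

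First I would recall that, by Definition \ref{defv2=1}, the code $\mathcal{D}_2$ is exactly the code of Definition \ref{definition1} with $v_2(x)=1$, and the hypothesis $\gcd(v_1(x)-1,x^m-1)=1$ is precisely the condition $\gcd(v_1(x)v_2(x)-1, x^m-1)=1$ required in Definition \ref{definition1}. Therefore Theorem \ref{Edual-containing} applies to $\mathcal{D}_2$. I would then substitute $v_2(x)=1$ into each of (3.1), (3.2), (3.3). Condition (3.1) does not involve $v_2(x)$ and is kept verbatim as $g_2(x)\mid g_2^\perp(x)(\overline{v_1}(x)-v_1(x))$. Condition (3.2) becomes $g_2(x)\mid g_1^\perp(x)(1-\overline{v_1}(x)\cdot 1)=g_1^\perp(x)(1-\overline{v_1}(x))$. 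Condition (3.3) becomes $g_1(x)\mid g_1^\perp(x)(1-\overline{1})=g_1^\perp(x)\cdot 0=0$, which is trivially satisfied.

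Hence the three symplectic dual-containing conditions collapse to exactly the two listed in Corollary \ref{v2=1Sdual-co}, and conversely if those two hold then (3.1), (3.2), (3.3) all hold, so $\mathcal{D}_2$ is symplectic dual-containing by Theorem \ref{Edual-containing}. No further computation is needed.

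There is essentially no hard step here, since the corollary is a pure specialization: the only point worth verifying carefully is that the compatibility hypothesis in Definition \ref{definition1} is preserved, and that $\overline{1}=1$ in $\mathcal{R}$ (immediate from the definition of $\overline{\,\cdot\,}$). The real work was already done in Theorems \ref{E-mainth} and \ref{Edual-containing}.
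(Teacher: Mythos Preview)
Your proof is correct and follows exactly the same approach as the paper, which simply states that the corollary is obtained by substituting $v_2(x)=1$ into Case C of Theorem~\ref{Edual-containing}. Your explicit verification that condition (3.3) becomes trivial because $\overline{1}=1$, and that the hypothesis of Definition~\ref{definition1} specializes correctly, is precisely the intended reasoning.
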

\begin{remark}
The conditions $g_1(x)\mid g_{2}^\perp(x)$ and $\overline{v_1}(x)=v_1(x)$, as sufficient conditions for $\mathcal{D}_2$ to be symplectic dual-containing stated in 
Proposition 7 of \cite{Lv2020}, are a special case of the necessary and sufficient conditions in Corollary~\ref{v2=1Sdual-co} because $g_1(x) \mid g_2^\perp(x)\iff g_2(x) \mid g_1^\perp(x)$. This can also be seen from the following example.
\end{remark}
\begin{example}
    Let $q=2$, $m=7$, $g_1(x)=x^3 + x + 1$, $g_2(x)=x + 1$, $v_1(x)=x^6 + 1$, and $C$ be the QC code generated by $(g_1(x),v_1(x)g_1(x))$ and $(g_2(x),g_2(x))$. Upon verification, the polynomials $g_1(x)$, $g_2(x)$, $v_1(x)$ satisfy the divisibility conditions stated in Corollary~\ref{v2=1Sdual-co}, which implies that the code $C$ is symplectic dual-containing. However, they do not satisfy the conditions $g_1(x)\mid g_{2}^\perp(x)$ and $\overline{v_1}(x)=v_1(x)$ simultaneously.
\end{example}
\subsubsection{QC codes for $v_1(x)=v_2(x)$ in Definition \ref{definition1}}
\begin{definition}\label{definitionv1=v2}
    Let $\mathcal{D}_3$ be the QC code over $\mathbb{F}_q$ (It is defined over $\mathbb{F}_{q^2}$ when the Hermitian inner product is considered) of length $2m$, generated by $([g_1(x)], [v(x)g_1(x)])$ and $([v(x)g_2(x)], [g_2(x)])$, where $g_1(x)$, $g_2(x)$, and $v(x)$ are monic polynomials in $\mathcal{R}=\mathbb{F}_q[x]/(x^m-1)$, such that $g_1(x)$ and $g_2(x)$ divide $x^m - 1$, and $\gcd(v^2(x) - 1, x^m - 1) = 1$.
\end{definition}
\begin{remark}
 For the definition of the QC code \( \mathcal{D}_3 \), Proposition 1 of \cite{Guan2022-2} requires that \( v(x) \) (denoted as \( t(x) \) there) is a polynomial over \( \mathcal{R}_2 = \mathbb{F}_{q^2}/(x^m - 1) \) such that \( \dim \mathcal{D}_3 = 2m - \deg(g_1(x)) - \deg(g_2(x)) \), but does not explicitly specify constraints on it. In our approach, We have included a concrete formulation of the condition, i.e., \( \gcd(v^2(x) - 1, x^m - 1) = 1 \), ensuring that \( \dim \mathcal{D}_3 = 2m - \deg(g_1(x)) - \deg(g_2(x)) \).  
\end{remark}
The following corollaries follow from Cases B and C of Theorem~\ref{Edual-containing} by setting $v_1(x) = v_2(x) = v(x)$.
\begin{corollary}\label{corollaryh}
    The QC code $\mathcal{D}_3$ in Definition~\ref{definitionv1=v2} is Hermitian dual-containing if and only if all of the following conditions hold simultaneously:
    \begin{align*}
        &(1)\enspace g_1(x)\mid g_{1}^{\perp_H}(x) (1+v(x) \overline{v}^{[q]}(x)),\\
        &(2)\enspace g_1(x)\mid g_{2}^{\perp_H}(x) (\overline{v}^{[q]}(x) + v(x)),\\  
        &(3)\enspace g_2(x)\mid g_{2}^{\perp_H}(x)(1+v(x) \overline{v}^{[q]}(x)).
    \end{align*}
\end{corollary}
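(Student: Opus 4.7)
The plan is to derive Corollary~\ref{corollaryh} directly as the specialization of Case B of Theorem~\ref{Edual-containing} corresponding to $v_1(x) = v_2(x) = v(x)$. First I would verify that the code $\mathcal{D}_3$ of Definition~\ref{definitionv1=v2} is indeed a legitimate instance of the general construction $\mathcal{C}$ in Definition~\ref{definition1}. The only hypothesis that needs checking is the nonsingularity condition $\gcd(v_1(x)v_2(x) - 1,\, x^m - 1) = 1$ appearing in Definition~\ref{definition1}; under the substitution $v_1 = v_2 = v$ this reduces precisely to $\gcd(v^2(x) - 1,\, x^m - 1) = 1$, which is exactly the hypothesis built into Definition~\ref{definitionv1=v2}. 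Consequently Proposition~\ref{generator matrix} still applies, giving $\dim \mathcal{D}_3 = 2m - \deg(g_1(x)) - \deg(g_2(x))$, and the dual code structure from Case B of Proposition~\ref{Edual} also transfers unchanged.

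Next I would simply substitute $v_1(x) = v_2(x) = v(x)$ into conditions (2.1)--(2.3) of Case B in Theorem~\ref{Edual-containing}. Condition (2.1) becomes $g_1(x)\mid g_{1}^{\perp_H}(x)(1 + v(x)\overline{v}^{[q]}(x))$; condition (2.2) becomes $g_1(x)\mid g_{2}^{\perp_H}(x)(\overline{v}^{[q]}(x) + v(x))$; and condition (2.3) becomes $g_2(x)\mid g_{2}^{\perp_H}(x)(1 + v(x)\overline{v}^{[q]}(x))$. These are precisely the three conditions asserted in Corollary~\ref{corollaryh}. Since Theorem~\ref{Edual-containing} establishes that (2.1)--(2.3) are jointly necessary and sufficient for Hermitian dual-containment of $\mathcal{C}$, and $\mathcal{D}_3$ is exactly $\mathcal{C}$ with $v_1 = v_2 = v$, the biconditional transfers immediately.

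The substantive content has already been absorbed into Theorem~\ref{Edual-containing}, whose proof invokes Theorem~\ref{E-mainth} applied to the Hermitian dual code described in Case B of Proposition~\ref{Edual}, together with the identities collected in Lemma~\ref{lemmakx} and the Hermitian conjugation rule of Lemma~\ref{Hermitian}. The only minor point that warrants care in writing up the corollary is to make the substitution $v_1 = v_2 = v$ coherent with the symmetry of the polynomials $\overline{v}_1^{[q]}$ and $\overline{v}_2^{[q]}$ appearing in (2.1)--(2.3), but since both collapse to the single polynomial $\overline{v}^{[q]}(x)$, no further manipulation is required. Hence there is no real obstacle beyond a clean invocation of the umbrella theorem.
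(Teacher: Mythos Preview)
Your proposal is correct and matches the paper's own treatment: the paper derives Corollary~\ref{corollaryh} simply by setting $v_1(x)=v_2(x)=v(x)$ in Case~B of Theorem~\ref{Edual-containing}, exactly as you outline. Your additional check that the hypothesis $\gcd(v_1v_2-1,x^m-1)=1$ specializes to $\gcd(v^2-1,x^m-1)=1$ is a nice explicit verification that the paper leaves implicit.
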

\begin{remark}
    The conditions $g_1(x)\mid g_{1}^{\perp_H}(x)$, $g_1(x)\mid g_{2}^{\perp_H}(x) (\overline{v}^{[q]}(x) + v(x))$, and $g_2(x)\mid g_{2}^{\perp_H}(x)$, as sufficient conditions for $\mathcal{D}_3$ to be Hermitian dual-containing stated in Proposition~2 of~\cite{Guan2022-2}, are special cases of the necessary and sufficient conditions in Corollary \ref{corollaryh}. This can also be seen from the following example.
\end{remark}
\begin{example}
    Let $q=2$, $\mathbb{F}_4=\mathbb{F}_2(w)$, $m=5$, $g_1(x)=x^2 + wx + 1$, $g_2(x)=x^2 + wx + 1$, $v(x)=x^4 + w$, and $C$ be the QC code generated by $(g_1(x),v(x)g_1(x))$ and $(v(x)g_2(x),g_2(x))$. Upon verification, the polynomials $g_1(x)$, $g_2(x)$, $v(x)$ satisfy the divisibility conditions given in Corollary~\ref{corollaryh}, which implies that the code $C$ is Hermitian dual-containing. However, they do not satisfy the conditions $g_1(x)\mid g_{1}^{\perp_H}(x)$, $g_1(x)\mid g_{2}^{\perp_H}(x) (\overline{v}^{[q]}(x) + v(x))$, and $g_2(x)\mid g_{2}^{\perp_H}(x)$ simultaneously.
\end{example}
\begin{corollary}\label{corollarys}
    The QC code $\mathcal{D}_3$ in Definition~\ref{definitionv1=v2} is symplectic dual-containing if and only if all of the following conditions hold simultaneously:
    \begin{align*}
        &(1)\enspace g_{1}(x) \mid g_{1}^\perp(x)(v(x)-\overline{v}(x)),\\ &(2)\enspace g_{2}(x) \mid g_{1}^\perp (x)(1-\overline{v}(x)v(x)),\\ &(3)\enspace g_{2}(x) \mid g_{2}^\perp(x)(\overline{v}(x)-v(x)).
    \end{align*}
\end{corollary}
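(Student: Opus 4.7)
The plan is to obtain this corollary as a direct specialization of Case C of Theorem~\ref{Edual-containing} under the substitution $v_1(x) = v_2(x) = v(x)$. First I would check that $\mathcal{D}_3$, as introduced in Definition~\ref{definitionv1=v2}, is genuinely an instance of the family $\mathcal{C}$ in Definition~\ref{definition1}. The generator tuples $([g_1(x)],[v(x)g_1(x)])$ and $([v(x)g_2(x)],[g_2(x)])$ are literally those of Definition~\ref{definition1} after the above substitution. The only point that needs verification is the coprimality hypothesis: Definition~\ref{definition1} demands $\gcd(v_1(x)v_2(x)-1,x^m-1)=1$, which under $v_1=v_2=v$ reads $\gcd(v(x)^2-1,x^m-1)=1$ --- exactly the condition imposed on $v(x)$ in Definition~\ref{definitionv1=v2}. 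Hence the symplectic dual-containing criterion of Theorem~\ref{Edual-containing} may be invoked verbatim on $\mathcal{D}_3$.

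Next I would carry out the substitution in the three divisibility conditions (3.1)--(3.3) of Case C. Condition (3.1) reduces to $g_2(x)\mid g_2^\perp(x)(\overline{v}(x)-v(x))$, which is condition (3) of the corollary; condition (3.2) reduces to $g_2(x)\mid g_1^\perp(x)(1-\overline{v}(x)v(x))$, which is condition (2); and condition (3.3) reduces to $g_1(x)\mid g_1^\perp(x)(v(x)-\overline{v}(x))$, which is condition (1). A simple reordering then yields the claimed biconditional in both directions.

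Because the argument is purely a specialization, there is no substantive obstacle. The single point requiring care is the alignment of the coprimality hypotheses between the two definitions, handled in the first step; beyond that, the remainder is mechanical substitution. I would therefore present the proof in two short paragraphs: one confirming that Theorem~\ref{Edual-containing} is applicable to $\mathcal{D}_3$, and one performing the substitution and relabeling of the three conditions.
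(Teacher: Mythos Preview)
Your proposal is correct and matches the paper's approach exactly: the paper also states that this corollary follows from Case~C of Theorem~\ref{Edual-containing} by setting $v_1(x)=v_2(x)=v(x)$, and the substitution and relabeling you describe are precisely what is needed.
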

\begin{remark}
    The conditions $g_1(x) \mid g_2^\perp(x)$, $\gcd(g_1(x), g_2(x)) = 1$, and $\overline{v}(x)=v(x)$, as sufficient conditions for $D_3$
  to be symplectic dual-containing stated in Proposition 4 (It is valid only for $q=2$) of ~\cite{Guan2022-1}, are  special cases of the necessary and sufficient conditions in Corollary ~\ref{corollarys} (It is valid for any prime power $q$) because $g_1(x) \mid g_2^\perp(x)\iff g_2(x) \mid g_1^\perp(x)$. This can also be seen from the following example.
  
\end{remark}
\begin{example}
    Let $q=3$, $m=8$, $g_1(x)=x^5 + 2x^3 + 2x^2 + x + 2$, $g_2(x)= x^2 + 2x + 2$, $v(x)=x^7 + 2x^3$, and $C$ be the QC code generated by $(g_1(x),v(x)g_1(x))$ and $(v(x)g_2(x),g_2(x))$. Upon verification, the polynomials $g_1(x)$, $g_2(x)$, $v(x)$ satisfy the divisibility conditions given in Corollary~\ref{corollarys}, which implies that the code $C$ is symplectic dual-containing. However, they do not satisfy the conditions $g_1(x) \mid g_2^\perp(x)$, $\gcd(g_1(x), g_2(x)) = 1$, and $\overline{v}(x)=v(x)$ simultaneously.
\end{example}
\section{Construction of quantum stabilizer codes}\label{sec5}
This section extends the results established in the previous section to construct quantum stabilizer codes from self-orthogonal QC codes.
 With the aid of the algebraic computation software Magma ~\cite{magma}, we systematically compute the parameters of these codes, some of which match the best-known quantum error-correcting codes listed in the codetable of Grassl~\cite{codetables}. These results highlight the potential of QC codes in constructing good quantum codes.
To begin with, we provide the following lemmas.
\begin{lemma}\cite{Ashikhmin2001}\label{ECSS construction}
    Given two classical linear codes \( \mathcal{C}_1 \) and \( \mathcal{C}_2 \) of length $n$ over \( \mathbb{F}_q \) with \( \mathcal{C}_2 \subset \mathcal{C}_1 \), we can construct a stabilizer quantum code with parameters  
\[
[[n, \dim \mathcal{C}_1 - \dim \mathcal{C}_2, \min\{w(\mathcal{C}_1 \setminus \mathcal{C}_2), w(\mathcal{C}_2^\perp \setminus \mathcal{C}_1^\perp)\}]]_q.
\] 
\end{lemma}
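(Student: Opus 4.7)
The plan is to realize the quantum code via the standard CSS-type stabilizer construction and to verify the parameters $n$, $k$, and $d$ in turn. Using the equivalence $\mathcal{C}_2 \subset \mathcal{C}_1 \iff \mathcal{C}_1^{\perp_E} \subset \mathcal{C}_2^{\perp_E}$, I will form the stabilizer subgroup $S$ of the $q$-ary Pauli group on $(\mathbb{C}^q)^{\otimes n}$ generated by the $X$-type operators $X(\mathbf{a})=X^{a_1}\otimes\cdots\otimes X^{a_n}$ for $\mathbf{a}$ running over a basis of $\mathcal{C}_1^{\perp_E}$, together with the $Z$-type operators $Z(\mathbf{b})=Z^{b_1}\otimes\cdots\otimes Z^{b_n}$ for $\mathbf{b}$ running over a basis of $\mathcal{C}_2$. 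The target quantum code $Q$ will then be defined as the joint $+1$-eigenspace of $S$, which automatically has length $n$.

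The first step is to check that $S$ is abelian. Since $X(\mathbf{a})$ and $Z(\mathbf{b})$ commute if and only if $\langle \mathbf{a},\mathbf{b}\rangle_E=0$, the inclusion $\mathcal{C}_1^{\perp_E}\subset\mathcal{C}_2^{\perp_E}$ forces the symplectic pairing of every generator pair to vanish. Next, I count the dimension of $Q$ via the standard stabilizer formula $\dim Q = q^n/|S|$; since the generators are independent, $|S|=q^{\dim\mathcal{C}_1^{\perp_E}}\cdot q^{\dim\mathcal{C}_2}=q^{n-\dim\mathcal{C}_1+\dim\mathcal{C}_2}$, which yields $k=\dim\mathcal{C}_1-\dim\mathcal{C}_2$.

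For the distance, I will invoke the characterization that $d$ is the smallest weight of a Pauli operator lying in the centralizer $C(S)$ but outside $S$ (modulo scalars), where the weight of $X(\mathbf{a})Z(\mathbf{b})$ equals $|\mathrm{supp}(\mathbf{a})\cup\mathrm{supp}(\mathbf{b})|$. Under the symplectic correspondence, $C(S)$ consists of operators with $\mathbf{a}\in\mathcal{C}_2^{\perp_E}$ and $\mathbf{b}\in\mathcal{C}_1$, while $S$ itself corresponds to $\mathbf{a}\in\mathcal{C}_1^{\perp_E}$ and $\mathbf{b}\in\mathcal{C}_2$. Restricting to purely $X$-type logical operators ($\mathbf{a}=0$, $\mathbf{b}\in\mathcal{C}_1\setminus\mathcal{C}_2$) and purely $Z$-type logical operators ($\mathbf{b}=0$, $\mathbf{a}\in\mathcal{C}_2^{\perp_E}\setminus\mathcal{C}_1^{\perp_E}$) immediately furnishes witnesses of weights $w(\mathcal{C}_1\setminus\mathcal{C}_2)$ and $w(\mathcal{C}_2^{\perp_E}\setminus\mathcal{C}_1^{\perp_E})$, establishing the upper bound on $d$.

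The main obstacle I anticipate is the lower bound, namely ruling out the possibility that a mixed $XZ$-type logical operator has strictly smaller weight than both pure-type witnesses. Given a nontrivial $(\mathbf{a},\mathbf{b})\in C(S)\setminus S$, at least one of $\mathbf{b}\notin\mathcal{C}_2$ or $\mathbf{a}\notin\mathcal{C}_1^{\perp_E}$ must hold: in the first case $\mathbf{b}\in\mathcal{C}_1\setminus\mathcal{C}_2$ has weight at least $w(\mathcal{C}_1\setminus\mathcal{C}_2)$, and in the second case $\mathbf{a}\in\mathcal{C}_2^{\perp_E}\setminus\mathcal{C}_1^{\perp_E}$ has weight at least $w(\mathcal{C}_2^{\perp_E}\setminus\mathcal{C}_1^{\perp_E})$. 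Since the operator weight $|\mathrm{supp}(\mathbf{a})\cup\mathrm{supp}(\mathbf{b})|$ dominates both $w(\mathbf{a})$ and $w(\mathbf{b})$, the required lower bound transfers and the minimum in the statement is achieved, completing the proof.
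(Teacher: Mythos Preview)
The paper does not prove this lemma; it is quoted verbatim as a known result from Ashikhmin--Knill (reference~\cite{Ashikhmin2001}), so there is no in-paper argument to compare against. Your proposal is the standard CSS/stabilizer derivation and is essentially correct, matching what one finds in the cited source.

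Two minor points worth cleaning up. First, you have swapped the labels when exhibiting witnesses for the upper bound: a ``purely $X$-type'' logical operator has $\mathbf{b}=0$ and $\mathbf{a}\in\mathcal{C}_2^{\perp_E}\setminus\mathcal{C}_1^{\perp_E}$, while a ``purely $Z$-type'' one has $\mathbf{a}=0$ and $\mathbf{b}\in\mathcal{C}_1\setminus\mathcal{C}_2$; your conditions are the right ones but attached to the wrong names. Second, for non-prime $q=p^\gamma$ the commutation rule is $X(\mathbf{a})Z(\mathbf{b})=\omega^{\operatorname{tr}_{\mathbb{F}_q/\mathbb{F}_p}\langle\mathbf{a},\mathbf{b}\rangle_E}Z(\mathbf{b})X(\mathbf{a})$, so the ``if and only if'' you state is literally false, and the size of $S$ must be counted over $\mathbb{F}_p$ (an $\mathbb{F}_q$-basis alone does not generate the full stabilizer as an abelian group). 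Neither issue breaks the argument---the implication you actually use ($\langle\mathbf{a},\mathbf{b}\rangle_E=0\Rightarrow$ commute) holds, and once the generating set is taken to be an $\mathbb{F}_p$-basis the dimension count goes through unchanged---but both should be stated carefully if you intend this as a self-contained proof.
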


\begin{lemma}\cite{Ketkar2006}\label{HCSS construction}
   Let \( \mathcal{C} \) be a Hermitian self-orthogonal linear code \([n,k]_{q^2}\) over \(\mathbb{F}_{q^2}\), such that every nonzero vector in \( \mathcal{C}^{\perp_H}\setminus \mathcal{C}\) has weight at least \( d^{\prime} \). Then, \( \mathcal{C} \) induces a quantum stabilizer code with parameters  
\[
[[n, n-2k, d^{\prime}]]_q.
\]
\end{lemma}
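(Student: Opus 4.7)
The plan is to reduce the Hermitian CSS-type construction to the standard symplectic stabilizer formalism via a trace-Hermitian correspondence. First, I would fix an $\mathbb{F}_q$-basis $\{1,\alpha\}$ of $\mathbb{F}_{q^2}$ over $\mathbb{F}_q$ and introduce the $\mathbb{F}_q$-linear bijection
\[
\varphi\colon \mathbb{F}_{q^2}^{n}\longrightarrow \mathbb{F}_q^{2n},\qquad a+\alpha b\longmapsto (a,b),
\]
together with the trace-Hermitian form $\langle u,v\rangle_{tH}=\operatorname{Tr}_{\mathbb{F}_{q^2}/\mathbb{F}_q}\langle u,v\rangle_H$. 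The key computational step is to verify that, after a basis-dependent change of coordinates, $\langle u,v\rangle_{tH}$ transports under $\varphi$ to the standard symplectic form on $\mathbb{F}_q^{2n}$. Since $\operatorname{Tr}$ is non-degenerate as an $\mathbb{F}_q$-linear functional on $\mathbb{F}_{q^2}$, this also yields $\varphi(\mathcal{C}^{\perp_H})=\varphi(\mathcal{C})^{\perp_s}$, so Hermitian self-orthogonality of $\mathcal{C}$ implies symplectic self-orthogonality of $\varphi(\mathcal{C})$.

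Next, I would exploit these identifications: since $\dim_{\mathbb{F}_{q^2}}\mathcal{C}=k$, the image $\varphi(\mathcal{C})$ is a symplectic self-orthogonal $\mathbb{F}_q$-subspace of $\mathbb{F}_q^{2n}$ of $\mathbb{F}_q$-dimension $2k$. Invoking the standard symplectic stabilizer construction then yields a quantum stabilizer code on $n$ qudits of dimension $q^{\,n-2k}$, whose minimum distance equals the minimum symplectic weight of vectors in $\varphi(\mathcal{C})^{\perp_s}\setminus\varphi(\mathcal{C})$.

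Finally, I would translate symplectic weight back to Hamming weight: for $c\in\mathbb{F}_{q^2}^n$ with $c_i=a_i+\alpha b_i$, the $i$-th symplectic coordinate pair of $\varphi(c)$ is $(a_i,b_i)$, which vanishes if and only if $c_i=0$. Hence the symplectic weight of $\varphi(c)$ equals the Hamming weight of $c$, and the minimum-distance hypothesis on $\mathcal{C}^{\perp_H}\setminus \mathcal{C}$ transfers directly to give the parameters $[[n,\,n-2k,\,d']]_q$. The main obstacle is the first step: setting up the trace-Hermitian to symplectic isomorphism carefully enough that both the orthogonality relation and the weight correspondence hold simultaneously. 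Once this dictionary is in place, the rest is a routine application of the symplectic stabilizer construction.
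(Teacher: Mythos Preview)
The paper does not prove this lemma; it is simply quoted from Ketkar et al.\ as a known construction. Your overall plan---expand $\mathcal{C}\subseteq\mathbb{F}_{q^2}^n$ to an $\mathbb{F}_q$-subspace of $\mathbb{F}_q^{2n}$ via a basis $\{1,\omega\}$, match Hermitian orthogonality with symplectic orthogonality and Hamming weight with symplectic weight, and then invoke the symplectic stabilizer construction (Lemma~\ref{SCSS construction} in this paper)---is exactly the argument in the cited source, so there is nothing to compare against within the paper itself.

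One point does need tightening. The form $\langle u,v\rangle_{tH}=\operatorname{Tr}_{\mathbb{F}_{q^2}/\mathbb{F}_q}\langle u,v\rangle_H$ is \emph{symmetric}, since $\langle v,u\rangle_H=\langle u,v\rangle_H^{\,q}$ and the trace is Frobenius-invariant; in odd characteristic no linear change of coordinates on $\mathbb{F}_q^{2n}$ can turn a symmetric form into an alternating one, so it cannot coincide with the symplectic form. What does transport correctly is the $\omega$-coefficient of $\langle u,v\rangle_H$: writing $u_i=a_i+\omega b_i$ and $v_i=c_i+\omega d_i$ one finds $\langle u,v\rangle_H=(\text{element of }\mathbb{F}_q)+\omega\,(a\cdot d-b\cdot c)$, so the $\mathbb{F}_q$-linear functional $\lambda(x+\omega y)=y$, equivalently $(\langle u,v\rangle_H-\langle v,u\rangle_H)/(\omega-\omega^q)$, yields exactly $\langle\varphi(u),\varphi(v)\rangle_S$. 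Your identity $\varphi(\mathcal{C}^{\perp_H})=\varphi(\mathcal{C})^{\perp_S}$ then follows just as you say, using the $\mathbb{F}_{q^2}$-linearity of $\mathcal{C}$ to trade vanishing of $\langle u,v\rangle_H$ over $\mathcal{C}$ for vanishing of any fixed nonzero $\mathbb{F}_q$-functional of it over $\mathcal{C}$. With this correction the weight correspondence and the remainder of your outline go through verbatim.
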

\begin{lemma}\cite{Calderbank1998}\label{SCSS construction}
    Let \( \mathcal{C} \) be a symplectic self-orthogonal \([2n, k]\) linear code over $\mathbb{F}_q$. Then, a quantum stabilizer code with parameters
\[
[[n, n-k, d^{\prime}]]_q
\]
can be constructed, where \( d^{\prime} \) denotes the minimum symplectic  weight of \( \mathcal{C}^{\perp_S} \setminus \mathcal{C} \).
\end{lemma}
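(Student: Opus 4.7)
The plan is to realize the code as a stabilizer code via the standard symplectic representation of the Pauli group $\mathcal{P}_n$ acting on $\mathbb{C}^{q^n}$. First I would identify each vector $(a,b)\in\mathbb{F}_q^{2n}$ with a Pauli-type operator $X(a)Z(b)$, where $X$ and $Z$ are the generalized shift and clock operators on $\mathbb{C}^{q}$ extended coordinatewise. A direct computation shows that $X(a)Z(b)$ and $X(a')Z(b')$ commute up to a scalar precisely when $\langle (a,b),(a',b')\rangle_S = a\cdot b' - a'\cdot b = 0$. Thus the symplectic form on $\mathbb{F}_q^{2n}$ introduced in Section~\ref{sec2} is exactly the commutator form of $\mathcal{P}_n$ modulo its center, and stabilizer subgroups modulo phases correspond to symplectic self-orthogonal subspaces.

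Next I would lift the given $[2n,k]_q$ symplectic self-orthogonal code $\mathcal{C}$ to an abelian subgroup $S\subseteq\mathcal{P}_n$ by choosing, for each $(a,b)\in\mathcal{C}$, a single Pauli representative with a normalizing phase. Because $\mathcal{C}\subseteq\mathcal{C}^{\perp_S}$, all the generators pairwise commute up to phases; a standard $2$-cocycle argument on $\mathbb{F}_q^{2n}$ then shows the phases can be absorbed so that $S$ is genuinely abelian and contains no nontrivial scalar. Averaging the projector $\frac{1}{|S|}\sum_{g\in S} g$ and using $|S|=q^k$ gives a joint $(+1)$-eigenspace $Q\subseteq\mathbb{C}^{q^n}$ of dimension $q^{n-k}$, which encodes $n-k$ logical qudits as claimed.

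For the distance I would invoke the Knill--Laflamme criterion in its stabilizer form: a Pauli error $E$ is undetectable on $Q$ if and only if $E$ commutes with every element of $S$ but $E\notin S$ (modulo center). In the symplectic representation, the centralizer of $S$ modulo the center is exactly $\mathcal{C}^{\perp_S}$ while $S$ itself corresponds to $\mathcal{C}$, so the undetectable nontrivial errors are parameterized by $\mathcal{C}^{\perp_S}\setminus\mathcal{C}$. The support of the Pauli $X(a)Z(b)$ is the set of coordinates $i$ with $(a_i,b_i)\neq (0,0)$, which is exactly the symplectic weight $w_s(a,b)$ defined in Section~\ref{sec2}. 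Taking the minimum over $\mathcal{C}^{\perp_S}\setminus\mathcal{C}$ therefore yields the distance $d'$.

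The subtle point I expect to be the main obstacle is the phase-cocycle bookkeeping when lifting $\mathbb{F}_q^{2n}$ to $\mathcal{P}_n$, particularly the contrast between odd characteristic and characteristic $2$: in characteristic $2$ the symplectic form is both alternating and symmetric, so additional care is required to ensure the chosen lift is abelian on the nose rather than only abelian modulo $\pm 1$. Once this cohomological step is carried out, the dimension count and the distance identification are routine consequences of the commutation dictionary between the Pauli group and $(\mathbb{F}_q^{2n},\langle\cdot,\cdot\rangle_S)$.
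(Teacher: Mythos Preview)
The paper does not give its own proof of this lemma: it is quoted verbatim as a known result with a citation to \cite{Calderbank1998} (and implicitly to \cite{Ashikhmin2001,Ketkar2006} for the nonbinary case), so there is nothing to compare against beyond the standard literature argument. Your outline is exactly that standard argument---identify $(a,b)\in\mathbb{F}_q^{2n}$ with $X(a)Z(b)$, use that the commutator form is the symplectic form, lift $\mathcal{C}$ to an abelian stabilizer, count dimensions, and read off the distance via Knill--Laflamme---and it is correct.

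One refinement worth flagging: the paper works over $\mathbb{F}_q$ with $q=p^\gamma$ an arbitrary prime power, not just a prime. In that generality the ``Pauli'' operators $X(a)Z(b)$ are most naturally indexed by $a,b\in\mathbb{F}_q$ with the pairing $Z(b)X(a)=\omega^{\operatorname{tr}(ab)}X(a)Z(b)$ for a primitive $p$-th root of unity $\omega$, and the commutator form is really $\operatorname{tr}_{\mathbb{F}_q/\mathbb{F}_p}(\langle\cdot,\cdot\rangle_S)$. Your phase-cocycle step and the ``no nontrivial scalar in $S$'' step both go through, but the $2$-cocycle now takes values in $\mu_p$ rather than $\mu_q$, and the relevant cohomological vanishing is over $\mathbb{F}_p$. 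This is handled carefully in \cite{Ashikhmin2001,Ketkar2006}; your sketch is fine as written for prime $q$ and needs only this trace-form adjustment for prime powers. The characteristic-$2$ subtlety you anticipate is real and is precisely the reason the original \cite{Calderbank1998} construction over $\mathrm{GF}(4)$ uses an auxiliary sign choice, so your instinct there is on target.
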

According to Theorem~\ref{Eself-orthogonal} and Lemma~\ref{ECSS construction}, a quasi-cyclic (QC) code that satisfies the Euclidean self-orthogonality condition can be used to construct a corresponding quantum stabilizer code, as detailed below.

\begin{theorem}\label{EtoQEC}
    Suppose that $\mathcal{C}$ is a 2-generator QC code defined in Definition~\ref{definition1} and satisfies the following conditions:
    \begin{align*}
        &(1)\enspace h_1(x) \mid \overline{g_{1}}(x)(1+v_1(x)\overline{v_1}(x)),\\
        &(2)\enspace h_1(x) \mid \overline{g_{2}}(x)(\overline{v_2}(x)+v_1(x)),\\ 
        &(3)\enspace h_2(x) \mid \overline{g_{2}}(x)(1+v_2(x)\overline{v_2}(x)),
    \end{align*}
    where $h_i(x)=\frac{x^m-1}{g_i(x)}$, $i=1,2$.
Then, \( \mathcal{C} \) is a Euclidean self-orthogonal code with parameters \([2m, 2m - \deg(g_1) - \deg(g_2)]\). Therefore, there exists a quantum stabilizer code with the following parameters:  
\[
[[2m, 2(\deg(g_1)+\deg(g_2)-m), w(\mathcal{C}^\perp \setminus \mathcal{C})]]_q.
\]
\end{theorem}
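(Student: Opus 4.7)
The plan is to derive the theorem by directly combining three results already established in the paper: Theorem~\ref{Eself-orthogonal} (Case A), Proposition~\ref{generator matrix}, and the CSS construction in Lemma~\ref{ECSS construction}. No new algebraic computation should be required; the entire statement reduces to a transcription of those three tools into the language of stabilizer codes, and the only thing to verify is that their hypotheses match the hypotheses of the theorem.

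First I would observe that the three divisibility conditions imposed on $g_1(x),g_2(x),v_1(x),v_2(x)$ are literally the three conditions of Case A in Theorem~\ref{Eself-orthogonal}. Therefore $\mathcal{C}$ is Euclidean self-orthogonal, i.e.\ $\mathcal{C}\subseteq\mathcal{C}^{\perp_E}$. Next, by Proposition~\ref{generator matrix}, $\dim(\mathcal{C})=2m-\deg(g_1)-\deg(g_2)$, so $\mathcal{C}$ indeed has parameters $[2m,\,2m-\deg(g_1)-\deg(g_2)]_q$, and consequently $\dim(\mathcal{C}^{\perp_E})=\deg(g_1)+\deg(g_2)$.

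To obtain the quantum code, I would invoke Lemma~\ref{ECSS construction} with the nested pair $\mathcal{C}_2=\mathcal{C}\subseteq\mathcal{C}^{\perp_E}=\mathcal{C}_1$. Then $\dim\mathcal{C}_1-\dim\mathcal{C}_2=\deg(g_1)+\deg(g_2)-(2m-\deg(g_1)-\deg(g_2))=2(\deg(g_1)+\deg(g_2)-m)$, matching the asserted quantum dimension, and the minimum distance reduces to $\min\{w(\mathcal{C}^{\perp_E}\setminus\mathcal{C}),\,w(\mathcal{C}^{\perp_E}\setminus\mathcal{C})\}=w(\mathcal{C}^{\perp_E}\setminus\mathcal{C})$, which is exactly the distance written in the statement. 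This yields an $[[2m,\,2(\deg(g_1)+\deg(g_2)-m),\,w(\mathcal{C}^{\perp_E}\setminus\mathcal{C})]]_q$ stabilizer code.

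There is no genuine obstacle, but two bookkeeping points deserve attention. The first is that the CSS construction presupposes the containment $\mathcal{C}\subseteq\mathcal{C}^{\perp_E}$ is strict enough to produce a nontrivial quantum code, i.e.\ one implicitly needs $\deg(g_1)+\deg(g_2)\geq m$; this is automatic from $\dim\mathcal{C}\leq\dim\mathcal{C}^{\perp_E}$ once self-orthogonality has been established, and when equality holds the resulting code simply has dimension zero. The second is a consistency check: the symbol $\mathcal{C}^{\perp}$ used in the statement of the theorem should be read as $\mathcal{C}^{\perp_E}$, since Case A of Theorem~\ref{Eself-orthogonal} is the Euclidean case. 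With these minor clarifications, the proof amounts to a single sentence: the three listed hypotheses give Euclidean self-orthogonality by Theorem~\ref{Eself-orthogonal}, and then Lemma~\ref{ECSS construction} applied to $\mathcal{C}\subseteq\mathcal{C}^{\perp_E}$ produces the claimed quantum stabilizer code.
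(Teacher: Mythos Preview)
Your proposal is correct and follows exactly the approach indicated in the paper, which simply states that the result follows from Theorem~\ref{Eself-orthogonal} (Case~A) together with Lemma~\ref{ECSS construction}. Your additional remarks on invoking Proposition~\ref{generator matrix} for the dimension and on reading $\mathcal{C}^{\perp}$ as $\mathcal{C}^{\perp_E}$ are accurate and make explicit what the paper leaves implicit.
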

\begin{example}
    Let $q=2$, $m=9$, $g_1(x)=x^7 + x^6 + x^4 + x^3 + x + 1$, $g_2(x)=x^8 + x^7 + x^6 + x^5 + x^4 + x^3 + x^2 + x + 1$, $v_1(x)=x^8 + x + 1$, $v_2(x)=x^8 + 1$, we have $\gcd(v_{1}(x)v_{2}(x)-1,x^m-1)=1$. Let $\mathcal{C}$ be the QC code generated by $(g_1(x),v_1(x)g_1(x))$ and $(v_2(x)g_2(x),g_2(x))$. Through calculations in Magma, it is found that the parameters of linear code $\mathcal{C}^{\perp_E}$ are $[18,14,2]_2$. Applying Theorem \ref{EtoQEC}, we construct a quantum error-correcting code with parameters $[[18, 12, 2]]_2$, which coincide with those of a best-known code listed in the codetable of Grassl~\cite{codetables}.
\end{example}
Drawing upon Theorem~\ref{Eself-orthogonal} and Lemma~\ref{HCSS construction}, we can construct a quantum stabilizer code from any QC code that satisfies the Hermitian self-orthogonal condition, as presented below.
\begin{theorem}\label{HtoQEC}
    Suppose that $\mathcal{C}$ is a 2-generator QC code defined in Definition~\ref{definition1} and satisfies the following conditions: 
    \begin{align*}
        &(1)\enspace h_1^{[q]}(x)\mid \overline{g_1}(x)(1+v_1^{[q]}(x)\overline{v}_1(x)), \\
        &(2)\enspace h_1^{[q]}(x)\mid \overline{g_2}(x)(\overline{v}_2(x)+v_1^{[q]}(x)),\\
        &(3)\enspace h_2^{[q]}(x)\mid \overline{g_2}(x)(v_2^{[q]}(x)\overline{v}_2(x)+1),
    \end{align*}
    where $h_i(x)=\frac{x^m-1}{g_i(x)}$, $i=1,2$.
    Then, $\mathcal{C}$ is a Hermitian self-orthogonal code with parameters $[2m, 2m-\deg(g_1)-\deg(g_2)]_{q^2}$. Therefore, there exists a quantum stabilizer code with the following parameters: 
    \[[[2m,2(\deg (g_1)+\deg(g_2)-m),w(\mathcal{C}^{\perp_H}\setminus \mathcal{C})]]_q.\]
\end{theorem}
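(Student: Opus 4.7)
The plan is essentially to assemble three ingredients that have already been developed in the paper. First, I would invoke Theorem~\ref{Eself-orthogonal} (Case B): the three divisibility conditions hypothesized in the statement of Theorem~\ref{HtoQEC} are literally the necessary and sufficient conditions for $\mathcal{C}$ to be Hermitian self-orthogonal, with $t=2$, $k_{11}=1$, $k_{12}=v_1$, $k_{21}=v_2$, $k_{22}=1$. So self-orthogonality under the Hermitian inner product is immediate.

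Next, I would establish the length and dimension. The length $2m$ is clear from the construction (index $l=2$). For the dimension, I would appeal to Proposition~\ref{generator matrix}, noting that its proof goes through verbatim over $\mathbb{F}_{q^2}$: the key step is that $\gcd(v_1(x)v_2(x)-1,x^m-1)=1$ (part of Definition~\ref{definition1}, now taken over $\mathbb{F}_{q^2}$) forces $\mathcal{C}_1\cap\mathcal{C}_2=\{\mathbf{0}\}$, so $\dim\mathcal{C}=(m-\deg g_1)+(m-\deg g_2)=2m-\deg(g_1)-\deg(g_2)$. Hence $\mathcal{C}$ is an $[2m,\,2m-\deg(g_1)-\deg(g_2)]_{q^2}$ Hermitian self-orthogonal code.

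Finally, I would feed this into Lemma~\ref{HCSS construction}. Taking $n=2m$ and $k=2m-\deg(g_1)-\deg(g_2)$, we compute
\[
n-2k \;=\; 2m - 2\bigl(2m-\deg(g_1)-\deg(g_2)\bigr) \;=\; 2\bigl(\deg(g_1)+\deg(g_2)-m\bigr),
\]
and the minimum distance of the resulting quantum code is the minimum Hamming weight of $\mathcal{C}^{\perp_H}\setminus\mathcal{C}$ by the lemma. This yields the claimed $[[2m,\,2(\deg(g_1)+\deg(g_2)-m),\,w(\mathcal{C}^{\perp_H}\setminus\mathcal{C})]]_q$ parameters.

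The proof is really just a bookkeeping assembly rather than a genuinely hard argument; no obstacle arises in the main chain. The only point that deserves a brief sentence of justification is the dimension formula over $\mathbb{F}_{q^2}$, since Proposition~\ref{generator matrix} is phrased for codes over $\mathbb{F}_q$; I would simply remark that the proof of that proposition is purely module-theoretic and does not depend on which base field is used, as long as $\gcd(v_1v_2-1,x^m-1)=1$ holds in $\mathbb{F}_{q^2}[x]/(x^m-1)$. Everything else follows by direct substitution into the cited results.
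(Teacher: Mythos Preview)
Your proposal is correct and matches the paper's approach: the paper does not spell out a proof for this theorem but introduces it with the sentence ``Drawing upon Theorem~\ref{Eself-orthogonal} and Lemma~\ref{HCSS construction}\ldots'', which is exactly the assembly you describe, with Proposition~\ref{generator matrix} (noted in the paper's Remark to hold over $\mathbb{F}_{q^2}$) supplying the dimension. Your added remark about the base-field independence of Proposition~\ref{generator matrix} is a welcome clarification but not a departure from the intended argument.
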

\begin{example}
    Let $q^2=4$, $m=3$, $\mathbb{F}_4=\mathbb{F}_2(w)$,  $g_1(x)=x^2 + wx + w + 1$, $g_2(x)=x^2 + (w + 1)x + w$, $v_1(x)=x^2$, $v_2(x)=x^2 + w + 1$, we have $\gcd(v_{1}(x)v_{2}(x)-1,x^m-1)=1$. Let $\mathcal{C}$ be the QC code generated by $(g_1(x),v_1(x)g_1(x))$ and $(v_2(x)g_2(x),g_2(x))$. Through calculations in Magma, it is found that the parameters of linear code $\mathcal{C}^{\perp_H}$ are $[6,2,2]_2$. Applying Theorem \ref{HtoQEC}, we construct a quantum error-correcting code with parameters $[[6, 2, 2]]_2$, which coincide with those of a best-known code listed in the codetable of Grassl~\cite{codetables}.
\end{example}
Based on Theorem~\ref{Eself-orthogonal} and Lemma~\ref{SCSS construction}, any QC code that satisfies the symplectic self-orthogonal condition can be used to construct a quantum stabilizer code, as stated below.
\begin{theorem}\label{StoQEC}
Suppose that $\mathcal{C}$ is a 2-generator QC code defined in Definition~\ref{definition1} and satisfies the following conditions: 
\[
 h_1(x) \mid \overline{g_1}(x)(\overline{v_1}(x)-v_1(x)), \enspace h_1(x) \mid \overline{g_2}(x)(1-\overline{v_2}(x)v_1(x)), \enspace h_2(x) \mid \overline{g_2}(x)(v_2(x)-\overline{v_2}(x)).
\]
where $h_i(x)=\frac{x^m-1}{g_i(x)}$, $i=1,2$.
Then, $\mathcal{C}$ is a symplectic self-orthogonal code with parameters $[2m, 2m-\deg(g_1)-\deg(g_2)]$. Therefore, there exists a quantum stabilizer code with the following parameters:  
\[[[m, \deg(g_1)+\deg(g_2)-m, w_s(\mathcal{C}^{\perp_S}\setminus \mathcal{C})]]_q.\]
\end{theorem}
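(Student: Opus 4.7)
The plan is to derive this result as a direct packaging of the symplectic case of Theorem~\ref{Eself-orthogonal} together with Lemma~\ref{SCSS construction}. The three divisibility hypotheses listed in the statement are, by design, exactly the three conditions that appear in Case~C of Theorem~\ref{Eself-orthogonal} when specialized to $t=2$ and $(r,s)\in\{(1,1),(1,2),(2,2)\}$; the cases $1\le r\le s\le t$ are the only ones needed because of the skew-symmetry of the symplectic inner product that was already exploited in the proof of Theorem~\ref{E-mainth}. Hence the first step of the proof is simply to invoke that theorem and conclude that $\mathcal{C}$ is symplectic self-orthogonal.

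Next, I would read off the dimension of $\mathcal{C}$ from Proposition~\ref{generator matrix}, which gives $\dim(\mathcal{C})=2m-\deg(g_1(x))-\deg(g_2(x))$. Combined with the previous step, this identifies $\mathcal{C}$ as a symplectic self-orthogonal $[2m,\,2m-\deg(g_1)-\deg(g_2)]$ linear code over $\mathbb{F}_q$. At this point every hypothesis required by the CSS-type construction is in place.

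Finally, I would apply Lemma~\ref{SCSS construction} with its length parameter set equal to $m$ (so that the ambient space $\mathbb{F}_q^{2n}$ of that lemma matches our $\mathbb{F}_q^{2m}$). Substituting $k=2m-\deg(g_1)-\deg(g_2)$ into the parameters $[[n,\,n-k,\,d']]_q$ yields
\[
    [[\,m,\ \deg(g_1)+\deg(g_2)-m,\ w_s(\mathcal{C}^{\perp_S}\setminus\mathcal{C})\,]]_q,
\]
which is the claimed parameter set.

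No serious obstacle is anticipated, since the theorem is a corollary rather than an independent result; the only care needed is bookkeeping. Specifically, one must check that the symplectic inner product used in Section~\ref{sec2} coincides with the one assumed by Lemma~\ref{SCSS construction}, and one must keep straight that the length parameter $n$ of Lemma~\ref{SCSS construction} corresponds to $m$ (not $2m$) in our setting, so that the resulting quantum stabilizer code has length $m$. Once these identifications are made explicit the argument is immediate.
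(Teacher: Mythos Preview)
Your proposal is correct and matches the paper's approach exactly: the paper presents this theorem as an immediate consequence of Theorem~\ref{Eself-orthogonal} (symplectic case) and Lemma~\ref{SCSS construction}, with no separate proof given. Your explicit mention of Proposition~\ref{generator matrix} for the dimension is a welcome clarification of a detail the paper leaves implicit.
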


\begin{example}
    Let $q=2$, $m=6$, $g_1(x)=x^4 + x^3 + x + 1$, $g_2(x)=x^5 + x^4 + x^3 + x^2 + x + 1$, $v_1(x)=x^5$, $v_2(x)=x^5 + x$, we have $\gcd(v_{1}(x)v_{2}(x)-1,x^m-1)=1$. Let $\mathcal{C}$ be the QC code generated by $(g_1(x),v_1(x)g_1(x))$ and $(v_2(x)g_2(x),g_2(x))$. Through calculations in Magma, it is found that the parameters of linear code $\mathcal{C}^{\perp_S}$ are $[12,9,2]_2$. Applying Theorem \ref{StoQEC}, we construct a quantum error-correcting code with parameters $[[6, 3, 2]]_2$, which coincide with those of a best-known code listed in the codetable of Grassl~\cite{codetables}.
\end{example}
Through extensive numerical experiments, we have constructed a large number of quantum stabilizer codes with excellent parameters. Some representative examples of relatively simple cases are listed in the tables below.
 For simplicity, we only write the coefficients of the polynomials instead of the full polynomials. For example, the polynomial \( x^3 + 2x + 1 \) is represented by \((1 201)\).

\begin{table}[h]
\begin{center}
\begin{minipage}{\textwidth}
\caption{QECCs Constructed from Euclidean self-orthogonal 2-Generator QC Codes}\label{Etab}%
\begin{tabular}{l l  l l}\hline
$q$ &$m$ &Polynomials $g_1$,$g_2$,$v_1$,$v_2$ & QECCs  \\

\hline
2&9&$(1 1 1 1 1 1 1 1 1),(1 1 0 1 1 0 1 1),(0 0 0 0 0 1),(0 1 0 0 0 1)$&$[[18, 12, 2]]_2$\\
\hline
2&15&$(1 1 0 1 1 0 1 1 0 1 1 0 1 1), (1 1 1 1 1 1 1 1 1 1 1 1 1 1 1), (1 1 0 0 0 1), (1 0 0 0 0 1)
$&$[[30, 24, 2]]_2$\\
\hline 
3&6&$(2 1 2 1 2 1),(1 1 1 1 1 1),(0 0 0 0 0 1),(1 0 0 0 0 1)$&$[[12, 8, 2]]_3$\\

\hline   
\end{tabular}
\end{minipage}
\end{center}
\end{table}

\begin{table}[h]\label{tableH}
\begin{center}
\begin{minipage}{\textwidth}
\caption{QECCs Constructed from Hermitian self-orthogonal 2-Generator QC Codes}
\begin{tabular}{l l  l l}\hline
$q^2$ &$m$ &Polynomials $g_1$,$g_2$,$v_1$,$v_2$ & QECCs \\
\hline
4 & 3 & $(w+1 w 1),(w 1),(0 0 1),(w+1 0 1)$

& $[[6, 0, 4]]_2$\\

\hline
4&5&$(1 w+1 1),(1 w w 1),(0 0 0 0 1),(1 0 0 0 1)
$& $[[10, 0, 4]]_2$\\

 \hline  
\end{tabular}
\end{minipage}
\end{center}
\end{table}

\begin{table}[h]
\begin{center}
\begin{minipage}{\textwidth}
\caption{QECCs Constructed from symplectic self-orthogonal 2-Generator QC Codes}\label{stab}%
\begin{tabular}{l l  l l}\hline
$q$ &$m$ &Polynomials $g_1$,$g_2$,$v_1$,$v_2$ & QECCs  \\
\hline
2 & 9 & 
$(111111111),(1001),(0 0 0 0 0 0 0 0 1),(0 1 0 0 0 0 0 0 1)$& $[[9, 2, 3]]_2$\\
 
\hline
2&9&$(111111111), (11011011), (000000001), (001000001)
$&$[[9, 6, 2]]_2$\\

\hline
 2 &13 & $(11), (1111111111111), (0100100010001), (0000000000001)
$ &$[[13, 0, 5]]_2$ \\ 
\hline
3&7&$(21), (1111111), (0100001), (0000001)
$ &$[[7, 0, 4]]_3$\\

\hline   
\end{tabular}
\end{minipage}
\end{center}
\end{table}

\section{Construction of quantum synchronizable codes}\label{sec6}
In this section, we construct quantum synchronizable codes using the results on 2-generator quasi-cyclic codes over $\mathbb{F}_q$ obtained in Section \ref{sec4}.  Here, we still assume that $p$ is the characteristic of $\mathbb{F}_q$.

\begin{lemma}\label{qc construction}\cite{duchao2023}
Let $\mathcal{C}_1$ be a 2-generator QC code over $\mathbb{F}_q$ with parameters $[2m, k_1, d_1]$, generated by $\{([a_{1,1}(x)],[a_{1,2}(x)]), ([a_{2,1}(x)],[a_{2,2}(x)])\}$. Similarly, let $\mathcal{C}_2$ be another 2-generator QC code over $\mathbb{F}_q$ with parameters $[2m, k_2, d_2]$, generated by $\{([b_{1,1}(x)], [b_{1,2}(x)]),([b_{2,1}(x)],[b_{2,2}(x)])\}$. Suppose that $\mathcal{C}_1^\perp \subset \mathcal{C}_1 \subset \mathcal{C}_2$, and that for some integer $1 \leq j \leq 2$, the greatest common divisor $\eta(x) = \gcd(a_{1,j}(x), a_{2,j}(x))$ is nontrivial, and satisfies $\eta(x) = f(x) b_{1,j}(x)$, where $b_{1,j}(x) = \gcd(b_{1,j}(x), b_{2,j}(x))$. Then, for any pair of non-negative integers $a_l$ and $a_r$ satisfying $a_l + a_r < \text{ord}(f(x))$, there exists an $(a_l, a_r)$-quantum synchronizable code with parameters $[[2m + a_l + a_r, 2(k_1 - m)]]_q$, which can correct at least up to $\lfloor\frac{d_1 - 1}{2}\rfloor$ phase errors and at least up to $\lfloor\frac{d_2 - 1}{2}\rfloor$ bit errors.
\end{lemma}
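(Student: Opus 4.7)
The plan is to follow Fujiwara's recipe for constructing quantum synchronizable codes from a pair of nested cyclic codes with the inner one dual-containing, adapted to the 2-generator QC setting of Definition~\ref{definition1}. First I would use $\mathcal{C}_1^{\perp} \subset \mathcal{C}_1$ together with Lemma~\ref{ECSS construction} to obtain the underlying CSS-type stabilizer code of length $2m$ that encodes $2(k_1-m)$ logical qudits and corrects up to $\lfloor(d_1-1)/2\rfloor$ phase errors. The larger code $\mathcal{C}_2 \supset \mathcal{C}_1$ is then used to strengthen the $X$-decoder: any bit-error syndrome is a coset of $\mathcal{C}_1^{\perp}$ and can be decoded inside the bigger $\mathcal{C}_2$, giving the improved capability $\lfloor(d_2-1)/2\rfloor$ for bit errors and explaining the asymmetric error-correction bounds in the statement.

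Second I would build the synchronization mechanism out of the factor $f(x)$. Writing $\beta(x) := \gcd(b_{1,j}(x),b_{2,j}(x))$ for the effective generator of $\mathcal{C}_2$ in its $j$-th component, the hypothesis reads $\eta(x) = f(x)\beta(x)$, so $f(x)$ generates the cyclic quotient $\langle\beta(x)\rangle/\langle\eta(x)\rangle$ of order $\mathrm{ord}(f(x))$. Append $a_l + a_r$ ancilla qudits as a buffer around the $2m$ data qudits and, following Fujiwara, define unitaries $V_a$ for $-a_l \le a \le a_r$ that implement a single-coordinate cyclic shift of the $j$-th component conjugated by the quotient character coming from $f(x)$. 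Because that character has order exactly $\mathrm{ord}(f(x))$, the condition $a_l + a_r < \mathrm{ord}(f(x))$ guarantees that the $V_a$ produce pairwise orthogonal images on any fixed logical codeword, so a single syndrome measurement on the buffer stabilizers recovers the misalignment $a$ uniquely.

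The principal obstacle I expect is verifying that the synchronization shift $V_a$ interacts compatibly with the QC index-$l=2$ shift intrinsic to $\mathcal{C}_1$: one must check that shifting a codeword of $\mathcal{C}_1$ by a single coordinate stays inside $\mathcal{C}_2$, so that the bit-error syndrome is preserved after detection, and that the induced representation on $\langle\beta(x)\rangle/\langle\eta(x)\rangle$ is faithful of the claimed order. This is precisely what the common choice of index $j$ and the divisibility $\beta(x)\mid\eta(x)$ supply. Once this compatibility is established the decoding proceeds in two stages, first detect and undo $V_a$ using the buffer ancillas, then apply the CSS decoder, and the parameters $[[2m + a_l + a_r,\, 2(k_1-m)]]_q$ with the stated error-correction profile follow by a direct dimension count, exactly as in Fujiwara's original cyclic-code argument.
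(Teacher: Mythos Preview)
The paper does not supply a proof of this lemma; it is cited directly from \cite{duchao2023} and invoked as a black box in Theorems~\ref{qsc1} and~\ref{qsc2}. There is therefore no in-paper argument to compare your proposal against.

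Your sketch follows the standard Fujiwara template---a CSS code built from the dual-containing $\mathcal{C}_1$, with the enlarged $\mathcal{C}_2$ handling bit errors and the factor $f(x)$ supplying the synchronization marker---and that is indeed the shape of the construction in \cite{duchao2023}. Two points deserve attention. First, the lemma is stated for arbitrary $2$-generator QC codes, not only those of Definition~\ref{definition1}, so you should not invoke that special form. Second, your description of the unitaries $V_a$ as implementing ``a single-coordinate cyclic shift of the $j$-th component'' is not quite the right picture: the physical misalignment shifts the entire interleaved length-$2m$ block by $a$ positions, and the index-$2$ QC structure is only invariant under shifts by multiples of $2$. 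The actual mechanism projects to the $j$-th cyclic constituent and reads the offset there via the coset of $\langle \eta(x)\rangle$ inside $\langle \beta(x)\rangle$, which is why $\mathrm{ord}(f(x))$ controls the distinguishable range. You correctly flag this QC-versus-cyclic compatibility as the principal obstacle, but your proposal stops short of resolving it; that verification is the substantive content of the proof in \cite{duchao2023}, and without it the argument remains a plausible outline rather than a proof.
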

Here, $\text{ord}(f(x))$ refers to the order of $f(x)$, namely the smallest positive integer $\tau$ such that $f(x) \mid x^{\tau} - 1$ over the given field. When $\operatorname{ord}(f(x))$ in the above lemma attains its maximum possible value, the resulting quantum synchronizable code achieves the best possible tolerance against misalignment.
We now focus on the cases where $m = p^t$ and $m = lp^t$, with $l, t \geq 1$ and $\gcd(l, p) = 1$, providing explicit constructions for codes that attain this level of tolerance.

\subsection{Quantum synchronizable codes from QC codes with $m = p^t$
}
\begin{lemma}\label{pt}
    Let $\mathcal{C}$ be the QC code generated by $([(x-1)^{r_1}],[v_1(x)(x-1)^{r_1}])$ and $([v_2(x)(x-1)^{r_2}],[(x-1)^{r_2}])$, where $v_1(x),v_2(x)\in \mathbb{F}_q/(x^m-1)$, $\gcd(v_1(x)v_2(x)-1,x^{p^t}-1)=1$, $0<r_1< r_2<\frac{p^t-1}{2}$. Then $\mathcal{C}^{\perp_E}$ is the QC code generated by $([(x-1)^{p^t-r_1}],[-\overline{v}_2(x)(x-1)^{p^t-r_1}])$ and $([-\overline{v}_1(x)(x-1)^{p^t-r_2}],[(x-1)^{p^t-r_2}])$, moreover, we have $\mathcal{C}^{\perp_E}\subset \mathcal{C}$.
\end{lemma}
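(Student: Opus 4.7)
The proof proposal proceeds in two main steps: identify the generators of the dual, then apply Theorem~\ref{Edual-containing} to verify the containment.

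The plan is to first evaluate the formulas given in Proposition~\ref{Edual} for this specific code. The crucial computational input is that $m=p^t$ and $\operatorname{char}(\mathbb{F}_q)=p$, which gives the factorization $x^{p^t}-1=(x-1)^{p^t}$. So, for $i=1,2$, taking $g_i(x)=(x-1)^{r_i}$ we immediately get
\[
\gcd(g_i(x),x^m-1)=(x-1)^{r_i},\qquad f_i(x)\;:=\;\frac{x^m-1}{\gcd(g_i(x),x^m-1)}=(x-1)^{p^t-r_i}.
\]
From here I would compute $f_i(0)=(-1)^{p^t-r_i}$ and
\[
f_i^*(x)=x^{p^t-r_i}(x^{-1}-1)^{p^t-r_i}=(1-x)^{p^t-r_i}=(-1)^{p^t-r_i}(x-1)^{p^t-r_i},
\]
so that $g_i^{\perp}(x)=f_i(0)^{-1}f_i^*(x)=(x-1)^{p^t-r_i}$. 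Substituting these into Proposition~\ref{Edual} yields exactly the generating set claimed for $\mathcal{C}^{\perp_E}$.

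Next, I would establish the containment $\mathcal{C}^{\perp_E}\subseteq \mathcal{C}$ by verifying the three Euclidean dual-containing conditions of Case~A in Theorem~\ref{Edual-containing}. With $g_1(x)=(x-1)^{r_1}$, $g_2(x)=(x-1)^{r_2}$, $g_1^{\perp}(x)=(x-1)^{p^t-r_1}$, and $g_2^{\perp}(x)=(x-1)^{p^t-r_2}$, all three divisibility statements reduce to comparing exponents of the single irreducible factor $(x-1)$. Concretely: (1.1) requires $r_1\le p^t-r_1$; (1.2) requires $r_1\le p^t-r_2$; (1.3) requires $r_2\le p^t-r_2$. The hypothesis $0<r_1<r_2<\tfrac{p^t-1}{2}$ implies $r_1+r_2<p^t-1<p^t$ and $2r_2<p^t-1<p^t$, so all three inequalities hold. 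Notice that the factors $1+\overline{v_2}v_2$, $\overline{v_1}+v_2$, and $1+\overline{v_1}v_1$ appearing in Theorem~\ref{Edual-containing} are simply harmless multipliers here, since $(x-1)^{p^t-r_i}$ already absorbs the required power of $(x-1)$.

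To finish, I would confirm that the inclusion is strict by a dimension count. By Proposition~\ref{generator matrix}, $\dim\mathcal{C}=2m-r_1-r_2$ and thus $\dim\mathcal{C}^{\perp_E}=r_1+r_2$. The hypothesis $r_1+r_2<p^t=m<2m-r_1-r_2$ gives the strict inclusion. The only mildly delicate point in this whole plan is keeping the signs straight when computing $g_i^{\perp}$, since a sign error there would propagate through the comparison with Proposition~\ref{Edual}; everything else is exponent bookkeeping in the single prime divisor $(x-1)$ of $x^{p^t}-1$, which is precisely what makes the case $m=p^t$ so clean.
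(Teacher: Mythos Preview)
Your proposal is correct and follows essentially the same approach as the paper: compute $g_i^\perp(x)=(x-1)^{p^t-r_i}$, invoke Proposition~\ref{Edual} to identify the generators of $\mathcal{C}^{\perp_E}$, and then verify the three divisibility conditions of Case~A in Theorem~\ref{Edual-containing}. Your write-up is in fact more detailed than the paper's (which simply asserts that the divisibility conditions hold under the constraint $0<r_1<r_2<\tfrac{p^t-1}{2}$), and your added dimension count confirming strictness is a welcome addition that the paper omits.
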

\begin{proof}
    Denote \( g_i(x)=(x-1)^{r_i} \), then \( g_i^\perp(x)=(x-1)^{p^t-r_i} \) for \( i=1,2 \). According to Proposition~\ref{Edual}, the Euclidean dual code \( \mathcal{C}^{\perp_E} \) of \( \mathcal{C} \) is generated by $([(x-1)^{p^t-r_1}],[-\overline{v}_2(x)(x-1)^{p^t-r_1}])$ and $([-\overline{v}_1(x)(x-1)^{p^t-r_2}],[(x-1)^{p^t-r_2}])$. Furthermore, as a result of the constraint \( 0 < r_1 \leq r_2 < \frac{p^t - 1}{2} \), the polynomials \( g_1(x), g_2(x), v_1(x), v_2(x) \) satisfy the divisibility conditions stated in Theorem~\ref{Edual-containing}, implying that \( \mathcal{C} \) is Euclidean dual-containing. This completes the proof.

\end{proof}
    \begin{theorem}\label{qsc1}
    Let \( \mathcal{C}_1 \) be a \([2p^t, k_1, d_1]\) QC code generated by \( ([(x-1)^{r_1}], [v_1(x)(x-1)^{r_1}]) \) and \( ([v_2(x)(x-1)^{r_2]}, [(x-1)^{r_2}]) \), where \( v_1(x) \) and \( v_2(x) \) are monic polynomials in \( \mathcal{R} \)
, \( \gcd(v_1(x)v_2(x)-1, x^{p^t}-1) = 1 \), and \( 0 < r_1 < r_2 < \frac{p^t-1}{2} \).  
    Similarly, let \( \mathcal{C}_2 \) be another  \([2p^t, k_2, d_2]\) QC code generated by \( ([(x-1)^{s_1}], [v_1(x)(x-1)^{s_1}]) \) and \( ([v_2(x)(x-1)^{s_2}], [(x-1)^{s_2}]) \), where \( 0 < s_1 < r_1 \) and \( 0 < s_1 < s_2 < r_2 \).  
    Suppose \( r_1 - s_1 > p^{t-1} \). Then, for any non-negative integers \( a_l, a_r \) satisfying \( a_l + a_r < p^t \), there exists a quantum synchronizable code with parameters  
    \[
    (a_l,a_r)-[[2p^t + a_l + a_r, 2p^t - 2r_1 - 2r_2]]_q,
    \]  
    which can correct at least up to $\lfloor\frac{d_1 - 1}{2}\rfloor$ phase errors and at least up to $\lfloor\frac{d_2 - 1}{2}\rfloor$ bit errors.
\end{theorem}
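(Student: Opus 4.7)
The plan is to verify the hypotheses of Lemma \ref{qc construction} with $m = p^t$ and then compute $\operatorname{ord}(f(x))$, which controls the allowed range of $a_l + a_r$. The two inclusions $\mathcal{C}_1^{\perp_E} \subset \mathcal{C}_1 \subset \mathcal{C}_2$ are the easy parts. The first follows from Lemma \ref{pt}, whose hypotheses are exactly the constraints on $(r_1, r_2, v_1, v_2)$ stated here. The second follows because each generator of $\mathcal{C}_1$ equals $(x-1)^{r_i - s_i}$ times the corresponding generator of $\mathcal{C}_2$, with $r_i - s_i > 0$ by hypothesis.

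Next I would identify the polynomial $f(x)$ by taking $j = 1$ in Lemma \ref{qc construction}. Since $r_1 < r_2$,
\[
    \eta(x) = \gcd\bigl((x-1)^{r_1},\, v_2(x)(x-1)^{r_2}\bigr) = (x-1)^{r_1},
\]
regardless of the $(x-1)$-multiplicity of $v_2(x)$, because $(x-1)^{r_1}$ divides both arguments while also being one of them. The analogous gcd for the generators of $\mathcal{C}_2$ equals $(x-1)^{s_1}$ by the same reasoning, so $f(x) = (x-1)^{r_1 - s_1}$.

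The main obstacle is computing $\operatorname{ord}(f(x)) = p^t$. Writing $\tau = p^j \tau'$ with $\gcd(\tau', p) = 1$, the characteristic-$p$ identity $x^\tau - 1 = (x^{\tau'} - 1)^{p^j}$ combined with the simplicity of $1$ as a root of $x^{\tau'} - 1$ shows that the $(x-1)$-multiplicity of $x^\tau - 1$ is exactly $p^j$. Hence $f(x) \mid x^\tau - 1$ iff $p^j \geq r_1 - s_1$. The hypothesis $r_1 - s_1 > p^{t-1}$ forces $j \geq t$, while $r_1 - s_1 \leq r_1 < (p^t - 1)/2 < p^t$ shows that $j = t$ with $\tau' = 1$ already suffices; therefore $\operatorname{ord}(f(x)) = p^t$.

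Finally, Lemma \ref{qc construction} applied with any $a_l + a_r < p^t$ yields a quantum synchronizable code of length $2p^t + a_l + a_r$ and dimension $2(k_1 - p^t) = 2p^t - 2r_1 - 2r_2$ (using $k_1 = 2p^t - r_1 - r_2$ from Proposition \ref{generator matrix}), with the asserted phase- and bit-error-correction capabilities.
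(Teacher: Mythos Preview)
Your proof is correct and follows essentially the same route as the paper: both invoke Lemma~\ref{pt} for the dual-containing property, obtain $\mathcal{C}_1\subset\mathcal{C}_2$ by multiplying each generator of $\mathcal{C}_2$ by the appropriate power of $(x-1)$, identify $\eta(x)=(x-1)^{r_1}$ and $f(x)=(x-1)^{r_1-s_1}$ via the $j=1$ column, and then apply Lemma~\ref{qc construction}. The only notable difference is that you supply a full justification for $\operatorname{ord}\bigl((x-1)^{r_1-s_1}\bigr)=p^t$ via the $(x-1)$-multiplicity of $x^\tau-1$, whereas the paper simply asserts this fact.
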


\begin{proof}
    Since \( 0 < r_1 \leq r_2 < \frac{p^t-1}{2} \), it follows from Lemma \ref{pt} that \( \mathcal{C}^{\perp_E} \subset \mathcal{C} \). Define \( f_1(x) = (x-1)^{r_1 - s_1} \) and \( f_2(x) = (x-1)^{r_2 - s_2} \), then we have  
    \begin{align*}
        \begin{pmatrix}
            [(x-1)^{r_1}] & [v_1(x)(x-1)^{r_1}] \\
            [v_2(x)(x-1)^{r_2}] & [(x-1)^{r_2}]
        \end{pmatrix}
        \\=
        \begin{pmatrix}
            [f_1(x)] & \mathbf{0} \\
            \mathbf{0} & [f_2(x)]
        \end{pmatrix}
        &
        \begin{pmatrix}
            [(x-1)^{s_1}] & [v_1(x)(x-1)^{s_1}] \\
            [v_2(x)(x-1)^{s_2}] & [(x-1)^{s_2}]
        \end{pmatrix}.
    \end{align*}
    This equation implies that \( \mathcal{C}_1 \subset \mathcal{C}_2 \). Furthermore, we observe that  $\eta(x) = \gcd((x-1)^{r_1}, v_2(x)(x-1)^{r_2}) = (x-1)^{r_1}$, $(x-1)^{s_1} = \gcd((x-1)^{s_1}, v_2(x)(x-1)^{s_2})$ and $(x-1)^{r_1} = (x-1)^{r_1 - s_1} (x-1)^{s_1}$. Moreover, the dimensions of the codes are given by  
    \[
    \dim \mathcal{C}_1 = k_1 = 2p^t - r_1 - r_2, \quad \dim \mathcal{C}_2 = k_2 = 2p^t - s_1 - s_2.
    \]
    Since \( r_1 - s_1 > p^{t-1} \), the order of \( (x-1)^{r_1 - s_1} \) is \( p^t \), which ensures that the best attainable tolerance against misalignment is \( p^t \).  
    Applying Lemma \ref{qc construction}, we obtain a quantum synchronizable code with parameters  
    \[
    (a_l,a_r)-[[2p^t + a_l + a_r, 2p^t - 2r_1 - 2r_2]]_q,
    \]  
    where \( a_l + a_r < p^t \).  
    This completes the proof.
\end{proof}
Building on the theoretical foundations above, we now present examples of constructing quantum synchronizable codes from QC  codes.
\begin{example}
    Let $q = p = 3$, $t = 3$, $v_1(x) =x^3 + x^2 + 1$, and $v_2(x) = x^5 + 2x^3 + x^2 + 2x + 1$. We choose $r_1 = 12$, $r_2 = 13$, $s_1 = 2$, and $s_2 = 5$. Then, $\mathcal{C}_1$ is generated by
$\bigl( [ (x-1)^{12} ], [ v_1(x)(x-1)^{12} ] \bigr) \enspace\text{and}\enspace   
        \bigl( [ v_2(x)(x-1)^{13} ], [ (x-1)^{13} ] \bigr)$.  Similarly, $\mathcal{C}_2$ is generated by
    $\bigl( [ (x-1)^{2} ], [ v_1(x)(x-1)^{2} ] \bigr)$ and $\bigl( [ v_2(x)(x-1)^{5} ], [ (x-1)^{5} ] \bigr)$. By Theorem~\ref{qsc1}, we obtain a quantum synchronizable code with parameters $(a_l, a_r)-[[54 + a_l + a_r, 4]]_3$, where $a_l + a_r < 27$. 

\end{example}

\begin{example}
    Let $q = p = 5$, $t = 2$, $v_1(x) =x^3 + x^2 + 2x + 3$, and $v_2(x) = x^5 + 2x^4 + x^3 + 2x^2 + x$. We choose $r_1 = 9$, $r_2 = 11$, $s_1 = 3$, and $s_2 = 4$. Then, $\mathcal{C}_1$ is generated by
$\bigl( [ (x-1)^{9} ], [ v_1(x)(x-1)^{9} ] \bigr)$ and   
        $\bigl( [ v_2(x)(x-1)^{11} ], [ (x-1)^{11} ] \bigr)$.  Similarly, $\mathcal{C}_2$ is generated by
    $\bigl( [ (x-1)^{3} ], [ v_1(x)(x-1)^{3} ] \bigr)$ and $\bigl( [ v_2(x)(x-1)^{4} ], [ (x-1)^{4} ] \bigr)$. By Theorem~\ref{qsc1}, we obtain a quantum synchronizable code with parameters $(a_l, a_r)-[[50 + a_l + a_r, 10]]_5$, where $a_l + a_r < 25$. 
\end{example}

\subsection{Quantum synchronizable codes from QC codes with $m=lp^t$}
In this case, we set \( m = lp^t \), where \( \gcd(l, p) = 1 \). We begin by reviewing some basic concepts related to cyclotomic cosets \cite{Huffman2003}.  

For any integer \( s \), the \( q \)-cyclotomic coset modulo \( l \) containing \( s \) is defined as  
$C_{(s,l)} = \{s, sq, sq^2, \dots, sq^{i_s-1}\}$,
where \( i_s \) is the smallest positive integer such that \( sq^{i_s} \equiv s \mod l \). The smallest element in \( C_{(s,l)} \) is usually taken as its representative, and the set of all such representatives is denoted by \( T_l \). Let \( \alpha \) be a primitive \( l \)-th root of unity. The minimal polynomial corresponding to \( \alpha^s \) is given by  
$M_s(x) = \prod_{i\in C_{(s,l)}}(x-\alpha^i)$.
It follows that  
\[
x^{lp^t}-1 = (x^l-1)^{p^t} = \prod_{s\in T_l}M_s(x)^{p^t}.
\]  
\begin{lemma}\label{lemma2}
    Let \( \mathcal{C} \) be the 2-generator QC code generated by  
    \[
    \bigl( [\prod_{s\in T_l}M_s(x)^{r_{1s}}], [v_1(x) \prod_{s\in T_l}M_s(x)^{r_{1s}}] \bigr)
    \enspace \text{and}\enspace
    \bigl( [v_2(x) \prod_{s\in T_l}M_s(x)^{r_{2s}}], [\prod_{s\in T_l}M_s(x)^{r_{2s}}] \bigr),
    \]
    where \( \gcd(v_1(x)v_2(x)-1, x^{lp^t}-1)=1 \), and for any \( s\in T_l \), we have  $0 < r_{1s} < r_{2s} < p^t - r_{2(-s)}$. Then the Euclidean dual code \( \mathcal{C}^{\perp_E} \) is generated by    
    \[
\begin{aligned}
    \bigl( [\prod_{s\in T_l}M_s(x)^{p^t-{r_{1(-s)}}}], 
    [-&\overline{v}_2(x) \prod_{s\in T_l}M_s(x)^{p^t-{r_{1(-s)}}}] \bigr)\enspace \text{and}\\
     &\bigl( [-\overline{v}_1(x) \prod_{s\in T_l}M_s(x)^{p^t-{r_{2(-s)}}}], 
    [\prod_{s\in T_l}M_s(x)^{p^t-{r_{2(-s)}}}] \bigr).
\end{aligned}
\]
    Moreover, \( \mathcal{C}^{\perp_E} \subset \mathcal{C} \).
\end{lemma}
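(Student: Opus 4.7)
The plan is to first compute the Euclidean reciprocal polynomials $g_1^\perp(x)$ and $g_2^\perp(x)$ associated with $g_i(x):=\prod_{s\in T_l}M_s(x)^{r_{is}}$, then invoke Proposition~\ref{Edual} to read off the generators of $\mathcal{C}^{\perp_E}$, and finally verify the divisibility conditions of Theorem~\ref{Edual-containing} (Case A) to conclude the dual-containment.

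For the first step, I would exploit the factorization $x^{lp^t}-1=(x^l-1)^{p^t}=\prod_{s\in T_l}M_s(x)^{p^t}$, so that $f_i(x):=(x^{lp^t}-1)/\gcd(g_i(x),x^{lp^t}-1)=\prod_{s\in T_l}M_s(x)^{p^t-r_{is}}$. The key fact is that $s\mapsto -s$ (modulo $l$) defines a bijection on $T_l$, that $\deg M_s=\deg M_{-s}$, and that $M_s^*(x)$ coincides with $M_{-s}(x)$ up to a nonzero scalar in $\mathbb{F}_q$ (because the coset $C_{(-s,l)}$ is precisely the set of negatives of the elements of $C_{(s,l)}$). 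Taking the reciprocal $f_i^*(x)$, normalizing by $f_i(0)^{-1}$ as prescribed by Lemma~\ref{lemmakx}(1), and re-indexing the product via $s\mapsto -s$, then yields the monic form $g_i^\perp(x)=\prod_{s\in T_l}M_s(x)^{p^t-r_{i(-s)}}$. Substituting this into Proposition~\ref{Edual} immediately gives the generators of $\mathcal{C}^{\perp_E}$ stated in the lemma.

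For the containment $\mathcal{C}^{\perp_E}\subset\mathcal{C}$, I would verify the three conditions of Case A of Theorem~\ref{Edual-containing}. Under the hypothesis $0<r_{1s}<r_{2s}<p^t-r_{2(-s)}$, one obtains for every $s\in T_l$ the term-wise exponent inequalities $r_{1s}\leq p^t-r_{1(-s)}$, $r_{1s}\leq p^t-r_{2(-s)}$, and $r_{2s}\leq p^t-r_{2(-s)}$, equivalent respectively to the divisibilities $g_1(x)\mid g_1^\perp(x)$, $g_1(x)\mid g_2^\perp(x)$, and $g_2(x)\mid g_2^\perp(x)$. Each of these is stronger than (and hence forces) the corresponding condition (1.1), (1.2), (1.3) of Theorem~\ref{Edual-containing}, independently of the factors $1+\overline{v_2}v_2$, $\overline{v_1}+v_2$, and $1+\overline{v_1}v_1$. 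To promote containment to strict containment I would compare dimensions: summing $r_{2s}+r_{2(-s)}<p^t$ weighted by $\deg M_s$ and using the bijection $s\leftrightarrow -s$ together with $\deg M_s=\deg M_{-s}$ gives $\deg g_2<lp^t/2$, while $r_{1s}<r_{2s}$ yields $\deg g_1<lp^t/2$, so $\deg g_1+\deg g_2<lp^t$; consequently $\dim\mathcal{C}>lp^t>\dim\mathcal{C}^{\perp_E}$.

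The main obstacle is the bookkeeping around the involution $s\mapsto -s$ on $T_l$ and the correct identification of $M_s^*(x)$ with a scalar multiple of $M_{-s}(x)$. One must verify that this map is well defined on the chosen set of representatives, that it preserves $\deg M_s$, and that the scalar produced by reciprocating cancels against the normalization $f_i(0)^{-1}$ so that the clean monic form asserted in the lemma emerges. Once these combinatorial details are settled, both the description of the dual and the dual-containment reduce to direct applications of the earlier results.
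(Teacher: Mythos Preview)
Your proposal is correct and follows essentially the same route as the paper: compute $g_i^\perp(x)=\prod_{s\in T_l}M_s(x)^{p^t-r_{i(-s)}}$ via the identity $M_s^*(x)=\omega_sM_{-s}(x)$, apply Proposition~\ref{Edual} to obtain the generators of $\mathcal{C}^{\perp_E}$, and then use the exponent inequalities forced by $0<r_{1s}<r_{2s}<p^t-r_{2(-s)}$ to verify the divisibility hypotheses of Theorem~\ref{Edual-containing} (Case~A). Your dimension argument for strict containment is a welcome addition, since the paper asserts $\mathcal{C}^{\perp_E}\subset\mathcal{C}$ without separately justifying strictness.
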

\begin{proof}
    Denote  
    \[
    g_1(x) = \prod_{s\in T_l}M_s(x)^{r_{1s}}, \quad g_2(x) = \prod_{s\in T_l}M_s(x)^{r_{2s}}.
    \]  
    The reciprocal polynomial of \( M_s(x) \) satisfies  
    \begin{align*}
        M_s(x)^* &= x^{\deg(M_s(x))} M_s\left(\frac{1}{x}\right) = x^{\lvert C_{(s,l)} \rvert} \prod_{i\in C_{(s,l)}} \left(\frac{1}{x} - \alpha^i\right) \\
        &= \prod_{i\in C_{(s,l)}} (-\alpha^i)(x - \alpha^{-i}) = \omega_s \prod_{i\in C_{(-s,l)}} (x - \alpha^i) = \omega_s M_{-s}(x),
    \end{align*}
    where \( \omega_s \in \mathbb{F}_q \). Thus, we obtain  
    \begin{align*}
        g_{1}^\perp(x) &=(\prod_{s\in T_l}M_{s}(x)^{p^t-r_{1s}})^* =\omega_{g_1} \prod_{s\in T_l}M_{-s}(x)^{p^t-r_{1s}}
        = \omega_{g_1} \prod_{s\in T_l}M_s(x)^{p^t-{r_{1(-s)}}}, \\
        g_{2}^\perp(x) &=(\prod_{s\in T_l}M_{s}(x)^{p^t-r_{2s}})^*= \omega_{g_2} \prod_{s\in T_l}M_{-s}(x)^{p^t-r_{2s}}
        = \omega_{g_2} \prod_{s\in T_l}M_s(x)^{p^t-{r_{2(-s)}}},
    \end{align*}
    where \( \omega_{g_1}, \omega_{g_2} \in \mathbb{F}_q \). According to Proposition~\ref{Edual}, the Euclidean dual code \( \mathcal{C}^{\perp_E} \) of \( \mathcal{C} \) is generated by 
    $\bigl( \prod_{s\in T_l}M_s(x)^{p^t-{r_{1(-s)}}}, 
    -\overline{v}_2(x) \prod_{s\in T_l}M_s(x)^{p^t-{r_{1(-s)}}} \bigr)$
     and $\bigl( -\overline{v}_1(x) \prod_{s\in T_l}M_s(x)^{p^t-{r_{2(-s)}}}, 
    \prod_{s\in T_l}M_s(x)^{p^t-{r_{2(-s)}}} \bigr)$. Furthermore, as a result of the constraint \( 0 < r_{1s} < r_{2s} < p^t - r_{2(-s)} \), the polynomials \( g_1(x), g_2(x), v_1(x), v_2(x) \) satisfy the divisibility conditions stated in Theorem~\ref{Edual-containing}, implying that \( \mathcal{C}^{\perp_E} \subset \mathcal{C} \). This completes the proof.
    
\end{proof}
\begin{theorem}\label{qsc2}
    Let \( \mathcal{C}_1 \) be the 2-generator  \([2lp^{t},k_1,d_1]\) QC code generated by  
    \[
    \bigl( [\prod_{s\in T_l}M_s(x)^{r_{1s}}], [v_1(x) \prod_{s\in T_l}M_s(x)^{r_{1s}}] \bigr)\enspace \text{and}\enspace
    \bigl( [v_2(x) \prod_{s\in T_l}M_s(x)^{r_{2s}}], [\prod_{s\in T_l}M_s(x)^{r_{2s}}] \bigr),
    \]
    where \( \gcd(v_1(x)v_2(x)-1, x^{lp^t}-1)=1 \) and for any \( s\in T_l \), we have $0 < r_{1s} < r_{2s} < p^t - r_{2(-s)}$. Let \( \mathcal{C}_2 \) be the 2-generator \([2lp^t,k_2,d_2]\) QC code generated by  
    \[
    \bigl( [\prod_{s\in T_l}M_s(x)^{j_{1s}}], [v_1(x) \prod_{s\in T_l}M_s(x)^{j_{1s}}] \bigr)\enspace \text{and}\enspace
    \bigl( [v_2(x) \prod_{s\in T_l}M_s(x)^{j_{2s}}], [\prod_{s\in T_l}M_s(x)^{j_{2s}}] \bigr),
    \]
    where \( 0<j_{1s}<r_{1s} \), \( 0<j_{1s}<j_{2s}<r_{2s} \). If there exists an integer \( s^{\prime}\in T_l \) with \( \gcd(s^{\prime},l)=1 \) satisfying either  
    \[
    {r_{1s^\prime}}-{j_{1s^\prime}}>p^{t-1},  
    \quad
    \text{or}  
    \quad
    {r_{1s^\prime}}-{j_{1s^\prime}}>0 \quad \text{and} \quad {r_{1s^{\prime\prime}}}-{j_{1s^{\prime\prime}}}>p^{t-1}  
    \]
    for some \( s^{\prime\prime} \neq s^{\prime} \in T_l \), then for any non-negative integers \( a_l, a_r \) satisfying \( a_l+a_r<lp^{t} \), there exists a quantum synchronizable code with parameters  
    \[
    (a_l,a_r)-[[2lp^t+a_l+a_r,2lp^t-2\sum_{s\in T_l}(r_{1s}+r_{2s})\cdot\lvert C_{(s,l)}\rvert]]_q,
    \]
    which corrects at least up to $\lfloor\frac{d_1 - 1}{2}\rfloor$ phase errors and at least   up to $\lfloor\frac{d_2 - 1}{2}\rfloor$ bit errors.
\end{theorem}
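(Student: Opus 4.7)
The plan is to apply the QSC recipe of Lemma~\ref{qc construction} to the nested pair $\mathcal{C}_1\subset\mathcal{C}_2$. This requires verifying (a) the dual containment $\mathcal{C}_1^{\perp_E}\subset\mathcal{C}_1$, (b) the nesting $\mathcal{C}_1\subset\mathcal{C}_2$, and (c) the identification of the gcd polynomials $\eta(x)$ and $f(x)$ demanded by Lemma~\ref{qc construction}, together with a proof that $\operatorname{ord}(f(x))=lp^t$, the maximum tolerance claimed in the conclusion.

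Parts (a) and (b) are routine. Condition (a) is exactly Lemma~\ref{lemma2}, since the hypotheses $0<r_{1s}<r_{2s}<p^t-r_{2(-s)}$ and $\gcd(v_1(x)v_2(x)-1,x^{lp^t}-1)=1$ match. For (b), write $g_i(x)=\prod_{s\in T_l}M_s(x)^{r_{is}}$ and $g'_i(x)=\prod_{s\in T_l}M_s(x)^{j_{is}}$; the factorization $g_i(x)=g'_i(x)\prod_{s\in T_l}M_s(x)^{r_{is}-j_{is}}$, valid since $j_{is}\le r_{is}$, exhibits each generator of $\mathcal{C}_1$ as a polynomial multiple of the corresponding generator of $\mathcal{C}_2$. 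For (c), I take $j=1$ in Lemma~\ref{qc construction}. Because $r_{1s}<r_{2s}$ forces $g_1(x)\mid g_2(x)$, the gcd in the first coordinate is $\eta(x)=\gcd(g_1(x),v_2(x)g_2(x))=g_1(x)$; the analogous computation for $\mathcal{C}_2$ gives $g'_1(x)$. Hence $f(x)=\eta(x)/g'_1(x)=\prod_{s\in T_l}M_s(x)^{r_{1s}-j_{1s}}$.

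The main obstacle is then pinning down $\operatorname{ord}(f(x))$ and showing it equals $lp^t$ exactly under the disjunctive hypothesis. Set $e_s:=r_{1s}-j_{1s}$, and parametrize any candidate $\tau=l'p^{t'}$ with $\gcd(l',p)=1$, so that $x^\tau-1=(x^{l'}-1)^{p^{t'}}$ in characteristic $p$. Since $\alpha^s$ has order $l/\gcd(s,l)$, the factor $M_s(x)^{e_s}$ divides $(x^{l'}-1)^{p^{t'}}$ if and only if $l/\gcd(s,l)\mid l'$ and $e_s\le p^{t'}$. Pairwise coprimality of the $M_s(x)$ then yields
\[
\operatorname{ord}(f(x))=\operatorname{lcm}\{\,l/\gcd(s,l):e_s>0\,\}\cdot p^{\lceil\log_p\max_s e_s\rceil}.
\]
The first factor equals $l$ precisely when some $s'\in T_l$ with $\gcd(s',l)=1$ has $e_{s'}>0$, and the second factor equals $p^t$ precisely when $p^{t-1}<\max_s e_s$, the upper bound $\max_s e_s\le p^t$ being automatic from $r_{1s}<p^t$. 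The two disjunctive hypotheses in the theorem are exactly the two ways of securing both requirements at once: either a single $s'$ with $\gcd(s',l)=1$ carries $e_{s'}>p^{t-1}$, or one $s'$ supplies primitivity and a distinct $s''$ supplies the large exponent.

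Once $\operatorname{ord}(f(x))=lp^t$ is established, Lemma~\ref{qc construction} yields a QSC of length $2lp^t+a_l+a_r$ for every $a_l+a_r<lp^t$, with the stated error-correction performance. The dimension is $2(k_1-lp^t)$, and substituting $k_1=2lp^t-\deg g_1-\deg g_2=2lp^t-\sum_{s\in T_l}(r_{1s}+r_{2s})\lvert C_{(s,l)}\rvert$ recovers the formula in the statement.
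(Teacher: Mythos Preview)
Your proposal is correct and follows essentially the same route as the paper: invoke Lemma~\ref{lemma2} for dual containment, exhibit $\mathcal{C}_1\subset\mathcal{C}_2$ via the factorization $g_i=f_i g'_i$, identify $\eta(x)=g_1(x)$ and $f(x)=\prod_s M_s(x)^{r_{1s}-j_{1s}}$ in the first coordinate, and then compute $\operatorname{ord}(f(x))=lp^t$ before applying Lemma~\ref{qc construction}. Your treatment of the order computation is in fact more explicit than the paper's (which merely asserts the conclusion); one small caveat is that your ``precisely when'' for the $\operatorname{lcm}$ factor equalling $l$ should read ``if'', since for composite $l$ the lcm of proper divisors can still be $l$, but only the sufficiency direction is needed here.
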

\begin{proof}
    Since \( 0<r_{1s}<r_{2s}< p^t-r_{2(-s)} \), it follows from Lemma~\ref{lemma2} that \( \mathcal{C}_1^{\perp_E}\subset \mathcal{C}_1 \).  
    Denote  
    \[
    f_1(x)=\prod_{s\in T_l}M_s(x)^{{r_{1s}}-{j_{1s}}},  
    \quad  
    f_2(x)=\prod_{s\in T_l}M_s(x)^{{r_{2s}}-{j_{2s}}}.
    \]
    Then, 
    \[
\begin{aligned}
   & \begin{pmatrix}
        [\prod_{s\in T_l}M_s(x)^{r_{1s}}] & [v_1(x)\prod_{s\in T_l}M_s(x)^{r_{1s}}]\\
        [v_2(x)\prod_{s\in T_l}M_s(x)^{r_{2s}}] & [\prod_{s\in T_l}M_s(x)^{r_{2s}}]
    \end{pmatrix} \\
    &=
    \begin{pmatrix}
        [f_1(x)] & \mathbf{0}\\
        \mathbf{0} & [f_2(x)]
    \end{pmatrix}
    \begin{pmatrix}
        [\prod_{s\in T_l}M_s(x)^{j_{1s}}] & [v_1(x)\prod_{s\in T_l}M_s(x)^{j_{1s}}]\\
        [v_2(x)\prod_{s\in T_l}M_s(x)^{j_{2s}}] & [\prod_{s\in T_l}M_s(x)^{j_{2s}}]
    \end{pmatrix}.
\end{aligned}
\]
    Therefore, we obtain that \( \mathcal{C}_1\subset \mathcal{C}_2 \).  
    In fact, it follows that 
$$\eta(x)=\gcd\bigl(\prod_{s\in T_l}M_s(x)^{r_{1s}}, v_2(x)\prod_{s\in T_l}M_s(x)^{r_{2s}}\bigr) = \prod_{s\in T_l}M_s(x)^{r_{1s}}.$$
    Moreover,  
    \[
    \prod_{s\in T_l}M_s(x)^{r_{1s}} = \prod_{s\in T_l}M_s(x)^{r_{1s}-j_{1s}} \prod_{s\in T_l}M_s(x)^{j_{1s}},
    \]
    the dimensions of \( \mathcal{C}_1 \) and \( \mathcal{C}_2 \) are given by  
    \[
    k_1 = 2lp^t - \sum_{s\in T_l} (r_{1s}+r_{2s}) \cdot \mid C_{s,l}\mid,
    k_2 = 2lp^t - \sum_{s\in T_l} (j_{1s}+j_{2s}) \cdot\mid C_{s,l}\mid.
    \]
    Since for any \( i\in T_l \) with \( \gcd(i,l)=1 \), we have $\text{ord}(M_i(x))=\text{ord}(\alpha^i)=\frac{l}{\gcd(i,l)}=l$, it follows that the order of \( \prod_{s\in T_l}M_s(x)^{r_{1s}-j_{1s}} \) is \( lp^t \) due to the given conditions on \( r_{1s} \) and \( j_{1s}\).  
    Finally, by Lemma~\ref{qc construction}, there exists a quantum synchronizable code with parameters  
    \[
    (a_l,a_r)-[[2lp^t+a_l+a_r,2lp^t-2\sum_{s\in T_l}(r_{1s}+r_{2s})\cdot\lvert C_{(s,l)}\rvert]]_q,
    \]
    where \( a_l+a_r<lp^t \), which completes the proof.
\end{proof}
\begin{example}
    Let $q = p = 3$, $t = 3$, $l = 3$, and define $v_1(x) = x^3 + x^2 + 2x + 1$, $v_2(x) = x^5 + 2x^4 + x^3 + 2x^2 + 2x + 2$. The code $\mathcal{C}_1$ is generated by $\bigl( [ (x - 1)^5 (x^2 + x + 1)^{12} ], [ v_1(x)(x - 1)^5 (x^2 + x + 1)^{12} ] \bigr)$ and $\bigl( [v_2(x)(x - 1)^{11} (x^2 + x + 1)^{13} ], [ (x - 1)^{11} (x^2 + x + 1)^{13} ] \bigr)$, while $\mathcal{C}_2$ is generated by $\bigl( [ (x - 1)^2 (x^2 + x + 1)^{2} ], [ v_1(x)(x - 1)^2 (x^2 + x + 1)^{2} ] \bigr)$ and $\bigl( [ v_2(x)(x - 1)^{7} (x^2 + x + 1)^{10} ], [ (x - 1)^{7} (x^2 + x + 1)^{10} ] \bigr)$. By Theorem~\ref{qsc2}, we obtain a quantum synchronizable code with parameters $(a_l, a_r)-[[162 + a_l + a_r, 30]]_3$, where $a_l + a_r < 81$. 
\end{example}

\begin{example}
    Let $q = p = 7$, $t = 2$, $l = 5$, and define $v_1(x) = x^3$, $v_2(x) = x^5 + x^2 + 5x + 4$. The code $\mathcal{C}_1$ is generated by $\bigl( [ (x - 1)^9 (x^4 + x^3 + x^2 + x + 1
)^{13} ], [ v_1(x)(x - 1)^{9} (x^4 + x^3 + x^2 + x + 1)^{13} ] \bigr)$ and $\bigl( [ v_2(x)(x - 1)^{14} (x^4 + x^3 + x^2 + x + 1)^{16} ], [ (x - 1)^{14} (x^4 + x^3 + x^2 + x + 1)^{16} ] \bigr)$, while $\mathcal{C}_2$ is generated by $\bigl( [ (x - 1)^7 (x^4 + x^3 + x^2 + x + 1)^{3} ], [ v_1(x)(x - 1)^7 (x^4 + x^3 + x^2 + x + 1)^{3} ] \bigr)$ and $\bigl( [ v_2(x)(x - 1)^{11} (x^4 + x^3 + x^2 + x + 1)^{12} ], [ (x - 1)^{11} (x^4 + x^3 + x^2 + x + 1)^{12} ] \bigr)$. By Theorem~\ref{qsc2}, we obtain a quantum synchronizable code with parameters $(a_l, a_r)-[[490 + a_l + a_r, 444]]_7$, where $a_l + a_r < 245$. 
\end{example}

\section{Conclusion}\label{sec7}

In this paper, we first derived the necessary and sufficient conditions for the self-orthogonality of QC codes under Euclidean, symplectic, and Hermitian inner products. Building on this, we applied these conditions to a special class of 2-generator QC codes to characterize their self-orthogonal and dual-containing properties. 
 Furthermore, we applied these results to construct quantum stabilizer codes and quantum synchronizable codes with good parameters. In future work, we plan to further investigate the conditions for quasi-twisted (QT) codes to be self-orthogonal via their generating sets. Moreover, we aim to apply the construction method proposed by Ezerman et al. in~\cite{Grassl2025} to obtain quantum error-correcting codes with larger minimum distance.

\bmhead{Acknowledgements}
We would like to express our sincere gratitude to Professor Markus Grassl for his valuable comments and constructive suggestions, which have greatly contributed to improving the quality of this work.

\section*{Declarations}


\begin{itemize}
\item \textbf{Funding}
This work was supported by Shandong Provincial Natural Science Foundation of China (ZR2023LLZ013, ZR2022MA061), Fundamental Research Funds for the Central Universities (23CX03003A), National Natural Science Foundation of China (61902429),  and China Education Innovation Research Fund (2022BL027).
\item \textbf{Conflict of interest} 
The authors declare no conflict of interest.
\item \textbf{Ethics approval and consent to participate}
Not applicable.
\item \textbf{Consent for publication}
Not applicable.
\item \textbf{Data availability} 
Not applicable.
\item \textbf{Materials availability}
Not applicable.
\item \textbf{Code availability} 
Not applicable.
\item \textbf{Author contribution}
Authors Mengying Gao and Yuhua Sun were mainly responsible for the theoretical analysis and experimental design, while authors Tongjiang Yan and Chun’e
Zhao contributed primarily to manuscript writing and table preparation.

\end{itemize}

\noindent

\end{document}